\definecolor{ForestGreen}{rgb}{0.1333,0.5451,0.1333}
\newcommand{\showccc}[0]{0}
\newcommand{\ccc}[2][nothing]{%
	\ifthenelse{\showccc=0}{}{
		\ensuremath{^{\Lsh\Rsh}}\marginpar{\raggedright\tiny\textsf{%
				\ifthenelse{\equal{#1}{nothing}}{}{\textbf{#1}\\}#2}}}}
\newcounter{hours}\newcounter{minutes}
\newcommand{\hhmm}{%
	\setcounter{hours}{\time/60}%
	\setcounter{minutes}{\time-\value{hours}*60}%
	\ifthenelse{\value{hours}<10}{0}{}\thehours:%
	\ifthenelse{\value{minutes}<10}{0}{}\theminutes}
\newtheorem{theorem}{Theorem}
\newtheorem{corollary}{Corollary}
\newtheorem{remark}{Remark}
\newtheorem{lemma}{Lemma}
\newtheorem{fact}{Fact}
\theoremstyle{definition}
\newtheorem{proposition}{Proposition}
\newtheorem{assumption}{Assumption}
\newtheorem{definition}{Definition}
\newtheorem{model}{Model}
\newcommand{\defeq}{:=}
\newcommand{\norm}[1]{\left\lVert#1\right\rVert}
\newcommand{\inprod}[2]{\left\langle#1, #2\right\rangle}
\newcommand{\eps}{\epsilon}
\newcommand{\lam}{\lambda}
\newcommand{\argmin}{\textup{argmin}} 
\newcommand{\R}{\mathbb{R}}
\newcommand{\N}{\mathbb{N}}
\newcommand{\diag}[1]{\textbf{\textup{diag}}\left(#1\right)}
\newcommand{\half}{\frac{1}{2}}
\newcommand{\one}{\mathbbm{1}}
\DeclareMathOperator*{\E}{\mathbb{E}}
\newcommand{\Nor}{\mathcal{N}}
\newcommand{\Tr}{\textup{Tr}}
\newcommand{\ma}{\mathbf{A}}
\newcommand{\mb}{\mathbf{B}}
\newcommand{\mm}{\mathbf{M}}
\newcommand{\my}{\mathbf{Y}}
\newcommand{\id}{\mathbf{I}}
\newcommand{\tO}{\widetilde{O}}
\newcommand{\lmax}{\lambda_{\textup{max}}}
\newcommand{\lmin}{\lambda_{\textup{min}}}
\newcommand{\Par}[1]{\left(#1\right)}
\newcommand{\Brack}[1]{\left[#1\right]}
\newcommand{\Brace}[1]{\left\{#1\right\}}
\newcommand{\Abs}[1]{\left|#1\right|}
\newcommand{\msig}{\boldsymbol{\Sigma}}
\newcommand{\mzero}{\mathbf{0}}
\newcommand{\prox}{\textup{Prox}}
\newcommand{\set}{\mathcal{S}}
\newcommand{\Sym}{\mathbb{S}}
\newcommand{\ctf}{C_{2\to 4}}
\newcommand{\cest}{C_{\textup{est}}}
\newcommand{\cid}{C_{\textup{id}}}
\newcommand{\cub}{C_{\textup{ub}}}
\newcommand{\clp}{C_{\textup{lp}}}
\newcommand{\normop}[1]{\left\lVert#1\right\rVert_{\textup{op}}}
\newcommand{\Cov}{\textup{Cov}}
\newcommand{\bX}{\bar{X}}
\newcommand{\dist}{\mathcal{D}}
\newcommand{\simunif}{\sim_{\textup{unif}}}
\newcommand{\mx}{\mathbf{X}}
\newcommand{\msigs}{\boldsymbol{\Sigma}^\star}
\newcommand{\st}{\theta^\star}
\newcommand{\streg}{\theta^\star_{\textup{reg}}}
\newcommand{\stenv}{\theta^\star_{\textup{env}}}
\newcommand{\hatt}{\hat{\theta}}
\newcommand{\tw}{\tilde{w}}
\newcommand{\FCF}{\mathsf{FastCovFilter}}
\newcommand{\HalfRadius}{\mathsf{HalfRadiusLinReg}}
\newcommand{\HalfRadiusAccel}{\mathsf{HalfRadiusAccel}}
\newcommand{\LastPhase}{\mathsf{LastPhase}}
\newcommand{\Power}{\mathsf{Power}}
\newcommand{\mv}{\mathbf{V}}
\newcommand{\Ndir}{N_{\textup{dir}}}
\newcommand{\tu}{\tilde{u}}
\newcommand{\bw}{\bar{w}}
\newcommand{\bt}{\bar{\theta}}
\newcommand{\sw}{w^{\star}}
\newcommand{\swt}{\sw_{\theta}}
\newcommand{\couple}{\mathcal{C}}
\newcommand{\oracle}{\mathcal{O}}
\newcommand{\oerm}{\oracle_{\textup{ERM}}}
\newcommand{\ffilter}{\mathsf{FunctionFilter}}
\newcommand{\FastReg}{\mathsf{FastRegression}}
\newcommand{\ong}{\oracle_{\textup{ng}}}
\newcommand{\cng}{C_{\textup{ng}}}
\newcommand{\stf}{\st_F}
\newcommand{\ball}{\mathbb{B}}
\newcommand{\cpot}{C_{\textup{pot}}}
\newcommand{\RobustAccel}{\mathsf{RobustAccel}}
\newcommand{\Fsg}{F^\star_\lam}
\newcommand{\Prox}{\textup{prox}_{\lam, F^\star}}
\newcommand{\Moreau}{\mathsf{ApproxMoreauMinimizer}}
\newcommand{\iprod}[1]{\left\langle #1 \right\rangle}
\newcommand{\supp}{\mathrm{supp}}
\begin{document}

	\begin{titlepage}
		\def\thepage{}
		\thispagestyle{empty}
		
		\title{Robust Regression Revisited: \\ Acceleration and Improved Estimation Rates} 
		
		\date{}
		\author{
			Arun Jambulapati\thanks{Stanford University, {\tt jmblpati@stanford.edu}}
			\and
			Jerry Li\thanks{Microsoft Research, {\tt jerrl@microsoft.com}}
			\and
			Tselil Schramm\thanks{Stanford University, {\tt tselil@stanford.edu}}
			\and
			Kevin Tian\thanks{Stanford University, {\tt kjtian@stanford.edu}}
		}
		\maketitle

\abstract{
We study fast algorithms for statistical regression problems under the strong contamination model, where the goal is to approximately optimize a generalized linear model (GLM) given adversarially corrupted samples. Prior works in this line of research were based on the \emph{robust gradient descent} framework of \cite{PrasadSBR20}, a first-order method using biased gradient queries, or the \emph{Sever} framework of \cite{DiakonikolasKK019}, an iterative outlier-removal method calling a stationary point finder. 

We present nearly-linear time algorithms for robust regression problems with improved runtime or estimation guarantees compared to the state-of-the-art. For the general case of smooth GLMs (e.g.\ logistic regression), we show that the robust gradient descent framework of \cite{PrasadSBR20} can be \emph{accelerated}, and show our algorithm extends to optimizing the Moreau envelopes of Lipschitz GLMs (e.g.\ support vector machines), answering several open questions in the literature. 

For the well-studied case of robust linear regression, we present an alternative approach obtaining improved estimation rates over prior nearly-linear time algorithms. Interestingly, our method starts with an identifiability proof introduced in the context of the sum-of-squares algorithm of \cite{BakshiP21}, which achieved optimal error rates while requiring large polynomial runtime and sample complexity. We reinterpret their proof within the Sever framework and obtain a dramatically faster and more sample-efficient algorithm under fewer distributional assumptions.
}
 		
	\end{titlepage}
\pagenumbering{gobble}
\newpage
\setcounter{tocdepth}{2}
{
	\hypersetup{linkcolor=black}
	\tableofcontents
}
\newpage
\pagenumbering{arabic}

\section{Introduction}
\label{sec:intro}

Parameter estimation in generalized linear models, such as linear and logistic regression problems, is among the most fundamental and well-studied statistical optimization problems.
It serves as the primary workhorse in statistical studies arising from a variety of disciplines, ranging from economics \cite{Smith12}, biology \cite{VittinghoffGSM05}, and the social sciences \cite{Gordon10}. Formally, given a \emph{link function} $\gamma: \R^2 \to \R$ and a dataset of covariates and labels $\{(X_i, y_i)\}_{i \in [n]} \subset \R^d \times \R$ drawn from an underlying distribution $\dist_{Xy}$, the problem of statistical (generalized linear) regression asks to 
\begin{equation}\label{eq:regression} \text{estimate } \st \defeq \argmin_{\theta \in \R^d}\Brace{\E_{(X, y) \sim \dist_{Xy}} \Brack{\gamma(\inprod{\theta}{X}, y)}}.\end{equation}
For example, when $\gamma(v, y) = \half (v - y)^2$, \eqref{eq:regression} corresponds to (statistical) linear regression. The problem \eqref{eq:regression} also has an interpretation as computing a maximum likelihood estimate for a parameterized distributional model for data generation, and indeed is only tractable under certain distributional assumptions, since we only have access to samples from $\dist_{Xy}$ rather than the underlying distribution itself (see e.g.\ \cite{BakshiP21} for tractability results in the linear regression setting).

However, in modern settings, these strong distributional assumptions may fail to hold. In practically relevant settings, regression is often performed on massive datasets, where the data comes from a poorly-understood distribution and has not been thoroughly vetted or cleaned of outliers.
This has prompted the study of highly robust regression.
In this work, we study the problem of regression \eqref{eq:regression} in the \emph{strong contamination model}. 
In this model, we assume the data points we receive are independently drawn from $\dist_{Xy}$, but that an arbitrary $\eps$-fraction of the samples are then adversarially {\em contaminated} or replaced. 
The strong contamination model has recently drawn interest in the algorithmic statistics and learning communities for several reasons. Firstly, it is a \emph{flexible} model of corruption and can be used to study both truly adversarial data poisoning attacks (where e.g.\ part of the dataset is sourced from malicious respondents), as well as model misspecification, where the generative $\dist_{Xy}$ does not exactly satisfy our distributional assumptions, but is close in total variation to a distribution that does. Furthermore, a line of work building upon \cite{DiakonikolasKK016, LaiRV16} (discussed in our survey of prior work in Section~\ref{ssec:prior}) has achieved remarkable positive results for mean estimation and related problems under strong contamination, with statistical guarantees scaling independently of the dimension $d$. This dimension-free error promise is important in modern high-dimensional settings.

\subsection{Our results}\label{ssec:results}

We give multiple \emph{nearly-linear time algorithms}\footnote{Throughout, we reserve the description ``nearly-linear'' for runtimes scaling linearly in the dataset size $nd$, and polynomially in $\eps^{-1}$ and the condition number, up to a polylogarithmic overhead in problem parameters.} for problem \eqref{eq:regression} under the strong contamination model, with improved statistical or runtime guarantees compared to the state-of-the-art. Prior algorithms for \eqref{eq:regression} under the strong contamination model typically followed one of two frameworks. The first, which we refer to as \emph{robust gradient descent}, was pioneered by \cite{PrasadSBR20}, and is based on reframing \eqref{eq:regression} as a problem where we have noisy gradient access to an unknown function we wish to optimize, coupled with the design of a noisy gradient oracle based on a robust mean estimation primitive. The second, which we refer to as \emph{Sever}, originated in work of \cite{DiakonikolasKK019}, and uses the guarantees of stationary point finders such as stochastic gradient descent to repeatedly perform outlier removal. In this work, we show that both approaches can be sped up dramatically, and give two complementary types of algorithms within these frameworks.

\paragraph{Robust acceleration.} Our first contribution is to demonstrate that within the noisy gradient estimation framework for minimizing well-conditioned regression problems of the form \eqref{eq:regression}, an \emph{accelerated} rate of optimization can be achieved, answering an open question asked by \cite{PrasadSBR20}. We demonstrate the following result for smooth statistical regression problems, where we assume the uncorrupted data is drawn from $\dist_{Xy}$ with marginals $\dist_X$ and $\dist_y$, $\dist_X$ has support in $\R^d$, and $\tO$ hides polylogarithmic factors in problem parameters (cf.\ Section~\ref{ssec:notation} for technical definitions).

\begin{theorem}[informal, see Theorem~\ref{thm:accelsmoothopt}]\label{thm:infsmooth}
Suppose $\gamma: \R^2 \times \R$ is such that $\gamma_y(v) \defeq \gamma(v, y)$ is convex and has (absolute) first and second derivatives at most $1$ for all $y$ in the support of $\dist_{y}$, and $\dist_X$ has second moment matrix $\msigs \preceq L\cdot \id$. For some $\mu \ge 0$, let $\kappa = \max(1, \frac L \mu)$ and let
\[\streg \defeq \argmin_{\theta \in \R^d} \Brace{\E_{(X, y) \sim \dist_{Xy}}\Brace{\gamma\Par{\inprod{\theta}{X}, y}} + \frac \mu 2 \norm{\theta}_2^2}\]
be the solution to the true regularized statistical regression problem.\footnote{To simplify our bounds and avoid estimation error for non-strongly convex statistical regression problems scaling with the initial search radius (which may be dimension-dependent), we focus on regularized regression problems. There is a substantial line of work on reductions between rates for strongly convex and convex smooth optimization in the non-robust setting, see e.g.\ \cite{ZhuH16}, and we defer an analogous exploration in the robust setting to future work.} There is an algorithm that given $n \defeq \tO(\frac {d} \eps)$ $\eps$-corrupted samples from $\dist_{Xy}$, for $\eps \kappa^2$ at most an absolute constant, runs in time $\tO(nd\sqrt{\kappa})$ and obtains $\theta$ with $\norm{\theta - \streg}_2 = O\Par{\sqrt{\frac{\kappa\eps}{\mu}}}$ with probability at least $1 - \delta$.
\end{theorem}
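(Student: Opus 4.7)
The plan is to combine a fast noisy gradient oracle built from a nearly-linear-time robust mean estimator with an accelerated optimization scheme organized as a sequence of halving-radius phases, so that gradient noise does not compound across the $\sqrt{\kappa}$ accelerated iterations. The target rate $\norm{\theta - \streg}_2 = O(\sqrt{\kappa\eps/\mu})$ is precisely $\xi/\mu$ for oracle error $\xi = O(\sqrt{\eps L})$, so the work is to deliver such an oracle quickly and to feed it into an accelerated method without amplifying $\xi$.

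First I would construct the noisy gradient oracle. At any query $\bar{\theta}$, form the per-sample score $g_i(\bar{\theta}) \defeq \gamma_{y_i}'(\iprod{\bar{\theta}, X_i}) X_i + \mu\bar{\theta}$. These are unbiased for $\nabla F(\bar{\theta})$ on the clean data, where $F$ is the regularized population objective; since $|\gamma_{y_i}'|\le 1$ pointwise and $\E[XX^\top] = \msigs \preceq L\id$, the clean-sample covariance is dominated by $L\id$ uniformly in $\bar{\theta}$. Running $\FCF$ on $\{g_i(\bar{\theta})\}$ then outputs $\tilde{g}(\bar{\theta})$ with $\norm{\tilde{g}(\bar{\theta}) - \nabla F(\bar{\theta})}_2 = O(\sqrt{\eps L})$ with high probability, using $n = \tO(d/\eps)$ samples and $\tO(nd)$ time per call. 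A union bound over a sufficiently fine $\eps$-net of the active search region (of logarithmic metric entropy given the radius bound on $\streg$) makes the oracle simultaneously accurate at every query point the outer loop issues, preserving the $\tO(d/\eps)$ sample complexity.

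Next I would organize the outer loop as $\tO(\log(R_0/r_\star))$ halving phases, where each phase shrinks the current search-ball radius by a factor of two using $\tO(\sqrt{\kappa})$ steps of Nesterov-style accelerated gradient descent against $\tilde{g}$. Within a single phase, standard inexact-acceleration analysis gives that the end-of-phase distance to $\streg$ is bounded by half the starting radius plus a noise floor proportional to $\xi/\mu$; the key point is that \emph{restarting the accelerated potential at the beginning of each phase} ensures the accumulated noise in one phase is controlled by that phase's starting radius rather than by $R_0$, so the noise does not compound across phases. Halving terminates when the radius reaches $r_\star \defeq O(\xi/\mu) = O(\sqrt{\kappa\eps/\mu})$, which matches the claimed estimation rate. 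The initial radius $R_0 = O(\sqrt{\kappa/\mu})$ is provided by the first-order optimality condition applied to the regularized objective at $\theta = 0$.

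The main obstacle is controlling the noise amplification in the inexact accelerated inner loop: naive noisy Nesterov would blow up the oracle error by an extra $\sqrt{\kappa}$ factor, which is exactly what would violate the target rate. The halving-radius construction is the lever that avoids this, and verifying it requires a careful phase-wise potential argument: one must show that the Nesterov potential used in a phase depends only on that phase's initial distance and its $\tO(\sqrt{\kappa})$-step budget, so the noise contribution is $O(\sqrt{\kappa} \cdot \xi)$ inside the potential but only $O(\xi/\mu)$ in distance once converted via strong convexity. Assembling the pieces gives $\tO(\sqrt{\kappa})$ total accelerated queries (summed across phases, since the number of phases is logarithmic), each costing $\tO(nd)$ via $\FCF$, for a total runtime of $\tO(nd\sqrt{\kappa})$; the union bound over net points gives the $1-\delta$ success probability, and the sample complexity $\tO(d/\eps)$ is inherited from the robust mean estimation requirement at accuracy $O(\sqrt{\eps L})$.
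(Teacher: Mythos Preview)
Your high-level plan matches the paper's: a radiusless noisy gradient oracle built from $\FCF$ on per-sample gradients, fed into a halving-phase accelerated method with $O(\sqrt{\kappa})$ steps per phase and a freshly-reset potential. Two concrete gaps remain.

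First, your $\eps$-net argument fails: the metric entropy of a ball in $\R^d$ is $\Theta(d\log(R/\eps))$, not logarithmic, so union-bounding over a net would cost an extra factor of $d$ in the sample complexity and break the claimed $\tO(d/\eps)$ bound. The paper sidesteps this entirely because $|\gamma_y'|\le 1$ implies the clean-sample second moment of $g_i(\theta)=\gamma_{y_i}'(\langle\theta,X_i\rangle)X_i$ is dominated by that of $X_i$ \emph{uniformly in $\theta$}; hence a single dataset of size $\tO(d/\eps)$ certifies the $\FCF$ precondition at every $\theta$ (this is Assumption~\ref{assume:glmdet} via Proposition~\ref{prop:glmdetok}). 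One then only needs to union-bound over the $\tO(\sqrt{\kappa})$ randomized $\FCF$ calls (or boost via $O(\log(1/\delta))$ fresh batches as in Corollary~\ref{cor:smoothong}).

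Second, ``standard inexact-acceleration analysis'' is not enough for the inner loop as stated. The paper's $\HalfRadiusAccel$ explicitly projects both the primal iterate $\theta_t$ and the dual iterate $v_t$ onto the phase's ball of radius $R$ around $\bt$, which guarantees $\norm{v_t - \stf}_2 \le 2R$ throughout the phase. This bound is what allows the cross term $\langle \nabla F(y_t) - g_t, v_t - \stf\rangle$ in the potential calculation (Lemma~\ref{lem:accelpotential}) to be controlled by $2\rho R$ via Cauchy--Schwarz; summing over $O(\sqrt{\kappa})$ steps with $a_t=\Theta(t)$ then gives a total noise contribution of $O(R^2)$ provided $\eps\kappa^2=O(1)$ and $R$ is above the noise floor. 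Without the projection, $v_t$ can drift and the per-phase noise accumulation is no longer tied to the phase's starting radius, which is precisely the compounding you identified as the main obstacle.
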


A canonical example of a link function $\gamma$ satisfying the assumptions of Theorem~\ref{thm:infsmooth} is the logit function $\gamma(v, y) = \log(1 + \exp(-vy))$, when the labels $y$ are $\pm 1$. To contextualize Theorem~\ref{thm:infsmooth}, the earlier work \cite{PrasadSBR20} obtains a similar statistical guarantee in its setting, using $\tO(\kappa)$ calls to a \emph{noisy gradient oracle}, which they implement via a subroutine inspired by works on robust mean estimation. At the time of its initial dissemination, nearly-linear time robust mean estimation algorithms were not known; since then, \cite{CherapanamjeriATJFB20} showed that for the case of linear regression (see Theorem~\ref{thm:inflinaccel} for the formal setup, as the linear regression link function is not Lipschitz), the framework was amenable to mean estimation techniques of \cite{ChengDG19}, and gave an algorithm running in time $\tO(nd \kappa \eps^{-6})$. Theorem~\ref{thm:infsmooth} represents an improvement to these results on two fronts: we apply tools from the mean-estimation algorithm of \cite{DongH019} to remove the $\text{poly}(\eps^{-1})$ runtime dependence for a general class of regression problems, and we achieve an iteration count of $\tO(\sqrt{\kappa})$, matching the accelerated gradient descent runtime of \cite{Nesterov83} for smooth optimization in the non-robust setting.

 We demonstrate the generality of our acceleration framework by demonstrating that it applies to optimizing the \emph{Moreau envelope} for Lipschitz, but possibly non-smooth, link functions $\gamma$; a canonical example of such a function is the hinge loss $\gamma(v, y) = \max(0, 1 - vy)$ with $\pm 1$ labels, used in training support vector machines. The Moreau envelope is a well-studied smooth approximation to a non-smooth function which everywhere additively approximates the original function if it is Lipschitz (see e.g.\ \cite{Showalter97}), and in the non-robust setting many state-of-the-art rates for Lipschitz optimization are known to be attained by accelerated optimization of an appropriate Moreau envelope \cite{Thekumparampil020}. We show that even without explicit access to the Moreau envelope, we can obtain approximate minimizers to it through our robust acceleration framework. 

\begin{theorem}[informal, see Theorem~\ref{thm:accellipschitzopt}]\label{thm:inflipschitz}
Suppose $\gamma: \R^2 \times \R$ is such that $\gamma_y(v) \defeq \gamma(v, y)$ is convex and has (absolute) first derivative at most $1$ for all $y$ in the support of $\dist_{y}$, and $\dist_X$ has bounded second moment matrix. For some $\mu, \lam \ge 0$, let $\kappa = \max(1, \frac 1 {\lam\mu})$ and let 
\begin{gather*}\stenv \defeq \argmin_{\theta \in \R^d} \Brace{\Fsg(\theta) + \frac \mu 2 \norm{\theta}_2^2} \textup{, where } \Fsg(\theta) \defeq \inf_{\theta'}\Brace{F^\star(\theta') + \frac 1{2\lam}\norm{\theta - \theta'}_2^2}\\
 \textup{ is the Moreau envelope of } F^\star(\theta) \defeq \E_{(X, y) \sim \dist_{Xy}}\Brace{\gamma\Par{\inprod{\theta}{X}, y}}. \end{gather*}
There is an algorithm that given $n \defeq \tO(\frac d \eps)$  $\eps$-corrupted samples from $\dist_{Xy}$, for $\eps \kappa^2$ at most an absolute constant, runs in time $\tO(\frac{nd\sqrt{\kappa}}{\eps})$ and obtains $\theta$ with $\norm{\theta - \streg}_2 = O\Par{\sqrt{\frac{\kappa\eps}{\mu}}}$ with probability at least $1 - \delta$.
\end{theorem}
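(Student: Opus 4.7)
The plan is to reduce Theorem~\ref{thm:inflipschitz} to the smooth result Theorem~\ref{thm:infsmooth} by running an accelerated robust first-order method not on the non-smooth objective $F^\star$, but on its Moreau envelope $\Fsg$. The Moreau envelope of any convex function is automatically $\frac{1}{\lam}$-smooth, so $\Fsg(\theta) + \frac\mu2\norm{\theta}_2^2$ is $\mu$-strongly convex and $(\mu + \frac{1}{\lam})$-smooth, giving an outer condition number $\kappa = O(\max(1, \frac{1}{\lam\mu}))$ that matches the statement. Moreover, standard Moreau-envelope estimates show that the minimizer $\stenv$ of this regularized envelope is within $O(\sqrt{\lam})$ in $\ell_2$ of the true regularized minimizer, so (for $\lam$ chosen as a small power of $\eps$) recovering $\stenv$ to error $O(\sqrt{\kappa\eps/\mu})$ recovers the true target to the same accuracy.

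The engine is the robust accelerated framework underlying Theorem~\ref{thm:infsmooth}, applied with a ``noisy gradient oracle'' for $\Fsg$ in place of a noisy gradient oracle for $F^\star$. The key identity is
\[
\nabla \Fsg(\theta) \;=\; \frac{1}{\lam}\Par{\theta - \Prox(\theta)},
\]
where $\Prox(\theta) = \argmin_{\theta'}\{F^\star(\theta') + \frac{1}{2\lam}\norm{\theta-\theta'}_2^2\}$. Thus implementing a noisy gradient oracle $\ong$ for $\Fsg$ reduces to implementing an approximate proximal oracle $\onp$ (the $\Moreau$ routine in the paper's notation) that, given $\theta$, returns an approximate proximal point. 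The outer loop is then exactly the $\RobustAccel$ procedure one uses to prove Theorem~\ref{thm:infsmooth}, just with $\ong$ replaced by the proximal-based approximation; this contributes the $\sqrt{\kappa}$ factor to the iteration count.

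To implement $\onp$, note that the proximal subproblem is $\frac{1}{\lam}$-strongly convex, and its (sub)gradient is the sum of a simple linear term with $\nabla F^\star(\theta')$. Since each $\gamma_y$ is $1$-Lipschitz, per-sample subgradients $\gamma_y'(\iprod{\theta'}{X})X$ are bounded in the sense controlled by the second moment of $\dist_X$, so robust mean estimation (via the $\FCF$ primitive already used for Theorem~\ref{thm:infsmooth}) provides an unbiased-up-to-$O(\sqrt{\eps})$ noisy subgradient oracle for $F^\star$ at any query point. A robust (sub)gradient method, exploiting the $\frac{1}{\lam}$-strong convexity of the proximal subproblem, then produces an $\eps_{\textup{in}}$-accurate proximal minimizer in $\tO(\frac{1}{\lam \eps_{\textup{in}}})$ inner iterations, each costing a single call to the robust mean estimator (so $\tO(nd)$ time per inner step).

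Composing the two loops and invoking an inexact accelerated gradient analysis (in the style of Schmidt--Le Roux--Bach), it suffices to set $\eps_{\textup{in}}$ to a fixed polynomial in $(\eps, \lam, \mu)$ so that the per-iteration gradient error is absorbed into the target statistical error $O(\sqrt{\kappa\eps/\mu})$; this produces the advertised runtime $\tO(nd\sqrt{\kappa}/\eps)$, where the extra $\eps^{-1}$ over the smooth case of Theorem~\ref{thm:infsmooth} is exactly the cost of solving each non-smooth proximal subproblem. The main obstacle I expect is the error propagation: inexact accelerated methods are delicate, and one must calibrate the inner proximal accuracy, the robust mean-estimation bias (which is itself $O(\sqrt{\eps})$ per call), and the outer iteration count $\sqrt{\kappa}$ so that biases do not accumulate beyond the statistical floor. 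A secondary technical point is verifying that the noisy proximal oracle $\onp$ satisfies precisely the interface that the acceleration analysis of Theorem~\ref{thm:infsmooth} requires of $\ong$ — i.e.\ that the error in the recovered proximal point translates to an error in $\nabla \Fsg$ of the form compatible with the guarantees of $\RobustAccel$.
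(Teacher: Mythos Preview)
Your proposal is essentially the paper's approach: construct a noisy gradient oracle for $\Fsg$ by approximately computing $\Prox(\theta)$ via robust projected (sub)gradient descent on the $\tfrac{1}{\lam}$-strongly convex proximal subproblem (this is the paper's $\Moreau$ subroutine, Algorithm~\ref{alg:moreau}), then plug this oracle into $\RobustAccel$ exactly as in the smooth case. One point to sharpen is your inner iteration count: the paper's key observation is that, because $F^\star$ is $\sqrt{L}$-Lipschitz, both the initial distance $\|\bar\theta - \Prox(\bar\theta)\|_2 \le 2\sqrt{L}\lam$ (Lemma~\ref{lem:moreauargminclose}) and the noise floor $O(\sqrt{L\eps}\,\lam)$ of the inner robust descent scale linearly with $\lam$, so the inner loop needs $O(\tfrac{1}{\eps}\log\tfrac{1}{\eps})$ steps \emph{independent of $\lam$}, and the resulting gradient error on $\Fsg$ is $O(\sqrt{L\eps})$, again independent of $\lam$ --- this is precisely what produces the clean $\eps^{-1}$ overhead, whereas your stated rate $\tO(\tfrac{1}{\lam\eps_{\textup{in}}})$ does not make this transparent (and as written would give a $\lam$-dependent cost). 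Also, the formal Theorem~\ref{thm:accellipschitzopt} targets $\stenv$ directly, so your extra step relating $\stenv$ to $\streg$ is not needed; the $\streg$ in the informal statement is a typo.
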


To obtain this result, we give a nearly-linear time construction of a noisy gradient oracle for the Moreau envelope, which may be of independent interest; we note similar gradient oracle constructions in different settings have been developed in the optimization literature (see e.g.\ \cite{CarmonJJS21}).

\paragraph{Robust linear regression.}
 The specific problem of robust linear regression is perhaps the most ubiquitous example of statistical regression \cite{KlivansKM18, KarmalkarKK19, DiakonikolasKS19, ZhuJS20, CherapanamjeriATJFB20, BakshiP21}. Amongst the algorithms developed for this problem, the only nearly-linear time algorithm is the recent work of \cite{CherapanamjeriATJFB20}. For an instance of robust linear regression with noise variance bounded by $\sigma^2$ and covariate second moment matrix $\msigs \defeq \E_{X \sim \dist_X}[XX^\top]$, the algorithms of \cite{DiakonikolasKK019, PrasadSBR20, CherapanamjeriATJFB20} attain distance to the true regression minimizer $\theta^\star$ scaling as $\sigma \kappa\sqrt{\eps}$ in the $\msigs$ norm (the ``Mahalanobis distance'') under a bounded $4^{\textup{th}}$ moment assumption. We measure error in the $\msigs$ norm as it is scale invariant and the natural norm in which to measure the underlying (quadratic) statistical regression error.\footnote{Some prior works gave $\ell_2$ norm guarantees, which we have translated for comparison.} We give one result (Theorem~\ref{thm:inflinaccel}) which improves the runtime of \cite{DiakonikolasKK019, PrasadSBR20, CherapanamjeriATJFB20} under the noisy gradient descent framework, and one result (Theorem~\ref{thm:inflinreg}) which improves its estimation rate, under the Sever framework.

  We first demonstrate that directly applying our robust acceleration framework leads to a similar estimation guarantee as \cite{DiakonikolasKK019, PrasadSBR20, CherapanamjeriATJFB20} under the same assumptions.

\begin{theorem}[informal, see Theorem~\ref{thm:accellinreg}]\label{thm:inflinaccel}
Suppose $\dist_X$ is a $2$-to-$4$ hypercontractive distribution with second moment matrix $\msigs = \E_{X \sim \dist_X}[XX^\top]$ satisfying $\mu\cdot \id \preceq \msigs \preceq L\cdot \id$, and $y \sim \dist_y$ is generated as $\inprod{\st}{X} + \delta$, for $\delta \sim \dist_\delta$ with variance at most $\sigma^2$. Let $\kappa \defeq \frac L \mu$. There is an algorithm, $\RobustAccel$, that given $n \defeq \tO(\frac d \eps)$ $\eps$-corrupted samples from $\dist_{Xy}$, for  $\eps \kappa^2$ at most an absolute constant, runs in time $\tO(nd\sqrt{\kappa})$ and obtains $\theta$ with $\norm{\theta - \st}_{\msigs} = O(\sigma \kappa \sqrt{\eps})$ with probability $1 - \delta$.
\end{theorem}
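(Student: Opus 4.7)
The plan is to instantiate the robust acceleration framework developed for Theorem~\ref{thm:infsmooth} on the quadratic link function $\gamma(v, y) = \thalf(v - y)^2$, which corresponds to the statistical objective $F^\star(\theta) = \thalf \E_{(X, y) \sim \dist_{Xy}}[(\inprod{\theta}{X} - y)^2]$. Since $\msigs \preceq L \cdot \id$ and $\msigs \succeq \mu \cdot \id$, $F^\star$ is $L$-smooth and $\mu$-strongly convex, so that an accelerated method with access to an appropriate noisy gradient oracle should converge in $\tO(\sqrt\kappa)$ iterations. The obstacle to directly invoking Theorem~\ref{thm:infsmooth} is that the quadratic loss is not Lipschitz in $v$, so per-sample gradients $(\inprod{\theta}{X_i} - y_i)X_i$ can be arbitrarily large; this rules out black-box reuse of the oracle from the smooth GLM case, whose variance bound relies on a uniform derivative bound.

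To remedy this, I would construct a bespoke noisy gradient oracle at a query point $\theta$ by decomposing the sample gradient as $X_i X_i^\top(\theta - \st) - \delta_i X_i$ and applying a nearly-linear time robust mean estimator (e.g.\ the $\FCF$/$\DongH019$-type primitive used elsewhere in the paper) to the per-sample gradient vectors. The $2$-to-$4$ hypercontractivity of $\dist_X$ is exactly what controls the fourth moments of both terms: it bounds the operator norm of $\Cov\bigl((X^\top(\theta - \st))X\bigr)$ by $O(\norm{\theta - \st}_{\msigs}^2)$, while independence of $\delta$ from $X$ and the noise variance assumption yield $\Cov(\delta X) \preceq \sigma^2 \msigs$. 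Running the robust mean estimator on these samples then produces a gradient estimate $g(\theta)$ satisfying $\norm{g(\theta) - \nabla F^\star(\theta)}_{(\msigs)^{-1}} \le O\!\left(\sqrt{\eps}\bigl(\sigma + \norm{\theta - \st}_{\msigs}\bigr)\right)$ with the right probability guarantee after a union bound over iterations.

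With such an oracle in hand, I would invoke an accelerated gradient method on $F^\star$ — the same template that underlies Theorem~\ref{thm:infsmooth}, but now measured in the $\msigs$-norm rather than $\ell_2$, exploiting that the Hessian $\nabla^2 F^\star = \msigs$ is constant. The accelerated convergence potential gives that after $T = \tO(\sqrt\kappa)$ iterations, $\norm{\theta_T - \st}_{\msigs}^2$ decays geometrically up to an error term accumulating the oracle biases, which plug in to the standard biased-acceleration analysis to yield a final error scaling like $\kappa \cdot \sqrt\eps \cdot (\sigma + \max_t \norm{\theta_t - \st}_{\msigs})$ — the extra factor of $\kappa$ arising from the mismatch between the $(\msigs)^{-1}$-norm in which the oracle bias is controlled and the $\msigs$-norm in which the potential tracks progress. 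Balancing this inequality against the desired guarantee and the side condition $\eps \kappa^2 \lesssim 1$ closes the recursion and yields $\norm{\theta - \st}_{\msigs} = O(\sigma \kappa \sqrt\eps)$, while the per-iteration cost is the nearly-linear cost of one robust mean estimation call on $n = \tO(d/\eps)$ gradient samples, for a total runtime of $\tO(nd\sqrt\kappa)$.

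The main obstacle is the self-referential nature of the gradient bias: the oracle error at $\theta_t$ depends on the (unknown) distance $\norm{\theta_t - \st}_{\msigs}$, so a priori the bias can grow with the iterates and destabilize the accelerated scheme, which is known to be more fragile under adversarial gradient error than vanilla gradient descent. The way to handle this is a careful inductive argument on an accelerated potential function showing that, under the assumed scaling $\eps \kappa^2 \lesssim 1$, the bias contribution at each step is dominated by the contraction of the potential, so all iterates remain within a ball of radius $O(\sigma\sqrt{\kappa/\mu})$ of $\st$; within that ball, the bias is controlled by $\sigma\sqrt\eps$ plus a term that can be absorbed back into the contraction, and the analysis closes. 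Sample complexity $n = \tO(d/\eps)$ and the union bound over $\tO(\sqrt\kappa)$ iterations of the mean-estimation routine deliver the high-probability guarantee claimed in the theorem.
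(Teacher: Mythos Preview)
Your high-level plan---construct a noisy gradient oracle via robust mean estimation on the per-sample gradients, then run an accelerated method---is exactly what the paper does. But your execution of the geometry is off in a way that matters.

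The robust mean estimator you invoke (whether $\FCF$ or the \cite{DongH019} primitive) gives an $\ell_2$ bound on the gradient error, not a $(\msigs)^{-1}$-norm bound: $\msigs$ is unknown to the algorithm, so there is no way to filter or estimate in the $(\msigs)^{-1}$ geometry. Concretely, what one actually obtains is $\norm{G(\theta) - \nabla F^\star(\theta)}_2 \le O(\sqrt{L\eps}\,\sigma + L\sqrt{\eps}\,R)$ for a radius bound $R \ge \norm{\theta - \st}_2$; converting this to $(\msigs)^{-1}$ costs an extra $1/\sqrt{\mu}$, so your claimed bound $O(\sqrt{\eps}(\sigma + \norm{\theta - \st}_{\msigs}))$ is too optimistic by $\sqrt{\kappa}$. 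Relatedly, you propose to run the accelerated method ``in the $\msigs$-norm,'' but this is not implementable for the same reason, and even if it were, $F^\star$ would be $1$-smooth and $1$-strongly convex in that norm (condition number $1$), which contradicts your own $\tO(\sqrt{\kappa})$ iteration count. The paper instead works entirely in $\ell_2$: $F^\star$ is $L$-smooth and $\mu$-strongly convex there, the oracle bound is in $\ell_2$, the accelerated method converges to $\ell_2$ distance $O(\sigma\sqrt{\kappa\eps/\mu})$, and only at the very end does one convert via $\norm{\cdot}_{\msigs} \le \sqrt{L}\,\norm{\cdot}_2$ to obtain $O(\sigma\kappa\sqrt{\eps})$.

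On the self-referential bias issue you flag, your proposed ``inductive argument on the potential'' is vague at exactly the point where acceleration is known to be delicate. The paper's resolution is more concrete and avoids any inductive bootstrapping: the accelerated inner loop ($\HalfRadiusAccel$) explicitly \emph{projects} both the primal and dual iterates onto a ball of radius $R$ around the phase's starting point, so by triangle inequality every query to the oracle is at distance at most $2R$ from $\st$, and the oracle is simply called with parameter $2R$. The potential analysis then bounds the accumulated error over $O(\sqrt{\kappa})$ steps and shows the distance halves; an outer loop over $O(\log(R_0\sqrt{\mu}/(\sigma\sqrt{\kappa\eps})))$ such phases drives $R$ down to the noise floor. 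This phase-plus-projection structure is the mechanism that tames the radius-dependent oracle error, and it is the main technical content missing from your sketch.
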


We give a formal definition of $2$-to-$4$ hypercontractivity in Section~\ref{ssec:notation}; as a lower bound of \cite{BakshiP21} shows, attaining estimation rates for robust linear regression scaling polynomially in $\eps$ is impossible under only bounded second moments, and such a $4^{\textup{th}}$ moment bound is the minimal assumption under which robust estimation is known to be possible. Theorem~\ref{thm:accellinreg} matches the distribution assumptions and error of \cite{CherapanamjeriATJFB20}, while obtaining an accelerated runtime.

Interestingly, under the $4^{\textup{th}}$ moment bound used in Theorem~\ref{thm:accellinreg}, \cite{BakshiP21} showed that the information-theoretically optimal rate of estimation in the $\msigs$ norm is \emph{independent} of $\kappa$, and presented a matching upper bound under an analogous, but more stringent, distributional assumption.\footnote{The algorithm of \cite{BakshiP21} requires $\dist_X$ to be \emph{certifiably hypercontractive}, an algebraic condition frequently required by the sum-of-squares algorithmic paradigm to apply to robust statistical estimation problems.} We remark that thus far robust linear regression algorithms have broadly fallen under two categories: The first category (e.g.\ \cite{KlivansKM18, ZhuJS20, BakshiP21}), based on the sum-of-squares paradigm for algorithm design, sacrifices practicality to obtain improved error rates by paying a large runtime and sample complexity overhead (as well as requiring stronger distributional assumptions). The second (e.g.\ \cite{DiakonikolasKK019, PrasadSBR20, CherapanamjeriATJFB20}), which opts for more practical approaches to algorithm design, has been bottlenecked at Mahalanobis distance $O(\sigma \kappa\sqrt{\eps})$ and the requirement that $\eps \kappa^2 = O(1)$.

We present a nearly-linear time method for robust linear regression overcoming this bottleneck for the first time amongst non-sum-of-squares algorithms, and attaining improved statistical performance compared to Theorem~\ref{thm:inflinaccel} while only requiring $\eps \kappa = O(1)$.

\begin{theorem}[informal, see Theorem~\ref{thm:fastreg}]\label{thm:inflinreg}
	Suppose $\dist_X$ is a $2$-to-$4$ hypercontractive distribution with second moment matrix $\msigs = \E_{X \sim \dist_X}[XX^\top]$ satisfying $\mu\cdot \id \preceq \msigs \preceq L\cdot \id$, and $y \sim \dist_y$ is generated as $\inprod{\st}{X} + \delta$, for $\delta \sim \dist_\delta$, a $2$-to-$4$ hypercontractive distribution with variance at most $\sigma^2$. Let $\kappa \defeq \frac L \mu$. There is an algorithm, $\FastReg$, that given $n \defeq \tO((\frac{d^2}{\eps^3} + \frac{d}{\eps^4}))$ $\eps$-corrupted samples from $\dist_{Xy}$, for $\eps \kappa$ at most an absolute constant, uses $\tO(\frac{1}{\eps})$ calls to an empirical risk minimization routine\footnote{The empirical risk minimization algorithm used is up to the practitioner; its runtime will never scale worse than $\tO(nd\sqrt{\kappa})$ by applying (non-robust) accelerated gradient descent, but can be substantially better if recent advances in stochastic gradient methods are used, e.g.\ \cite{Zhu17}.} and $\tO(\frac{nd}{\eps})$ additional runtime, and obtains $\theta$ with $\norm{\theta - \st}_{\msigs} = O(\sigma \sqrt{\kappa \eps})$ with probability at least $\frac 9 {10}$.
\end{theorem}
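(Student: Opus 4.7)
}
The plan is to instantiate the Sever framework of \cite{DiakonikolasKK019} with an identifiability argument adapted from \cite{BakshiP21}, so that each outer iteration performs one weighted least-squares ERM call followed by a univariate covariance-based filter. Concretely, $\FastReg$ maintains weights $w \in [0,1]^n$, initialized to the uniform distribution, and at each round calls an ERM oracle to compute
\[\hatt(w) \defeq \argmin_{\theta \in \R^d}\sum_{i \in [n]} w_i \Par{y_i - \inprod{\theta}{X_i}}^2.\]
It then forms the empirical gradient vectors $g_i(w) \defeq (y_i - \inprod{\hatt(w)}{X_i}) X_i$, approximately computes the top eigenvector $v$ of the weighted second moment $\sum_i w_i g_i(w) g_i(w)^\top$ via power iteration, and, if the top eigenvalue exceeds an appropriate threshold of order $\sigma^2 L$, runs a one-dimensional filter on the scores $\inprod{v}{g_i(w)}^2$ to downweight a constant fraction of the mass concentrated on extreme scores.

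The core of the proof is an identifiability lemma: if $w$ has $\ell_\infty$-close-to-uniform support on at least a $(1 - O(\eps))$-fraction of the uncorrupted samples and
\[\normop{\sum_{i \in [n]} w_i g_i(w) g_i(w)^\top} \le O(\sigma^2 L),\]
then $\norm{\hatt(w) - \st}_{\msigs} \le O(\sigma\sqrt{\kappa \eps})$. The argument first writes $\hatt(w) - \st$ as the solution of the normal equations $(\sum_i w_i X_i X_i^\top)(\hatt(w) - \st) = \sum_i w_i \delta_i X_i - \sum_{i \in B} w_i \Delta_i$, where $B$ denotes corrupted indices and $\Delta_i$ is the adversarial displacement of sample $i$'s contribution. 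Standard weighted concentration gives $\sum_i w_i X_i X_i^\top \succeq \half \msigs$, so we must bound the right-hand side in the $(\msigs)^{-1}$ norm. Here $2$-to-$4$ hypercontractivity of $\dist_X$ and $\dist_\delta$ plays the role that sum-of-squares certificates play in \cite{BakshiP21}: writing the contribution of corrupted points as an inner product against a unit vector $u$ in $\msigs$-normalized coordinates and applying Cauchy-Schwarz, the corruption term is controlled by $\sqrt{\eps}$ times the square root of the top eigenvalue of the weighted gradient covariance, while hypercontractivity upgrades the $2$-norm control on $u$ into a $4$-norm bound on $\inprod{u}{X_i}$ that absorbs the residuals $(y_i - \inprod{\hatt(w)}{X_i})^2$ without an extra factor of $\kappa$.

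The progress lemma for the filter is the standard Sever-style argument: whenever the top eigenvalue of the weighted gradient covariance exceeds the threshold, one-dimensional filtering along $v$ removes strictly more mass from $B$ than from the good samples, since hypercontractivity bounds the $4^{\textup{th}}$ moment of $\inprod{v}{g_i(w)}$ on good samples and forces any excess variance to be due to corruptions. Chaining this over rounds, the filter terminates in $\tO(\eps^{-1})$ iterations because each nontrivial round depletes $\Omega(\eps/\log(1/\eps))$ corrupted mass; at termination the identifiability lemma yields the claimed error. The sample complexity $n = \tO(d^2/\eps^3 + d/\eps^4)$ comes from showing that the empirical $4^{\textup{th}}$-moment tensor of $(X,\delta)$ and the empirical covariance $\frac 1 n \sum_i X_i X_i^\top$ concentrate to the population quantities tightly enough for the hypercontractivity-based inequalities to transfer to the finite sample with the stated $\eps$-dependence, invoking matrix concentration inequalities for the $dd^\top$-type terms and a symmetrization / truncation argument to handle the $4^{\textup{th}}$ moments.

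The main obstacle is the \emph{identifiability lemma}: the \cite{BakshiP21} argument is presented in the sum-of-squares language, where the degree-$4$ pseudo-expectation manipulates residuals and covariates symbolically, and transporting it to a purely combinatorial filter setting requires replacing each pseudo-expectation step with an honest Cauchy-Schwarz or matrix inequality. In particular, the step where the $\kappa$-free error bound arises, namely bounding $\sum_{i \in B} w_i (y_i - \inprod{\hatt(w)}{X_i}) \inprod{u}{X_i}$ without invoking $\normop{\msigs^{-1}}$, is the delicate point, and keeping the runtime nearly-linear requires that all these bounds be certified by a single power-method direction rather than a semidefinite program. Once this identifiability is in hand, wrapping it with the Sever outer loop and the standard filter yields the runtime $\tO(\eps^{-1})$ ERM calls plus $\tO(nd/\eps)$ additional time, establishing Theorem~\ref{thm:inflinreg}.
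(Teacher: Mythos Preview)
Your proposal has the right high-level structure and the identifiability lemma you sketch is essentially the paper's Proposition~\ref{prop:identifiability}. However, there is a genuine gap in your termination argument, which is precisely the point the paper identifies as the main technical obstacle.

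You claim that the filter ``terminates in $\tO(\eps^{-1})$ iterations because each nontrivial round depletes $\Omega(\eps/\log(1/\eps))$ corrupted mass.'' This does not follow from the standard Sever argument once you are alternating ERM and filtering. The issue is a chicken-and-egg problem: after you filter at $\hatt(w)$ to get new weights $w'$ with small gradient covariance, the point $\hatt(w)$ is no longer the empirical minimizer of $F_{w'}$, so $\nabla F_{w'}(\hatt(w))$ need not be small, and your identifiability lemma does not apply. If you re-run ERM to obtain $\hatt(w')$, the gradients $g_i(\hatt(w'))$ are entirely new vectors, and there is no a priori lower bound on how much mass the next filtering round must remove to control the covariance at $\hatt(w')$. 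Mass removed is monotone and safe, but the amount removed per round can be arbitrarily small, so ``depleting $\Omega(\eps/\log(1/\eps))$ per round'' is unjustified.

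The paper's resolution is to introduce a \emph{third potential}, the function value $F_w(\theta)$ itself. Both filtering (which only decreases weights) and ERM (which optimizes $\theta$) monotonically decrease $F_w(\theta)$. The algorithm terminates not when the covariance is small, but when the function decrease $F_{w^{(t+1)}}(\theta^{(t)}) - F_{w^{(t+1)}}(\theta^{(t+1)})$ between consecutive ERM calls falls below a threshold. When this happens, smoothness of $F_{w^{(t+1)}}$ (ensured by a preprocessing step bounding $\mx^\top\diag{w}\mx$) certifies that $\norm{\nabla F_{w^{(t+1)}}(\theta^{(t)})}_{(\msigs)^{-1}}$ is small \emph{with respect to the post-filter weights}, so both conditions of the identifiability lemma hold simultaneously. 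The iteration bound then follows because the initial function value is $O(R^2)$ (after a one-time $\ffilter$ step) and each non-terminating round decreases it by a fixed quantum; this gives $O(\kappa)$ rounds in the halving phases and $O(\eps^{-1})$ in the final phase. Without this potential argument, you have no handle on the number of ERM calls.
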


This second algorithm does require more resources than that of Theorem~\ref{thm:inflinaccel}: the sample complexity scales quadratically in $d$, and the runtime is never faster. 
Further, we make the slightly stronger assumption of hypercontractive noise for the uncorrupted samples.
On the other hand, the improved dependence on the condition number in the error can be significant for distributions in practice, which may be far from isotropic.
All told, Theorem~\ref{thm:inflinreg} presents an intermediate tradeoff inheriting some statistical gains of the sum-of-squares approach (albeit still depending on $\kappa$) without sacrificing a nearly-linear runtime. Interestingly, we obtain Theorem~\ref{thm:inflinreg} by reinterpreting an identifiability proof used in the algorithm of \cite{BakshiP21}, and combining it with tools inspired by the Sever framework. 
Our sample complexity for Theorem~\ref{thm:inflinreg} dramatically improves that of \cite{DiakonikolasKK017}'s original linear regression algorithm in the Sever framework for moderate $\eps$, which used $\tO(\frac{d^5}{\eps^2})$ samples (in addition to beating their weaker error guarantee). We elaborate on these points further in Section~\ref{ssec:techniques}.

\subsection{Prior work}\label{ssec:prior}

We give a general overview contextualizing our work in this section, and defer the comparison of specific technical components we develop in this work to relevant discussions.

The study of learning in the presence of adversarial noise is known as \emph{robust statistics}, with a long history dating back over 60 years~\cite{anscombe1960rejection,tukey1960survey,huber1964robust,tukey1975mathematics,huber2004robust}.
Despite this, the first efficient algorithms with near-optimal error for many fundamental high dimensional robust statistics problems were only recently developed~\cite{DiakonikolasKK016,LaiRV16,DiakonikolasKK017}.
Since these works, efficient robust estimators have been developed for a variety of more complex problems; a full survey of this field is beyond our scope, and we defer a more comprehensive overview to~\cite{diakonikolas2019recent,Li18,Steinhardt18}.

Our results sit within the line of work in this field on robust stochastic optimization.
The first works which achieved dimension-independent error rates with efficient algorithms for the problems we consider in this paper are the aforementioned works of~\cite{PrasadSBR20,DiakonikolasKK019}.
Similar problems were previously considered in~\cite{charikar2017learning,balakrishnan2017computationally}.
In~\cite{charikar2017learning}, the authors consider a setting where a majority of the data is corrupted, and the goal is to output a short list of hypotheses so that at least one is close to the true regressor.
However, because most of their data is corrupted, they achieve weaker statistical rates; in particular, their techniques do not achieve vanishing error as the fraction of error goes to zero.
In~\cite{balakrishnan2017computationally}, the authors consider a somewhat different model with stronger assumptions on the structure of the functions.
In particular, they assume that the uncorrupted covariates are Gaussian, and are primarily concerned with the case where the regressors are sparse.
Their main goal is to achieve sublinear sample complexities by leveraging sparsity.
We also remark that the algorithms in~\cite{charikar2017learning,balakrishnan2017computationally} are also much more cumbersome, requiring heavy-duty machinery such as black-box SDP solvers and cutting plane methods, and as a result are more computationally intense than those considered in~\cite{PrasadSBR20,DiakonikolasKK017}.

There has been a large body of subsequent work on the special case of robust linear regression~\cite{KlivansKM18, KarmalkarKK19, DiakonikolasKS19, ZhuJS20, CherapanamjeriATJFB20, BakshiP21}; however, the majority of this line of work focuses on achieving improved error rates under additional distributional assumptions by using the sum-of-squares hierachy. As a result, their algorithms are likely impractical in high dimensions, and require large (albeit polynomial) sample complexity and runtime. Of particular interest to us is~\cite{CherapanamjeriATJFB20}, who combine the framework of~\cite{PrasadSBR20} with the robust mean estimation algorithm of~\cite{ChengDG19} to achieve nearly-linear runtimes in the problem dimension and the number of samples. Our Theorem~\ref{thm:inflinaccel} can be thought of as the natural accelerated version of~\cite{CherapanamjeriATJFB20}, with an additional $\eps^{-6}$ runtime overhead removed using more sophisticated mean estimation techniques.

\subsection{Techniques}\label{ssec:techniques}

We now describe the techniques we use to obtain the accelerated rates of Theorems~\ref{thm:infsmooth},~\ref{thm:inflipschitz}, and~\ref{thm:inflinaccel} as well as the robust linear regression algorithm of Theorem~\ref{thm:inflinreg}.

\paragraph{Robust acceleration.} Our robust acceleration framework is based on the following abstract formulation of an optimization problem: there is an unknown function $F^\star: \R^d \to \R$ with minimizer $\st$ which is $L$-smooth and $\mu$-strongly convex, and we wish to estimate $\st$, but our only mode of accessing $F^\star$ is through a \emph{noisy gradient oracle} $\ong$. Namely, for some $\sigma$, $\eps$, we can query $\ong$ at any point $\theta \in \R^d$ with an upper bound $R \ge \norm{\theta - \st}_2$ and receive an estimate $G(\theta)$ such that
\begin{equation}\label{eq:ongdef}\norm{G(\theta) - \nabla F^\star(\theta)}_2 = O\Par{\sqrt{L\eps}\sigma + L\sqrt{\eps}R}.\end{equation}
In other words, we receive gradients perturbed by both fixed additive noise, and multiplicative noise depending on the distance to $\st$. The prior works \cite{DiakonikolasKK019, PrasadSBR20, CherapanamjeriATJFB20} observed that by using tools from robust mean estimation, appropriate noisy gradient oracles could be constructed for the functions $F^\star(\theta) = \E_{(X, y) \sim \dist_{Xy}}[\sigma(\inprod{\theta}{X} - y)]$ arising from the distributional assumptions in Theorems~\ref{thm:infsmooth},~\ref{thm:inflipschitz}, and~\ref{thm:inflinaccel}. Our first contribution is speeding up the implementation of $\ong$ to run in nearly-linear time $\tO(nd)$, leveraging recent advances by \cite{DongH019} for robust mean estimation.

Our second, and more technically involved, contribution is demonstrating that accelerated runtimes are achievable under the noisy gradient oracle access model of \eqref{eq:ongdef}. Designing accelerated algorithms under noisy gradient access is an extremely well-studied problem, and there are both strong positive results \cite{dAspremont08, MonteiroS13a, DvurechenskyG16, CohenDO18, MohammadiRJ19, BubeckJLLS19} as well as negative results \cite{DevolderGN14} showing that under certain noise models, accelerated gradient descent may be outperformed by unaccelerated methods. Indeed, it was asked (motivated by these negative results) as an open question in \cite{PrasadSBR20} whether an accelerated rate was possible under the noise model \eqref{eq:ongdef}.

Our accelerated algorithm runs in logarithmically many phases, where we halve the distance to the optimizer (while it is above a certain noise floor depending on the additive error in \eqref{eq:ongdef}) in each phase. The subroutine we design for implementing each phase is a robust accelerated ``outer loop'' tolerant to noisy gradient access in the manner provided by our oracle $\ong$. By carefully balancing the accuracy of subproblem solutions, the multiplicative error in our gradient estimates within the accelerated outer loop, and the drift of the phase's iterates (which may venture further from $\st$ than our initial iterate upper bound), we show that above the noise floor we can halve the distance to $\st$ in $\tO(\sqrt{\kappa})$ queries to $\ong$; recursing on this guarantee yields our complete algorithm.

To obtain Theorem~\ref{thm:inflipschitz}, we demonstrate that for Lipschitz functions $F^\star$ admitting a \emph{radiusless} noisy gradient oracle, i.e.\ one which satisfies \eqref{eq:ongdef} with no dependence on $R$, we can further efficiently construct a noisy gradient oracle for the Moreau envelope of $F^\star$ using projected subgradient descent. This construction enables applying our robust accelerated method to Lipschitz regression problems.

Our acceleration framework crucially tolerates both additive and multiplicative guarantees for gradient estimation. While it is possible that arguments of other noisy acceleration frameworks e.g.\ \cite{CohenDO18} may be extended to capture our gradient noise model, we give a self-contained derivation specialized to our specific oracle access for convenience. We view our result as a proof-of-concept that acceleration is possible under this noise model; we believe a unified study of acceleration under noise models encompassing \eqref{eq:ongdef} warrants further exploration, and defer it to interesting future work.

\paragraph{Robust linear regression.} For the special case of linear regression, as discussed earlier, the robust optimization methods of \cite{DiakonikolasKK019, PrasadSBR20, CherapanamjeriATJFB20} attain Mahalanobis distance scaling as $\sigma \kappa \sqrt{\eps}$ from the true minimizer. Directly plugging in deterministic conditions proven by \cite{CherapanamjeriATJFB20} to hold under an appropriate statistical model into our robust gradient descent framework, we obtain a similar guarantee (Theorem~\ref{thm:inflinaccel}) at an accelerated rate. In this technical overview, we now focus on how we obtain the improvements of Theorem~\ref{thm:inflinreg}.

At a high level, prior works lose two factors of $\sqrt{\kappa}$ in their error guarantees because of two \emph{norm conversions} from the $\msigs$ norm to the $\ell_2$ norm: one in gradient space, and one in parameter space. Because we do not have access to the true covariance $\msigs$, it is natural to perform both gradient estimation and the gradient descent procedure itself in the $\ell_2$ norm. When the $\ell_2$ guarantees of both subroutines are converted back to the $\msigs$ norm, the error rate is lossy by a factor of $\kappa$.

We give a different approach to robust linear regression which bypasses this barrier in parameter space, saving a factor of $\sqrt{\kappa}$ in our error rate. In particular, we measure progress of our parameter estimates entirely in the $\msigs$ norm in our analysis, which removes the need for an additional norm conversion. Our starting point is the following \emph{identifiability proof} guarantee of \cite{BakshiP21}, which we slightly repurpose for our needs. Let $w \in \Delta^n$ be entrywise less than $\frac 1 n \one$ such that $\norm{w_G}_1 \ge 1 - 4\eps$ where $G$ is our ``uncorrupted'' data, and let $\theta \in \R^d$. We demonstrate in Proposition~\ref{prop:identifiability} that
\begin{equation}\label{eq:idintro}\norm{\theta - \st}_{\msigs} = O\Par{\sigma \sqrt{\kappa \eps} + \sqrt{\normop{\Cov_w\Par{\Brace{g_i(\theta)}_{i \in [n]}}}  \frac{\eps}{\mu}} + \sqrt{\eps}\norm{\nabla F_w(\theta)}_{(\msigs)^{-1}}}.\end{equation}
In the above display, $g_i(\theta)$ is the empirical gradient of the squared loss at our $i^{\text{th}}$ data point, $\Cov_w(\cdot)$ is the empirical second moment matrix of its argument under the weighting $w$, and $F_w$ is the empirical risk under $w$. The guarantee \eqref{eq:idintro} suggests a natural approach to estimation: if we can \emph{simultaneously} verify that $\theta$ is an approximate minimizer to $F_w$, and that the empirical second moment of gradients at $\theta$ (according to $w$) are small, then we have a proof that $\theta$ and $\st$ are close.

Prior work by \cite{BakshiP21} used this approach to obtain a polynomial-time estimator by solving a joint optimization problem in $(w, \theta)$, via an appropriate semidefinite program relaxation. However, in designing near-linear time algorithms, we cannot afford to use said relaxation. This raises a chicken-and-egg issue: for fixed $w$, it is straightforward to make $\norm{\nabla F_w(\theta)}_{(\msigs)^{-1}}$ small, by setting $\theta$ to the empirical risk minimizer (ERM) of $F_w$. Likewise, for fixed $\theta$, known \emph{filtering} techniques rapidly decrease $\|\Cov_w(\{g_i(\theta)\})_{i \in [n]}\|_{\textup{op}}$ while preserving most of $w_G$, by using that the second moment restricted to uncorrupted points has a small operator norm as a certificate for outlier removal. However, performing either of these subroutines to guarantee one of our sufficient conditions passes (small operator norm or gradient norm) may adversely affect the quality of the other.

We circumvent this chicken-and-egg problem by introducing a \emph{third} potential, namely the actual function value $F_w(\theta)$. In particular, notice that the two subroutines we described earlier (downweighting $w$ or setting $\theta$ to the ERM) both decrease this third potential. Our linear regression algorithm is an alternating procedure which iteratively filters $w$ based on the gradients at the current $\theta$ (to make the operator norm small), and sets $\theta$ to the ERM of $F_w$ (to zero out the gradient norm). We further show that if the ERM step does not make significant function progress (our third potential), then it was not strictly necessary to make progress according to \eqref{eq:idintro}, since the gradient norm was already small. This gives a dimension-independent bound on the number of times we could have alternated, via tracking function progress, yielding Theorem~\ref{thm:inflinreg}. 

\section{Preliminaries}
\label{sec:prelims}

We give the notation used throughout this paper in Section~\ref{ssec:notation}, and set up the statistical model we consider in Section~\ref{ssec:statmodel}. In Section~\ref{ssec:severassume}, we give the deterministic regularity assumptions used by our regression algorithm in Section~\ref{sec:sever}. In Section~\ref{ssec:gdassume}, we give the deterministic regularity assumptions used by our stochastic optimization algorithms in Sections~\ref{sec:agd} and~\ref{sec:lipschitz}. Finally, in Section~\ref{ssec:fastcov}, we state a nearly-linear time procedure for robustly decreasing the operator norm of the second moment matrix of a set of vectors. Some proofs are deferred to the appendices.

\subsection{Notation}\label{ssec:notation}

\paragraph{General notation.} For $d \in \N$ we let $[d] \defeq \{j \mid j \in \N, 1 \le j \le d\}$. The $\ell_p$ norm of a vector argument is denoted $\norm{\cdot}_p$, where $\norm{\cdot}_\infty$ is the element with largest absolute value; when the argument is a symmetric matrix, we overload this to mean the Schatten-$p$ norm. The all-ones vector (of appropriate dimension from context) is denoted $\one$. The (solid) probability simplex is denoted $\Delta^n \defeq \{w \in \R^n_{\ge 0}, \norm{w}_1 \le 1\}$. We use $\tO$ to suppress logarithmic factors in dimensions, distance ratios, the problem condition number $\kappa$, the inverse corruption parameter $\eps^{-1}$, and the inverse failure probability. For $v \in \R^n$ and $S \subseteq [n]$, we let $v_S \in \R^n$ denote $v$ with coordinates in $[n] \setminus S$ zeroed out. For a set $S$, we call $S_1$, $S_2$ a \emph{bipartition} of $S$ if $S_1 \cap S_2 = \emptyset$ and $S_1 \cup S_2 = S$.

\paragraph{Matrices.} Matrices are denoted in boldface. We denote the zero and identity matrices (of appropriate dimension) by $\mzero$ and $\id$. The $d \times d$ symmetric matrices are $\Sym^d$, and the $d \times d$ positive semidefinite cone is $\Sym_{\ge 0}^d$. For $\ma, \mb \in \Sym^d$ we write $\ma \preceq \mb$ to mean $\mb - \ma \in \Sym_{\ge 0}^d$. The largest and smallest eigenvalue and trace of a symmetric matrix are respectively denoted $\lmax(\cdot)$, $\lmin(\cdot)$, and $\Tr(\cdot)$. The inner product on $\Sym^d$ is $\inprod{\ma}{\mb} \defeq \Tr(\ma \mb)$. For positive definite $\mm$, we define the induced norm $\norm{v}_{\mm} \defeq \sqrt{v^\top \mm v}$. We use $\normop{\cdot}$ to mean the $\ell_2$-$\ell_2$ operator norm of a matrix; when the argument is symmetric, it is synonymous with $\lmax$, and otherwise is the largest singular value.

\paragraph{Functions.} The gradient and Hessian of a twice-differentiable function are denoted $\nabla$ and $\nabla^2$. We say differentiable $f: \R^d \to \R$ is $\lam$-Lipschitz in a quadratic norm $\norm{\cdot}_{\mm}$ if $\norm{\nabla f(\theta)}_{\mm^{-1}} \le \lam$ for all $\theta \in \R^d$. We say twice-differentiable $f: \R^d \rightarrow \R$ is $L$-smooth and $\mu$-strongly convex in $\norm{\cdot}_{\mm}$ if 
\[\mu\mm \preceq \nabla^2 f(\theta) \preceq L\mm,\text{ for all } \theta \in \R^d.\]
When $\mm$ is not specified, we assume $\mm = \id$ (i.e.\ the norm in question is $\ell_2$). For any $\mm$, smoothness and strong convexity imply the following bounds for all $\theta, \theta' \in \R^d$,
\[f(\theta) + \inprod{\nabla f(\theta)}{\theta' - \theta} + \frac \mu 2 \norm{\theta' - \theta}_{\mm}^2 \le f(\theta') \le f(\theta) + \inprod{\nabla f(\theta)}{\theta' - \theta} + \frac L 2 \norm{\theta' - \theta}_{\mm}^2.\]
It is well-known that $L$-smoothness of function $f$ implies $L$-Lipschitzness of the function gradient $\nabla f$, i.e.\ $\norm{\nabla f(\theta) - \nabla f(\theta')}_{\mm^{-1}} \le L\norm{\theta - \theta'}_{\mm}$ for all $\theta, \theta' \in \R^d$. For any $f$ which is $L$-smooth and $\mu$-strongly convex in $\norm{\cdot}_{\mm}$, with $\theta^* \defeq \argmin_{\theta \in \R^d} f(\theta)$, it is straightforward to show
\[
\frac{1}{2L}\norm{\nabla f(\theta)}_{\mm^{-1}}^2 \le f(\theta) - f(\theta^\star) \le \frac{1}{2\mu}\norm{\nabla f(\theta)}_{\mm^{-1}}^2 \text{ for all } \theta \in \R^d.
\]

\paragraph{Distributions.} The multivariate Gaussian distribution with mean $\mu$ and covariance $\msig$ is denoted $\Nor(\mu, \msig)$. For weights $w \in \R_{\ge 0}^n$ and a set of vectors $\mx \defeq \{X_i\}_{i \in [n]}$, we let
\[\mu_w\Par{\mx} \defeq \sum_{i \in [n]} \frac{w_i}{\norm{w}_1} X_i,\; \Cov_{w, \bX}\Par{\mx} \defeq \sum_{i \in [n]}  \frac{w_i}{\norm{w}_1} \Par{X_i - \bX}\Par{X_i - \bX}^\top.\]
be the empirical mean and (centered) covariance matrix; when $\bX$ is not specified, it is the zeroes vector. Draws from the uniform distribution on $\{X_i\}_{i \in [n]}$ are denoted $X \simunif \mx$. We say distribution $\dist$ supported on $\R^d$ is $2$-to-$4$ hypercontractive with parameter $\ctf$ if for all $v \in \R^d$,
\[\E_{X \sim \dist}\Brack{\inprod{X}{v}^4} \le \ctf \E_{X \sim \dist}\Brack{\inprod{X}{v}^2}^2.\]
We will refer to this property as being $\ctf$-hypercontractive for short; by massaging the definition, we observe $\ctf$-hypercontractivity is preserved under linear transformations of the distribution. 

\paragraph{Filtering.} We will make much use of the following algorithmic technique, which refer to as \emph{filtering}. 
In the filtering paradigm, we have an index set $[n]$, and a fixed, unknown bipartition $[n] = G \cup B$, $G \cap B = \emptyset$. The set $G$ is a ``good'' set of indices that we wish to keep, and the set $B$ is a set of ``bad'' indices which we would like to remove. The algorithm maintains a set of weights $w \in \Delta^n$ (with the goal of producing a weight vector which is close to the uniform distribution on $G$). These weights are iteratively updated according to ``scores'' $\tau \in \R^n_{\ge 0}$; the goal of filtering is to assign large scores to coordinates in $B$ and small scores to coordinates in $G$, so that the bad coordinates can be filtered out according to their scores. Concretely, we use the following definition.

\begin{definition}[saturated weights]
We say weights $w \in \Delta^n$ are $c$-\emph{saturated} with respect to the bipartition $G\cup B = [n]$ if $w \le \frac 1 n \one$ entrywise, and \[\norm{\Brack{\frac 1 n \one - w}_G}_1 \le \norm{\Brack{\frac 1 n \one - w}_B}_1 + c.\]
If $c = 0$, we refer to $w$ as simply \emph{saturated}.
\end{definition}
In words, $w$ is saturated if its difference from the uniform distribution has more weight on $B$ than $G$ (in the context of our algorithm, if we have started with uniform weights and produced a saturated $w$, then we have removed more mass from $B$ than $G$). We allow for a ``fudge factor'' of an additive $c$ to relax the above definition, which will come in handy in our linear regression applications.

\begin{definition}[safe scores]\label{def:safety}
Suppose $G \cup B$ is a bipartition of $[n]$, and suppose $w \in \Delta^n$ is a set of weights.
We call a set of scores $\tau = \{\tau_i\}_{i \in [n]} \in \R^n_{\ge 0}$ \emph{safe with respect to $w$} if it satisfies
\[\iprod{w_G,\tau} \le \iprod{w_B,\tau}.\]
\end{definition}
The following simple lemma (implicit in prior works \cite{DiakonikolasKK017, CharikarSV17, Li18, Steinhardt18}) is the crux of the filtering paradigm, relating these two definitions.

\begin{lemma}\label{lem:filterkey}
Suppose $w \in \Delta^n$ is saturated, and $\tau \in \R^n_{\ge 0}$ is safe with respect to $w$. Defining $w'$ by
\[w'_i \gets \Par{1 - \frac{\tau_i}{\tau_{\max}}} w_i\text{ for all } i \in [n],\text{ and } \tau_{\max} \defeq \max_{i \in [n] \mid w_i \neq 0} \tau_i,\]
then $w' \in \Delta^n$ is also saturated. 
\end{lemma}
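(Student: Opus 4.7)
The plan is to verify the two conditions defining saturation of $w'$: first that $w' \in \Delta^n$ with $w' \le \frac{1}{n}\one$ entrywise, and second the inequality controlling the mass difference between $G$ and $B$. The first part is essentially free: each scaling factor $1 - \tau_i/\tau_{\max}$ lies in $[0,1]$ (using $\tau_i \ge 0$ and the definition of $\tau_{\max}$ restricted to the support of $w$), so $0 \le w'_i \le w_i \le \frac{1}{n}$ entrywise, which in particular ensures $\norm{w'}_1 \le 1$.

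For the saturation inequality, I would introduce the nonnegative ``removal'' vector $\Delta w \defeq w - w'$, whose coordinates are $\Delta w_i = \frac{\tau_i}{\tau_{\max}} w_i \ge 0$. Writing $\frac{1}{n}\one - w' = (\frac{1}{n}\one - w) + \Delta w$ as a sum of two entrywise nonnegative vectors, the $\ell_1$ norm splits additively on each of $G$ and $B$, yielding
\[\norm{\Brack{\tfrac{1}{n}\one - w'}_S}_1 = \norm{\Brack{\tfrac{1}{n}\one - w}_S}_1 + \norm{\Brack{\Delta w}_S}_1 \quad \text{for } S \in \{G, B\}.\]
Subtracting the $G$ and $B$ versions and invoking the saturation of $w$ (which contributes at most $c$ on the right), the desired inequality for $w'$ reduces to showing $\norm{[\Delta w]_G}_1 \le \norm{[\Delta w]_B}_1$.

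This last inequality is exactly the safety of $\tau$ in disguise: by definition of $\Delta w$, we have $\norm{[\Delta w]_G}_1 = \frac{1}{\tau_{\max}} \iprod{w_G, \tau}$ and analogously for $B$, so the required bound is $\frac{1}{\tau_{\max}} \iprod{w_G, \tau} \le \frac{1}{\tau_{\max}} \iprod{w_B, \tau}$, which is equivalent to Definition~\ref{def:safety} after clearing the positive factor $\tau_{\max}^{-1}$. Chaining the three observations completes the proof.

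I do not expect any significant obstacle here — the lemma is essentially a bookkeeping identity, and the only subtlety is being careful that $\frac{1}{n}\one - w$ and $\Delta w$ are both entrywise nonnegative so that restriction to $G$ or $B$ commutes with $\ell_1$ norms without sign cancellations. The one corner case worth flagging is that $\tau_{\max}$ is defined as the maximum over indices with $w_i \neq 0$, which is exactly what is needed to ensure $\tau_i/\tau_{\max} \le 1$ on the support of $w$ (indices with $w_i = 0$ contribute nothing to $\Delta w$ regardless).
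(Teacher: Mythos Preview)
Your proof is correct and follows essentially the same approach as the paper: both decompose $\tfrac{1}{n}\one - w' = (\tfrac{1}{n}\one - w) + (w - w')$ as a sum of entrywise nonnegative vectors, use safety to compare $\norm{[w-w']_G}_1$ with $\norm{[w-w']_B}_1$, and combine with saturation of $w$. Your write-up is if anything slightly more careful (e.g.\ flagging why $\tau_{\max}$ is taken over the support of $w$); the only cosmetic slip is the stray reference to a ``$c$'' on the right, since the lemma as stated concerns the $c=0$ case.
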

\begin{proof}
If $G \cup B=[n]$ is the bipartition with good coordinates $G$, then by definition of safe scores,
\begin{equation*}
\norm{\Brack{w - w'}_G}_1 = \frac{1}{\tau_{\max}} \iprod{w_G,\tau} \le \frac{1}{\tau_{\max}} \iprod{w_G,\tau} \le \norm{\Brack{w - w'}_B}_1.
\end{equation*}
Now, since $w$ is saturated, and since  $w' \le w \le \frac 1 n \one$ by definition of saturation,
\begin{equation*}
\norm{\Brack{\tfrac{1}{n}\one - w'}_G}_1 = \norm{\Brack{\tfrac{1}{n}\one - w}_G}_1 + \norm{\Brack{w - w'}_G}_1 \le\norm{\Brack{\tfrac{1}{n}\one - w}_B}_1 + \norm{\Brack{w - w'}_B}_1 = \norm{\Brack{\tfrac{1}{n}\one - w'}_G}_1.
\end{equation*}
\end{proof}

We will also frequently using the following simple fact.
\begin{lemma}\label{lem:saturatedtv}
Suppose $w \in \Delta^n$ is $c$-saturated with respect to bipartition $[n] = G \cup B$, and suppose the bad set $|B| \le \eps n$. Let $\tw = \frac{w}{\norm{w}_1}$ be the distribution with probabilities proportional to $w$ and $\sw_G = \frac{1}{|G|} \one_G$ be uniform over $G$. Then, $\norm{\tw - \sw_G}_1 \le 6\eps + 2c$.
\end{lemma}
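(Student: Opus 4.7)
The plan is to apply the triangle inequality with the uniform distribution $u := \frac{1}{n}\one$ as an intermediate, decomposing
\[
\norm{\tw - \sw_G}_1 \le \norm{\tw - u}_1 + \norm{u - \sw_G}_1.
\]
The second term is immediate: for $i \in B$, $|u_i - \sw_{G,i}| = 1/n$, while for $i \in G$, $|u_i - \sw_{G,i}| = 1/|G| - 1/n$. Summing,
\[
\norm{u - \sw_G}_1 = \frac{|B|}{n} + |G|\Par{\frac{1}{|G|} - \frac{1}{n}} = \frac{2|B|}{n} \le 2\eps.
\]

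For the first term, I would introduce $v := u - w$, which is entrywise nonnegative by the first condition of saturation. Writing $v = v_G + v_B$, the entrywise bound $v \le u$ gives $\norm{v_B}_1 \le |B|/n \le \eps$, and the second condition of saturation yields $\norm{v_G}_1 \le \norm{v_B}_1 + c \le \eps + c$. Hence $\norm{v}_1 \le 2\eps + c$ and $\norm{w}_1 = 1 - \norm{v}_1 \ge 1 - 2\eps - c$. If this lower bound is nonpositive, then $6\eps + 2c \ge 2 \ge \norm{\tw - \sw_G}_1$ trivially; otherwise, using $w_i = 1/n - v_i$ I get the identity
\[
\tw_i - \frac{1}{n} = \frac{\norm{v}_1/n - v_i}{\norm{w}_1}, \qquad \text{so } \norm{\tw - u}_1 = \frac{1}{\norm{w}_1}\sum_{i=1}^n\Abs{v_i - \frac{\norm{v}_1}{n}}.
\]

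The crux of the argument—and the only nontrivial step—is showing $\sum_{i}|v_i - \bar v| \le 2\norm{v}_1(1 - \norm{v}_1) = 2\norm{v}_1\norm{w}_1$, where $\bar v := \norm{v}_1/n$. This is a bounded-variance-style inequality that exploits the coordinate constraint $v_i \in [0, 1/n]$ inherited from $w \le u$: because $v \mapsto |v - \bar v|$ is convex in each coordinate, the sum is maximized over the polytope $\{v \in [0, 1/n]^n : \sum v_i = \norm{v}_1\}$ at a vertex, where all but possibly one $v_i$ take values in $\{0, 1/n\}$. Enumerating the two cases at such a vertex (whether the single fractional coordinate lies above or below $\bar v$) gives the sum as either $2s(1 - k/n \cdot \textup{const})$ or $2k(1-s)/n$ for appropriate integer $k \approx n\norm{v}_1$, and in both cases the quantity is at most $2\norm{v}_1(1 - \norm{v}_1)$.

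Plugging this back cancels the $\norm{w}_1$ in the denominator, yielding $\norm{\tw - u}_1 \le 2\norm{v}_1 \le 4\eps + 2c$, and combining with the bound on $\norm{u - \sw_G}_1$ delivers $\norm{\tw - \sw_G}_1 \le 2\eps + (4\eps + 2c) = 6\eps + 2c$. The main obstacle is just the variance bound in the third paragraph; the rest is bookkeeping from the definitions.
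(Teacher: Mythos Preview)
Your proof is correct, but it takes a noticeably more laborious route than the paper's. The paper inserts \emph{two} intermediate points instead of one, using the three-term triangle inequality
\[
\norm{\tw - \sw_G}_1 \le \norm{\tw - w}_1 + \norm{w - \tfrac 1 n \one}_1 + \norm{\tfrac 1 n \one - \sw_G}_1,
\]
and each term is then immediate from the definitions: $\norm{\tw - w}_1 = 1 - \norm{w}_1 \le 2\eps + c$ (since $\tw \ge w$ entrywise), $\norm{w - \tfrac 1 n \one}_1 = \norm{v}_1 \le 2\eps + c$ directly from saturation, and $\norm{\tfrac 1 n \one - \sw_G}_1 \le 2\eps$ as you also computed. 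No optimization over a polytope is needed.

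By going through only the single intermediate $u = \tfrac 1 n \one$, you are forced to bound $\norm{\tw - u}_1 = \tfrac{1}{\norm{w}_1}\sum_i |v_i - \bar v|$, and the naive estimate $\sum_i |v_i - \bar v| \le 2\norm{v}_1$ leaves an unwanted $1/\norm{w}_1$ factor. Your sharper bound $\sum_i |v_i - \bar v| \le 2\norm{v}_1\norm{w}_1$ (valid because each $v_i \in [0,1/n]$) is exactly what is needed to cancel it, and your vertex-enumeration argument is correct: at a vertex with $k$ coordinates at $1/n$ and one fractional coordinate $\alpha$, the sum equals $2s(n-k-1)/n$ when $\alpha \ge \bar v$ and $2k(1-s)/n$ when $\alpha < \bar v$, and both are $\le 2s(1-s)$ using $k/n \le s \le (k+1)/n$. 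So the argument goes through, but the paper's extra triangle-inequality split sidesteps this variance-type lemma entirely.
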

\begin{proof}
By the definition of saturation, since there is only $\eps$ mass to remove from the coordinates of $B$ on $\frac 1 n \one$, clearly $\norm{w}_1 \ge 1 - 2\eps - c$. By the triangle inequality, we have
\[\norm{\tw - \sw_G}_1 \le \norm{\tw - w}_1 + \norm{w - \tfrac 1 n \one}_1 + \norm{\tfrac 1 n \one - \sw_G}_1. \]
By definition of $\tw$, $\norm{\tw - w}_1 = 1 - \norm{w}_1 \le 2\eps + c$. By saturation, $\norm{w - \tfrac 1 n \one}_1 \le 2\eps + c$. Finally, since $|G| \ge (1 - \eps)n$, $\norm{\tfrac 1 n \one - \sw_G}_1 \le 2\eps$. Combining these pieces yields the claim.
\end{proof}

\subsection{Our statistical models}\label{ssec:statmodel}

In this paper, we provide provable guarantees for optimization problems captured by the following statistical model. 

\begin{model}[stochastic optimization in the strong contamination model]\label{assume:huber} 
For $\dist_f$ a distribution over functions $f:\R^d \to \R$, our goal is to optimize $\E_{f \sim \dist_f}[f(\theta)]$.
We are given access to $n$ samples $\{f_i\}_{i \in [n]}$ produced as follows: 
\begin{enumerate}
\item  Functions $\{\tilde{f}_i\}_{i \in [n]}$ are drawn independently from $\dist_f$.
\item An arbitrary subset $B \subset [n]$ of the samples is replaced with arbitrary functions from $\supp(\dist_f)$.
\item For each $i \in [n]$, if $i \in B$ we observe $f_i$ as the corrupted sample, otherwise, we observe $f_i = \tilde{f}_i$. 
\end{enumerate}
We call $B$ the \emph{corrupted} samples. When $|B| = \eps n$, we say $\{f_i\}_{i \in [n]}$ is drawn \emph{$\eps$-corrupted from $\dist_f$}.
\end{model}

Throughout we use the convention that $G \cup B = [n]$ is the bipartition of sample coordinates with $B$ the corrupted samples and $G$ the ``good'' samples.
For simplicity, we assume throughout that $\eps = \frac{|B|}{n}$ is smaller than some globally fixed constant. We will also frequently use the notation $g_i$ to mean $\nabla f_i$ for all $i \in [n]$. When $\dist_f$ is clear from context, we denote the ``true average function'' by
\[F^\star(\theta) \defeq \E_{f \sim \dist_f}\Brack{f(\theta)}.\]
For $w \in \Delta^n$ and functions $\{f_i\}_{i \in [n]}$, the (unnormalized) weighted empirical average function is
\begin{equation}\label{eq:fwdef}F_w(\theta) \defeq \sum_{i \in [n]} w_i f_i(\theta).\end{equation}
We will use $w^\star_G$ to denote the uniform distribution over $G$, $w^\star_G \defeq \frac{1}{|G|}\one_G$, and we use $F_G$ as shorthand for the function $F_{w^\star_G}$. The goal of robust parameter estimation is to estimate the true optimizer, which we always denote by $\theta^\star \defeq \argmin_{\theta \in \R^d} F^\star(\theta)$. For example, the problem estimating the mean of $\dist$ can be expressed by choosing $f_i(\theta) = \half\norm{\theta - X_i}_2^2$ for $X_i \sim \dist$. In the uncorrupted setting (i.e.\ $\eps = 0$), a typical strategy (given reasonable regularity assumptions on $\dist_f$) is to choose the estimator $\theta_G \defeq \argmin_{\theta \in \R^d} F_G(\theta)$. The challenge is to obtain comparable estimation performance to $\theta_G$ which is robust to an $\eps$-fraction of unknown corruptions. 

Our focus in this paper is optimizing {\em generalized linear models}. In particular, throughout we will work only with $\dist_f$ of the following form.
\begin{model}[generalized linear model]\label{eq:glmdef}
A {\em generalized linear model} is a distribution over functions $f_i:\R^d \to \R$ which is defined by a joint distribution $\dist_{Xy}$ over pairs $\{X_i, y_i\} \in \R^d \times \R$ and a \emph{link function} $\gamma: \R^2 \to \R$, so that samples $f_i \sim \dist_f$ are generated as
\begin{equation}f_i(\theta) \defeq \gamma\Par{\inprod{X_i}{\theta}, y_i}, \quad \text{ for } (X_i,y_i) \sim \dist_{Xy}. \end{equation}
Note that observing $\{f_i\}_{i \in [n]}$ is equivalent to observing the dataset $\{X_i, y_i\}_{i \in [n]}$ when $\gamma$ is known. 
\end{model}

For instance, when $\gamma(v, y) = \half (v - y)^2$, this is the problem of (statistical) linear regression. Further, when $\gamma(v, y) = \log(1 + \exp(-vy))$, our problem is logistic regression, and when $\gamma(v, y) = \max(0, 1 - vy)$, it is fitting a support vector machine. We refer to the $X$ and $y$ marginals over $\dist_{Xy}$ respectively by $\dist_X$ and $\dist_y$, and we denote $\msigs \defeq \E_{X \sim \dist_X}\Brack{XX^\top}$ when $\dist_X$ is clear from context. 

\subsection{Linear regression}\label{ssec:severassume}

In Section~\ref{sec:sever} and (part of) Section~\ref{sec:agd}, we develop algorithms for the well-studied special case of the generalized linear model, Model~\ref{eq:glmdef} wherein $\gamma(v, y) = \half(v - y)^2$, i.e.\ a statistical variant of linear regression. We obtain guarantees under the following model and regularity assumptions for $\dist_{Xy}$.
\begin{model}[distributional regularity for linear regression]\label{assume:linreg}
Given distributions $\dist_X$ and $\dist_\delta$ over $\R^d,\R^d$ respectively, the distribution $\dist_{Xy}$ over $\R^d \times \R$ is sampled as follows:
	for an underlying vector $\st \in \R^d$, independently sample $X \sim \dist_X$ and $\delta \sim \dist_\delta$, and set $y \gets \inprod{\st}{X} + \delta$.
Further, $\dist_X$ and $\dist_\delta$ satisfy the following regularity assumptions.
\begin{enumerate}
	\item For $\msigs = \E_{\dist_X} [XX^\top]$ and $0 < \mu < L$, we have $\mu \id \preceq \msigs \preceq L \id$.
	\item $\dist_X$ is $\ctf$-hypercontractive for a constant $\ctf$.
	\item $\dist_\delta$ is a $\ctf$-hypercontractive distribution with mean zero and variance $\le \sigma^2$.
\end{enumerate}
\end{model}
For $\{(X_i, y_i)\}_{i \in [n]}$ (in particular, overloading to include $i \in B$), we use the notation $\delta_i \defeq y_i - \inprod{X_i}{\st}$. We also use the following notation when discussing linear regression:
\begin{equation}\label{eq:fgregdef}f_i(\theta) \defeq \half \Par{\inprod{X_i}{\theta} - y_i}^2,\; g_i(\theta) \defeq \nabla f_i(\theta) = X_i\Par{\inprod{X_i}{\theta} - y_i}.\end{equation}
We will denote the condition number of $\msigs$ by $\kappa \defeq \frac L \mu$ throughout. Under Model~\ref{assume:linreg}, it is immediate from the first-order optimality condition that for
\[F^\star(\theta) \defeq \E_{X, y \sim \dist_{Xy}} \Brack{\half \Par{\inprod{X}{\theta} - y}}^2,\]
the optimizer $\argmin_{\theta \in \R^d} F^\star(\theta)$ is exactly $\st$.

 In our setting, following the description in Section~\ref{ssec:statmodel} we independently draw $\{(X_i, y_i)\}_{i \in G} \sim \dist_{Xy}$ for $|G| = (1 - \eps) n$ and observe $\{(X_i, y_i)\}_{i \in [n]}$ where $[n] = G \cup B$ and $\{(X_i, y_i)\}_{i \in B}$ are arbitrarily chosen. We will frequently refer to $\{X_i\}_{i \in [n]}$ as $\mx \in \R^{n \times d}$. Under Model~\ref{assume:linreg}, recent work \cite{BakshiP21} obtained the following results.

\begin{proposition}[\cite{BakshiP21}, Theorem 1.7, Theorem 1.9, Theorem 1.2]\label{prop:optimallinreg}
For Models~\ref{assume:huber} and~\ref{assume:linreg}, the minimax optimal error rate for estimators $\hatt$ is
\[\norm{\hatt - \st}_{\msigs} = O\Par{\sigma \eps^{\frac 3 4}}.\]
When the distribution $\dist_X$ is further \emph{certifiably hypercontractive} in the sum-of-squares proof system, there is a $\textup{poly}(d)$-time estimator requiring $\textup{poly}(d)$ samples achieving this rate with high probability. Moreover, without the hypercontractivity condition in Model~\ref{assume:linreg}, even when $\mu, L = \Theta(1)$ it is information-theoretically impossible to attain an error rate depending polynomially on $\eps$.
\end{proposition}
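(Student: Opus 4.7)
The proposition bundles three distinct claims — a minimax lower bound of $\Omega(\sigma\eps^{3/4})$, a matching polynomial-time upper bound under certifiable hypercontractivity, and an impossibility result without any higher-moment assumption — so I would attack each separately and then note where the real work lies.

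For the minimax lower bound I would invoke a two-point (Le Cam) argument. Fix a simple $\ctf$-hypercontractive $\dist_X$ with $\msigs = \id$ (e.g., uniform on $\{\pm e_j\}_{j \in [d]}$). Construct two regressors $\theta_0^\star, \theta_1^\star$ at Mahalanobis distance $r = \Theta(\sigma\eps^{3/4})$, and for each build a centered, $\ctf$-hypercontractive noise distribution $\dist_\delta^{(b)}$ with variance at most $\sigma^2$ such that after sampling $(X, \inprod{\theta_b^\star}{X} + \delta)$ the resulting joint laws $\dist_{Xy}^{(b)}$ are within total variation $\eps$. This is arranged by placing a small ``disguise'' atom of mass $\eps$ in $\dist_\delta^{(0)}$ which, under the shifted regressor, can be reinterpreted as a draw from $\dist_\delta^{(1)}$; the key constraint is that a $4$th-moment-bounded atom of mass $\eps$ can shift the conditional mean by at most $M$ with $M^4\eps \lesssim \sigma^4$, giving exactly the $\eps^{3/4}$ scaling. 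Total-variation indistinguishability after $\eps$-corruption then produces the claimed lower bound. For the impossibility result (third claim), I would run the same template with $\dist_\delta^{(1)}$ perturbed by a single heavy atom at magnitude $M$ whose mass-$\eps$ contribution to the variance is $\eps M^2$; dropping the hypercontractivity constraint lets me push $M \to \infty$ while keeping the variance $O(\sigma^2)$, so the regressor shift, and hence the minimax error, grows without bound in terms of any fixed polynomial in $\eps$.

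For the SoS upper bound I would design a low-degree identifiability proof of the schematic form: for any weights $w$ close in $\ell_1$ to the uniform distribution on the good set and any candidate $\theta$ making the $w$-weighted empirical gradient small, one has $\norm{\theta - \st}_{\msigs}^2 \lesssim \sigma^2\eps^{3/2}$. The polynomial ingredients are (i) the stationarity identity $\sum_i w_i X_i(\inprod{X_i}{\theta} - y_i) = 0$ split across $G$ and $B$; (ii) the certifiable hypercontractivity axiom $\E[\inprod{v}{X}^4] \le \ctf \E[\inprod{v}{X}^2]^2$, used to absorb bad-sample contributions via Cauchy--Schwarz against the $\eps$-sized indicator of $B$; and (iii) the analogous $4$th-moment bound on $\delta$. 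Feeding this into the sum-of-squares hierarchy at constant degree yields a pseudo-distribution over $(w,\theta)$, and rounding by $\theta \defeq \widetilde{\E}[\theta]$ recovers an estimator with the promised Mahalanobis error; the runtime and sample bounds are then $\textup{poly}(d)$ by the standard SoS SDP size calculus.

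The main obstacle is the SoS identifiability proof in the upper bound: one must chain hypercontractivity, gradient stationarity, and the outlier-weight constraints into a single low-degree polynomial inequality, and squeezing out the tight $\eps^{3/4}$ rate (rather than a looser $\sqrt\eps$) forces careful balancing of the Cauchy--Schwarz steps against the $4$th-moment axioms on both $X$ and $\delta$. The two lower-bound constructions, by contrast, are comparatively routine two-point arguments once the atom magnitudes are tuned against the moment budget.
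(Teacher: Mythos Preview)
The paper does not prove this proposition at all: it is stated as a citation of Theorems~1.2, 1.7, and 1.9 of \cite{BakshiP21}, with no accompanying proof, and the surrounding text only uses it to contextualize the later contributions. So there is no ``paper's own proof'' to compare your proposal against.

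Your sketch is a reasonable outline of how the three imported claims are established in \cite{BakshiP21} --- a two-point Le Cam construction for the $\Omega(\sigma\eps^{3/4})$ lower bound, an SoS identifiability-and-rounding argument for the matching upper bound under certifiable hypercontractivity, and a heavy-atom construction for the impossibility without a fourth-moment assumption --- but verifying the details (in particular whether your Cauchy--Schwarz chaining actually extracts the sharp $\eps^{3/4}$ rate rather than $\sqrt{\eps}$, and whether your two-point noise distributions can be made simultaneously $\ctf$-hypercontractive and $\eps$-close in total variation at the claimed separation) would require checking against \cite{BakshiP21}, not the present paper.
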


The algorithmic result of \cite{BakshiP21} (and all known techniques with error rate $\ll \sqrt{\eps}$) crucially requires that the distributions are sum-of-squares certifiably hypercontractive, which is a stronger assumption than (standard) hypercontractivity.
There is evidence that the problem of certifying $2 \to 4$ hypercontractivity is computationally intractable in general (under e.g.\ the small-set expansion hypothesis, see \cite{BBHKSZ,BGGLT}).
Even for certifiably hypercontractive distributions, known algorithms require use spectral estimators of higher-order moment matrices, and thus more samples and increased runtime complexity.
Hence, error $\sqrt{\eps}$ is a standing barrier for fast algorithms.

 In Section~\ref{sec:sever}, whenever we discuss robust linear regression we operate in Models~\ref{assume:huber} and~\ref{assume:linreg}. These assumptions imply that the data $\{X_i,y_i\}_{i \in [n]}$ will satisfy the following deterministic conditions with probability $\frac 9 {10}$. For convenience we work with these deterministic conditions directly in our proofs.

\begin{assumption}[deterministic regularity for linear regression]\label{assume:linregdet}
Let $\eps$ be sufficiently small, and let $r \in (0,\eps^2)$.
Assume $\{(X_i, y_i)\}_{i \in [n]} \subset \R^d \times \R$ is \emph{$(\eps, r)$-good for linear regression} (or $(\eps, r)$-good if context is clear), which means there is a partition $[n] = G \cup B$ with $|G| \geq (1 - \eps) n$ which satisfies:
\begin{enumerate}
	\item For any $w \in \Delta^n$ with $\norm{w_G}_1 \ge 1 - 4\eps$, $\half \msigs \preceq \Cov_{w_G}(\mx) \preceq \frac 3 2 \msigs$.
	\item There is a constant $\cest$ such that for all $\theta \in \R^d$, there exists a $G' \subseteq G$ satisfying $|G'| \geq (1 - r) |G|$ such that for all $\eps$-saturated $w \in \Delta^n$, if we let $\tw \defeq \frac{w_{G'}}{\norm{w_{G'}}_1}$,
	\begin{align}\norm{\nabla F_{\tw}(\theta) - \nabla F^\star(\theta)}_2 \le \cest \sqrt{L \eps} \Par{\sigma + \norm{\theta - \st}_{\msigs}}, \label{eq:linreg1}\\  
		\normop{\Cov_{\tw}\Par{\Brace{g_i(\theta)}_{i \in G'}}} \le \cest L \Par{\norm{\theta - \st}_{\msigs}^2 + \sigma^2}.\label{eq:linreg2}
	\end{align}
	\item There is a constant $\cub$ such that 
	\[\sum_{i \in G} \frac{1}{|G|} f_i(\st) = \frac{1}{2|G|} \sum_{i \in G} \Par{\inprod{X_i}{\st} - y_i}^2 \le \cub \sigma^2.\]
\end{enumerate}
\end{assumption}

We defer the proof of the following claim, which establishes the probabilistic validity of Assumption~\ref{assume:linregdet} (up to adjusting constants in definitions) under the statistical Models~\ref{assume:huber} and~\ref{assume:linreg}, to Appendix~\ref{app:prelims-deferred}. 

\begin{restatable}{proposition}{restatelinregok}\label{prop:assumelinregok}
Let $\alpha \ge 1$ and let $\eps > 0$ be sufficiently small. 
Let $\{(X_i, y_i)\}_{i \in [n]} \subset \R^d \times \R$ be an $\eps$-corrupted set of samples from a distribution $\dist_{Xy}$ as in Model~\ref{assume:linreg}.
Then, if 
\[
	n = O \left( \frac{d \alpha^2 \log d}{\eps^4} + 	 \frac{ d^2 \alpha^{1.5} \log (d / \eps)}{\eps^3} \right) \; ,
\]
the set $\{(X_i, y_i)\}_{i \in [n]}$ is $(2\eps, \frac{\eps^2}{\alpha})$-good for linear regression with probability at least $\frac 9 {10}$.
\end{restatable}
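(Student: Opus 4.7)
The plan is to verify each of the three clauses of Assumption~\ref{assume:linregdet} separately, in each case establishing a concentration statement for the uncorrupted samples and then lifting it to a resilience statement under saturated reweightings. Throughout, we exploit the $\ctf$-hypercontractivity of $\dist_X$ and $\dist_\delta$ (combined with the arithmetic identity $g_i(\theta) = X_i X_i^\top (\theta - \st) - \delta_i X_i$) to reduce each statement to a finite list of moment-concentration inequalities. The two terms in the sample complexity will reflect two different regimes: the $\tilde O(d \alpha^2 / \eps^4)$ term comes from vector (gradient-mean) concentration, while the $\tilde O(d^2 \alpha^{1.5} / \eps^3)$ term comes from operator-norm concentration of $d \times d$ random matrices arising from the gradient covariance.

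For Clause~1 (covariance stability under reweighting), I would first show via standard matrix concentration for $\ctf$-hypercontractive random vectors that the uniform empirical covariance $\sum_{i \in G} \frac{1}{|G|} X_i X_i^\top$ is within a $(1 \pm \tfrac{1}{4})$ multiplicative factor of $\msigs$ once $n = \tilde\Omega(d/\eps^2)$. Hypercontractivity then yields a \emph{resilience} bound: any reweighting $w$ with $\|w_G\|_1 \ge 1 - 4\eps$ can perturb the covariance in operator norm by at most $O(\sqrt{\eps \ctf}) \cdot L$, using Hölder's inequality applied to the fourth-moment bound $\E_{X \sim \dist_X}[(v^\top X)^4] \le \ctf (v^\top \msigs v)^2$. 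For $\eps$ sufficiently small this gives the stated $\frac{1}{2} \msigs \preceq \Cov_{w_G}(\mx) \preceq \frac{3}{2} \msigs$ sandwich.

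For Clause~2, I would construct $G'$ by removing from $G$ at most an $r$-fraction of indices on which certain truncations (e.g.\ $\|X_i\|_2^2$, $\delta_i^2$, and $\inprod{X_i}{\theta-\st}^2$ via a net argument) are too large. Then, decomposing
\[
	\nabla F_{\tw}(\theta) - \nabla F^\star(\theta) = \Par{\Cov_{\tw}(\mx|_{G'}) - \msigs}(\theta - \st) - \sum_{i \in G'} \tw_i \delta_i X_i + \E[\delta X],
\]
the first-order concentration \eqref{eq:linreg1} follows by combining Clause~1 applied to $G'$ with standard vector concentration of $\sum_{i \in G'} \frac{1}{|G'|} \delta_i X_i$ (this is the source of the $\tilde O(d\alpha^2/\eps^4)$ term, via $\dist_\delta$'s bounded variance and $\dist_X$'s second moment). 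Expanding the gradient covariance as
\[
	\Cov_{\tw}\Par{\{g_i(\theta)\}_{i \in G'}} = \sum_{i \in G'} \tw_i (X_i X_i^\top u - \delta_i X_i)(X_i X_i^\top u - \delta_i X_i)^\top, \quad u \defeq \theta - \st,
\]
its operator norm bound \eqref{eq:linreg2} reduces to operator-norm concentration of (a) the fourth-moment tensor $\sum_{i \in G'} \tw_i (u^\top X_i)^2 X_i X_i^\top$ against $O(L \|u\|_{\msigs}^2) \cdot \id$ (supplying the $d^2$ term), (b) the cross term $\sum_{i \in G'} \tw_i \delta_i (u^\top X_i) X_i X_i^\top$, and (c) the noise-weighted second moment $\sum_{i \in G'} \tw_i \delta_i^2 X_i X_i^\top$ against $\sigma^2 L \cdot \id$. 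For each, hypercontractivity plus the truncation in $G'$ yields a matrix Bernstein-type bound, and the affine/quadratic dependence on $u$ means the ``for all $\theta$'' statement follows from concentration at a single $u$ together with the saturation resilience (again via Hölder on the controlled fourth moments). Clause~3 is immediate from a Markov/Chernoff bound on $\sum_{i \in G} \delta_i^2$ using $\Var(\delta^2) \le (\ctf - 1)\sigma^4$.

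The main obstacle will be the uniform-over-$\theta$ and uniform-over-saturated-$w$ component of Clause~2: naive $\eps$-netting loses factors of $d$, and naive union bounding over $w$ is infeasible since $w$ ranges over a continuous polytope. The resolution is to make \emph{both} quantifiers purely statements about reweighted moments of polynomials in $X_i, \delta_i$: since saturation forces $\tw$ to be entrywise $O(\tfrac{1}{|G|})$, hypercontractivity implies any such reweighted fourth moment of $\inprod{X_i}{u}$ remains $O(\ctf)(u^\top \msigs u)^2$, and the same applies to the noise polynomials. This decouples the $w$-quantifier from $\theta$, leaves only moment concentration for a fixed empirical distribution on $G$, and controls the $\theta$-dependence through polynomial structure rather than netting --- which is precisely how the slightly super-linear $d^2$ sample complexity, rather than a $d^{O(1)}$ blowup, arises.
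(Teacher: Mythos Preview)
Your high-level decomposition is sound and Clauses~1 and~3 are handled as in the paper. The genuine gap is in your treatment of the ``for all $\theta$, exists $G'$'' quantifier in Clause~2, and specifically in the sentence ``since saturation forces $\tw$ to be entrywise $O(\tfrac{1}{|G|})$, hypercontractivity implies any such reweighted fourth moment of $\inprod{X_i}{u}$ remains $O(\ctf)(u^\top \msigs u)^2$.'' Hypercontractivity is a \emph{population} statement; it does not by itself bound the empirical quantity $\tfrac{1}{|G|}\sum_{i\in G}\inprod{X_i}{u}^4$ uniformly over $u$. Without control on eighth moments you cannot concentrate the empirical fourth moment directly, and concentrating the full $d^2\times d^2$ tensor $\tfrac{1}{n}\sum_i (X_iX_i^\top)^{\otimes 2}$ after only the $\|X_i\|_2^2$ truncation gives individual matrices of norm $\tilde O(L^2 d^2/\eps^4)$, which yields sample complexity $\tilde O(d^2/\eps^4)$ rather than the claimed $\tilde O(d^2/\eps^3)$. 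So the ``polynomial structure rather than netting'' route, as stated, does not recover the bound.

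The paper's proof does \emph{not} avoid netting. The key device you are missing is a VC-dimension argument: for each $u$ one defines the truncation set $H_u=\{x:\inprod{x}{u}^2\ge 10\ctf^{1/2}/\eps\}$, observes that $\{H_u\}$ has VC dimension $O(d)$ (it is an intersection of two parallel halfspaces), and invokes the VC inequality to conclude that uniformly over all $u$ at most an $O(\eps^2)$-fraction of samples land in $H_u$. This is what produces the $\theta$-dependent $G'$ and is the true source of the $\tilde O(d\alpha^2/\eps^4)$ term---not vector concentration of $\sum_i \delta_i X_i$ as you suggest. Once this $u$-dependent truncation is in place, the truncated matrices $\inprod{X_i}{u}^2 X_iX_i^\top$ have norm $O(Ld/\eps^3)$ rather than $O(L^2d^2/\eps^4)$, and matrix Chernoff union-bounded over a $\mathrm{poly}(\eps/d)$-net of directions $u$ gives the $\tilde O(d^2/\eps^3)$ term. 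Finally, \eqref{eq:linreg1} is not proved separately but is deduced from \eqref{eq:linreg2} via the standard robust-mean-estimation argument (bounded covariance of gradients implies their weighted mean is close to the true mean).
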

\begin{remark}
We remark that the gaurantees of Proposition~\ref{prop:assumelinregok} may be recovered with $n = O(d \log d/\eps^4 + d^2 \log d \log\frac{1}{\eps})$ samples when the $X_i$ are further assumed to be subgaussian.
\end{remark}

We observe that Assumption~\ref{assume:linregdet} implies the following useful bound.

\begin{lemma}\label{lem:empiricialminclose}
Let $w \in \Delta^n$ be $\eps$-saturated with respect to bipartition $[n] = G \cup B$, let $G^\star \subseteq G$ be the subset in Assumption~\ref{assume:linregdet}.2 corresponding to $\st$, and let $\tw = \frac{w_{G^\star}}{\norm{w_{G^\star}}_1}$. Let $\theta_{\tw} = \argmin_{\theta \in \R^d} F_{\tw}(\theta)$ be the empirical minimizer of $F_{\tw}$. Then,
\[\norm{\theta_{\tw} - \st}_{\msigs} \le 4\cest \sigma \sqrt{\kappa\eps}.\]
\end{lemma}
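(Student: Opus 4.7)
The plan is to exploit the quadratic structure of $F_{\tw}$, using Assumption~\ref{assume:linregdet}.1 to pin down its Hessian spectrum and Assumption~\ref{assume:linregdet}.2 (specialized to $\theta = \st$) to bound its gradient at the population optimum. The overall argument is a two-step estimate: control the Hessian and the gradient norm separately, then combine via the closed-form minimizer formula for a quadratic.

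First, I would observe that by construction $\tw$ is a probability distribution supported on $G^\star \subseteq G$, so $\tw_G = \tw$ and $\norm{\tw_G}_1 = 1 \ge 1 - 4\eps$. Thus Assumption~\ref{assume:linregdet}.1 applies (with $w$ set to $\tw$), yielding $\tfrac 1 2 \msigs \preceq \Cov_{\tw}(\mx) \preceq \tfrac 3 2 \msigs$. Since $F_{\tw}(\theta) = \half \sum_i \tw_i(\inprod{X_i}{\theta} - y_i)^2$ is a quadratic with constant Hessian equal to $\Cov_{\tw}(\mx) \succeq \half \mu \id$, it is strongly convex, and its unique minimizer satisfies the closed-form identity
\[\theta_{\tw} - \st = -\Cov_{\tw}(\mx)^{-1} \nabla F_{\tw}(\st).\]

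Next I would bound $\nabla F_{\tw}(\st)$. Applying Assumption~\ref{assume:linregdet}.2 at $\theta = \st$ (with the $G^\star$ promised by that assumption for this choice of $\theta$) and using $\nabla F^\star(\st) = 0$---which holds since $\st$ minimizes the convex population loss $F^\star$, as noted following Model~\ref{assume:linreg}---collapses the bound to
\[\norm{\nabla F_{\tw}(\st)}_2 \le \cest\sqrt{L\eps}\Par{\sigma + \norm{\st - \st}_{\msigs}} = \cest\sqrt{L\eps}\,\sigma.\]

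Finally I would combine the pieces. Writing $\ma \defeq \Cov_{\tw}(\mx)$, the sandwich $\half \msigs \preceq \ma \preceq \tfrac 3 2 \msigs$ gives $\msigs \preceq 2\ma$, so that for every $v$ we have $v^\top \ma^{-1} \msigs \ma^{-1} v = (\ma^{-1}v)^\top \msigs (\ma^{-1}v) \le 2 v^\top \ma^{-1} v$, hence $\ma^{-1} \msigs \ma^{-1} \preceq 2\ma^{-1} \preceq 4(\msigs)^{-1}$; combined with $(\msigs)^{-1} \preceq \frac 1 \mu \id$ this yields
\[\norm{\theta_{\tw} - \st}_{\msigs}^2 = \nabla F_{\tw}(\st)^\top \ma^{-1} \msigs \ma^{-1} \nabla F_{\tw}(\st) \le 4 \norm{\nabla F_{\tw}(\st)}_{(\msigs)^{-1}}^2 \le \frac{4}{\mu}\norm{\nabla F_{\tw}(\st)}_2^2 \le 4 \cest^2 \kappa \eps \sigma^2,\]
so $\norm{\theta_{\tw} - \st}_{\msigs} \le 2\cest \sigma\sqrt{\kappa\eps}$, comfortably within the stated bound. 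There is no real obstacle; the argument is essentially a black-box composition of the two deterministic guarantees, and the only sanity check is verifying that Assumption~\ref{assume:linregdet}.1 applies to $\tw$, which is immediate since $\tw$ is supported entirely inside $G$.
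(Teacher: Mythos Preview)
Your proposal is correct and follows essentially the same approach as the paper: bound $\norm{\nabla F_{\tw}(\st)}_2$ via Assumption~\ref{assume:linregdet}.2 at $\theta = \st$ (using $\nabla F^\star(\st)=0$), use Assumption~\ref{assume:linregdet}.1 to control the Hessian of $F_{\tw}$ in the $\msigs$ norm, and combine. The only cosmetic difference is that you invoke the explicit quadratic minimizer formula $\theta_{\tw}-\st=-\Cov_{\tw}(\mx)^{-1}\nabla F_{\tw}(\st)$ and do the matrix sandwich directly, whereas the paper phrases the same step as ``$F_{\tw}$ is $\tfrac12$-strongly convex in $\norm{\cdot}_{\msigs}$'' and appeals to the generic strong convexity consequence; both routes yield the same (actually a slightly sharper) constant.
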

\begin{proof}
Applying Assumption~\ref{assume:linregdet}.2 with $\theta = \st$, we have that
\[\norm{\nabla F_{\tw}(\st)}_2 \le \cest \sqrt{L\eps}\sigma.\]
However, applying Assumption~\ref{assume:linregdet}.1 on the weights $w_{G^\star}$ implies that $F_{\tw}$ is $\half$-strongly convex in the $\msigs$ norm (since $w_{G^\star}$ removes at most $\eps^2$ mass from $w_G$) and minimized by $\theta_{\tw}$, and hence by using consequences of strong convexity and $(\msigs)^{-1} \preceq \frac 1 \mu \id$,
\[\norm{\nabla F_{\tw}(\st)}_{2} \ge \sqrt{\mu} \norm{\nabla F_{\tw}(\st)}_{(\msigs)^{-1}} \ge \frac{\sqrt{\mu}}{4} \norm{\theta_{\tw} - \st}_{\msigs}.\]
\end{proof}

Finally, in Section~\ref{sec:agd}, when we develop an alternative approach to linear regression based on the robust gradient descent framework, we require a slightly weaker set of distributional assumptions and deterministic implications, which we now state. 

\begin{model}[distributional regularity for linear regression, gradient descent setting]\label{assume:linregweak}
	Given distributions $\dist_X$ and $\dist_\delta$ over $\R^d,\R^d$ respectively, the distribution $\dist_{Xy}$ over $\R^d \times \R$ is sampled as follows:
	for an underlying vector $\st \in \R^d$, independently sample $X \sim \dist_X$ and $\delta \sim \dist_\delta$, and set $y \gets \inprod{\st}{X} + \delta$.
	Further, $\dist_X$ and $\dist_\delta$ satisfy the following regularity assumptions.
	\begin{enumerate}
		\item For $\msigs = \E_{\dist_X} [XX^\top]$ and $0 < \mu < L$, we have $\mu \id \preceq \msigs \preceq L \id$.
		\item $\dist_X$ is $\ctf$-hypercontractive for a constant $\ctf$.
		\item $\dist_\delta$ is a distribution with mean zero and variance $\le \sigma^2$.
	\end{enumerate}
\end{model}

The main difference between Model~\ref{assume:linreg} and Model~\ref{assume:linregweak} is that the latter no longer requires hypercontractive noise. This corresponds to the following deterministic assumption.

\begin{assumption}[deterministic regularity for linear regression, gradient descent setting]\label{assume:linregdetweak}
	Let $\eps$ be sufficiently small. For every fixed $\theta \in \R^d$, there is a partition $[n] = G_{\theta} \cup B_{\theta}$ with $|G_{\theta}| \geq (1 - \eps) n$ which satisfies: there is a constant $\cest$ such that for $\tw \defeq \frac{w_{G_\theta}}{\norm{w_{G_\theta}}_1}$,
		\begin{align}\norm{\nabla F_{\tw}(\theta) - \nabla F^\star(\theta)}_2 \le \cest \sqrt{L \eps} \Par{\sigma + \norm{\theta - \st}_{\msigs}},\\  
		\normop{\Cov_{\tw}\Par{\Brace{g_i(\theta)}_{i \in G_\theta}}} \le \cest L \Par{\norm{\theta - \st}_{\msigs}^2 + \sigma^2}.\label{eq:linreg22}
		\end{align}
\end{assumption}

The main difference between Assumption~\ref{assume:linregdet} and Assumption~\ref{assume:linregdetweak} is that the latter provides gradient bounds using a different set $G_{\theta}$ for each $\theta$ (as opposed to the former, which uses the same set $G$ for all $\theta$). The upshot is that the corresponding required sample complexity is lower.

\begin{proposition}[\cite{CherapanamjeriATJFB20}, Lemma 5.5]\label{prop:linregweak}
	Let $\eps > 0$ be sufficiently small. Let $\{(X_i, y_i)\}_{i \in [n]} \subset \R^d \times \R$ be an $O(\eps)$-corrupted set of samples from a distribution $\dist_{Xy}$ as in Model~\ref{assume:linregdetweak}. Then if
	\[n = O\Par{\frac{d\log(d/\eps)}{\eps}},\]
		Assumption~\ref{assume:linregdetweak} holds with probability at least $\frac 9 {10}$.
	\end{proposition}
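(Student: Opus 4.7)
The plan is to establish the two bounds of Assumption~\ref{assume:linregdetweak} by choosing, for each fixed $\theta \in \R^d$, a subset $G_\theta \subseteq [n]$ of cardinality at least $(1-\eps)n$ obtained by excluding the tail of $\{g_i(\theta)\}_{i \in G}$, and then to upgrade pointwise-in-$\theta$ concentration to the uniform-in-$\theta$ statement via a covering argument. I will first analyze the population: under Model~\ref{assume:linregweak} the centered gradient $h_i(\theta) \defeq g_i(\theta) - \nabla F^\star(\theta) = X_i(\inprod{X_i}{\theta - \st} - \delta_i) - \msigs(\theta - \st)$ has mean zero, and using $\ctf$-hypercontractivity of $\dist_X$ together with the independence of $X$ and $\delta$ and the variance bound on $\delta$, one checks that $\normop{\E[h_i(\theta) h_i(\theta)^\top]} = O(L)(\norm{\theta - \st}_{\msigs}^2 + \sigma^2)$ and $\E\norm{h_i(\theta)}_2^4 = O(d L^2)(\norm{\theta - \st}_{\msigs}^2 + \sigma^2)^2$.

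Fix a single $\theta$. Set the truncation radius $\tau(\theta) \defeq C \sqrt{L/\eps}(\norm{\theta - \st}_{\msigs} + \sigma)$ for a sufficiently large constant $C$, and let $G_\theta$ retain the indices $i \in G$ for which $\norm{h_i(\theta)}_2 \le \tau(\theta)$. A Markov bound on the fourth-moment estimate shows the expected discarded fraction is $O(\eps)$, and a standard Chernoff argument makes this hold with high probability, so $|G_\theta| \ge (1 - \eps) n$ after absorbing constants. Matrix Bernstein applied to the bounded truncated gradients then yields, with probability $1 - \exp(-\Omega(d \log(d/\eps)))$, that the uniform averages over $G_\theta$ of $g_i(\theta)$ and of $g_i(\theta) g_i(\theta)^\top$ are within the claimed error tolerances of $\nabla F^\star(\theta)$ and $\E[g_i(\theta) g_i(\theta)^\top]$ respectively. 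To extend to all $\eps$-saturated $w$, observe that $\tw = w_{G_\theta}/\norm{w_{G_\theta}}_1$ differs from the uniform distribution on $G_\theta$ by $O(\eps)$ in total variation by Lemma~\ref{lem:saturatedtv}; since each truncated $h_i(\theta)$ has norm at most $\tau(\theta)$, this perturbation contributes at most $O(\eps \tau(\theta)) = O(\sqrt{L\eps})(\sigma + \norm{\theta - \st}_{\msigs})$ to the mean estimate and $O(\eps \tau(\theta)^2) = O(L)(\sigma^2 + \norm{\theta - \st}_{\msigs}^2)$ to the covariance operator norm, both subsumed into the claimed constants.

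The main obstacle is lifting the pointwise-in-$\theta$ bound to the uniform statement in Assumption~\ref{assume:linregdetweak}, since $\R^d$ is uncountable and a naive union bound is unavailable. The structural observation I will exploit is that $g_i(\theta) = (X_i X_i^\top)\theta - X_i y_i$ is affine in $\theta$, so after subtracting $\nabla F^\star(\theta)$ and rescaling by $\max(\sigma, \norm{\theta - \st}_{\msigs})$ all quantities of interest depend on $\theta$ only through the unit direction $(\theta - \st)/\norm{\theta - \st}_{\msigs}$. Covering the unit sphere with a standard $\eps$-net of size $(O(1)/\eps)^d$, union-bounding the per-$\theta$ failure event over the net, and using the (easily verified) Lipschitz continuity of the truncated empirical averages in $\theta$ to transfer the net-level bound to all $\theta$, the per-point success probability $1 - \exp(-\Omega(d \log(d/\eps)))$ dominates the net size. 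This yields the claim under the sample complexity $n = O(d \log(d/\eps)/\eps)$ stated in the proposition.
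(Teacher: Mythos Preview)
The paper does not give its own proof of this statement; it is quoted directly as Lemma~5.5 of \cite{CherapanamjeriATJFB20}. That said, your sketch has two substantive gaps.

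First, the fourth-moment claim $\E\norm{h_i(\theta)}_2^4 = O(dL^2)(\norm{\theta-\st}_{\msigs}^2 + \sigma^2)^2$ is not available under Model~\ref{assume:linregweak}: that model places only a \emph{variance} bound on $\dist_\delta$, with no higher-moment control (this is precisely the point where Model~\ref{assume:linregweak} weakens Model~\ref{assume:linreg}), so the contribution of $X_i\delta_i$ to $\E\norm{h_i}_2^4$ need not be finite. Falling back to a second-moment Markov bound yields only $\Pr[\norm{h_i(\theta)}_2 > \tau(\theta)] \le \Tr(\Cov(h_i(\theta)))/\tau(\theta)^2 = O(d\eps)$ rather than $O(\eps)$; restoring $O(\eps)$ by inflating $\tau$ with a $\sqrt d$ factor then degrades the matrix Bernstein step and forces $n = \Omega(d^2\log(d/\eps)/\eps)$.

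Second, the net step is not justified as written. The claim that after rescaling by $\max(\sigma,\norm{\theta-\st}_{\msigs})$ everything depends only on the unit direction $(\theta-\st)/\norm{\theta-\st}_{\msigs}$ is false: since $h_i(\theta) = (X_iX_i^\top-\msigs)(\theta-\st) - X_i\delta_i$, the rescaled vector still depends on the scalar ratio $\norm{\theta-\st}_{\msigs}/\sigma$. More seriously, your truncation set $G_\theta$ varies with $\theta$, so the ``truncated empirical averages'' are not even continuous in $\theta$; and if instead you freeze $G_{\theta_0}$ at a net point and vary $\theta$, the Lipschitz constant of $\theta \mapsto \tfrac{1}{|G_{\theta_0}|}\sum_{i\in G_{\theta_0}} g_i(\theta)g_i(\theta)^\top$ scales with $\max_{i\in G_{\theta_0}}\norm{X_i}_2^2$, which a truncation on $\norm{h_i(\theta_0)}_2$ alone does not control (a large $\norm{X_i}_2$ can coexist with small $\norm{h_i(\theta_0)}_2$ via cancellation between $\inprod{X_i}{\theta_0-\st}$ and $\delta_i$). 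A workable route is to separate the $\theta$-independent piece $\delta_i X_i$ (handled once via a good-subset argument as in Lemma~\ref{lem:goodsubset}, using only the second moment of $\delta$) from the $\theta$-dependent piece $\inprod{X_i}{\theta-\st}X_i$ (where hypercontractivity of $\dist_X$ supplies the higher moments needed for uniformity), rather than truncating $h_i$ as a whole.
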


\subsection{Regularity assumptions: Lipschitz and smooth stochastic optimization}\label{ssec:gdassume}

In Section~\ref{sec:agd}, we develop algorithms for the special case of Model~\ref{eq:glmdef} when all $\gamma_{y_i}(v) \defeq \gamma(v, y_i)$, as viewed as a function of its first variable, satisfy
\[\Abs{\gamma_{y_i}'(v)} \le 1,\; 0 \le \gamma_{y_i}''(v) \le 1 \text{ for all } v \in \R,\; i \in [n].\]
In other words, all $\gamma_{y_i}$ are $1$-smooth and $1$-Lipschitz. A canonical example is when all $y_i = \pm 1$ are positive or negative labels, and $\gamma$ is the logistic loss function, $\gamma(v, y) = \log(1 + \exp(- vy))$. In this setting, we will focus on approximating the (population) regularized optimizer,
\begin{equation}\label{eq:regfdef}\streg \defeq \argmin_{\theta \in \R^d}\Brace{F^\star(\theta) + \frac \mu 2 \norm{\theta}_2^2},\text{ where } F^\star(\theta) \defeq \E_{f \sim \dist_f}\Brack{f(\theta)}.\end{equation}
Here, $\mu \in \R_{\ge 0}$ controls the amount of regularization, and is used to introduce some amount of strong convexity in the problem. Following Section~\ref{ssec:statmodel}, the distribution $\dist_f$ over sampled functions is directly dependent on a dataset distribution, $\dist_{Xy}$, through the relationship in Model~\ref{eq:glmdef}. Concretely, we make the following regularity assumptions about the distribution $\dist_{Xy}$ and its induced $\dist_f$.

\begin{model}[distributional regularity for smooth GLMs]\label{model:glmsmooth}
$\dist_{Xy}$, supported on $\R^d \times \R$, its marginals $\dist_X$, $\dist_y$, and its induced function distribution $\dist_f$, have the following properties.
\begin{enumerate}
	\item Letting the second moment matrix of $\dist_X$ be $\msigs$ and $0 < L$, we have $\msigs \preceq L\id$.
	\item There is a link function $\gamma: \R^2 \to \R$, such that for all $y$ in the support of $\dist_y$, $\gamma_y(v) \defeq \gamma(v, y)$ satisfies $\Abs{\gamma_{y}'(v)} \le 1$ and $0 \le \gamma''_y(v) \le 1$ for all $v \in \R$.
	\item The distribution of $f \sim \dist_f$ is generated as follows: for $(X, y) \sim \dist_{Xy}$, $f(\theta) = \gamma(\inprod{X}{\theta}, y)$.
\end{enumerate}
\end{model}

In Section~\ref{sec:lipschitz}, we further develop algorithms for the special case of Model~\ref{eq:glmdef} when all $\gamma_{y_i}(v) \defeq \gamma(v, y_i)$ satisfy only a first-derivative bound,
\[\Abs{\gamma'_{y_i}(v)} \le 1 \text{ for all } v \in \R, i \in [n].\]
In other words, all $\gamma_{y_i}$ are $1$-Lipschitz (but possibly non-smooth). A canonical example is when all $y_i = \pm 1$ are positive or negative labels, and $\gamma$ is the support vector machine loss function (hinge loss), $\gamma(v, y) = \max(0, 1 - vy)$. In this setting, we will focus on approximating the (population) regularized optimizer of the \emph{Moreau envelope},
\begin{gather*}
\stenv \defeq \argmin_{\theta \in \R^d} \Brace{F^\star_\gamma\Par{\theta} + \frac \mu 2 \norm{\theta}_2^2},\text{ where } F^\star(\theta) = \E_{f \sim \dist_f}[f(\theta)], \\
\text{and } \Fsg(\theta) \defeq \inf_{\theta'} \Brace{F^\star(\theta') + \frac{1}{2\lam }\norm{\theta - \theta'}_2^2}.
\end{gather*}
The Moreau envelope is extremely well-studied \cite{ParikhB14}, and can be viewed as a smooth approximation to a non-smooth function. We choose to focus on optimizing the Moreau envelope because it is amenable to acceleration 
techniques, trading off approximation for smoothness.

\begin{model}[distributional regularity for Lipschitz GLMs]\label{model:glmlipschitz}
	$\dist_{Xy}$, supported on $\R^d \times \R$, its marginals $\dist_X$, $\dist_y$, and its induced function distribution $\dist_f$, have the following properties.
	\begin{enumerate}
		\item Letting the second moment matrix of $\dist_X$ be $\msigs$ and $0 < L$, we have $\msigs \preceq L\id$.
		\item There is a link function $\gamma: \R^2 \to \R$, such that for all $y$ in the support of $\dist_y$, $\gamma_y(v) \defeq \gamma(v, y)$ satisfies $\Abs{\gamma_{y}'(v)} \le 1$ for all $v \in \R$.
		\item The distribution of $f \sim \dist_f$ is generated as follows: for $(X, y) \sim \dist_{Xy}$, $f(\theta) = \gamma(\inprod{X}{\theta}, y)$.
	\end{enumerate}
\end{model}

In other words, Model~\ref{model:glmlipschitz} is Model~\ref{model:glmsmooth} without the smoothness assumption. Under the weaker Model~\ref{model:glmlipschitz}, \cite{DiakonikolasKK019} showed that we can make the following simplifying deterministic assumptions about our observed dataset (which also extends to Model~\ref{model:glmsmooth}, as it a superset of conditions).

\begin{assumption}[deterministic regularity for Lipschitz regression]\label{assume:glmdet}
The set $\{(X_i, y_i)\}_{i \in [n]} \subset \R^d \times \R$, and the link function $\gamma: \R^2 \to \R$, have the following properties, for $[n] = G \cup B$.
\begin{enumerate}
	\item Letting $w^\star_G \defeq \frac{1}{|G|} \one_G$, $\Cov_{w^\star_G}(\mx) \preceq \frac 3 2 L \id$.
	\item There is a constant $\cest$ such that for all $\theta \in \R^d$ and all saturated $w \in \Delta^n$, letting $\tw \defeq \frac{w_G}{\norm{w_G}_1}$,
	\begin{align*}
	\norm{\nabla F_{\tw}(\theta) - \nabla F^\star(\theta)}_2 \le \cest \sqrt{L\eps}, \\
	\normop{\Cov_{\tw}\Par{\Brace{g_i(\theta)}_{i \in G}}} \le \cest L.
	\end{align*}
\end{enumerate}
\end{assumption}

\begin{proposition}[\cite{DiakonikolasKK019}, Proposition C.3]\label{prop:glmdetok}
Let $\eps > 0$ be sufficiently small. Let $\{(X_i, y_i)\}_{i \in [n]} \subset \R^d \times \R$ be an $\eps$-corrupted set of samples from a distribution $\dist_{Xy}$ as in Model~\ref{model:glmlipschitz}. Then, if
\[n = O\Par{\frac{d\log\Par{d/\eps}}{\eps}},\]
Assumption~\ref{assume:glmdet} holds with probability at least $\frac 9 {10}$.
\end{proposition}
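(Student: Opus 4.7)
The plan is to verify the two conditions of Assumption~\ref{assume:glmdet} directly from Model~\ref{model:glmlipschitz}. The key unifying observation is that Lipschitzness of the link, $|\gamma_y'(v)| \le 1$, implies the pointwise matrix bound
\[g_i(\theta) g_i(\theta)^\top = \gamma'_{y_i}\Par{\inprod{X_i}{\theta}}^2 X_i X_i^\top \preceq X_i X_i^\top \text{ for all } \theta \in \R^d,\; i \in [n],\]
so any operator-norm control on empirical second moments of the covariates transfers immediately to the gradient covariance, uniformly in $\theta$. By Lemma~\ref{lem:saturatedtv}, the quantifier over saturated $\tw$ reduces to a quantifier over subsets $G' \subseteq G$ of size $\ge (1-O(\eps))|G|$ together with weights close to uniform on $G'$.

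For the first condition of Assumption~\ref{assume:glmdet} and the covariance part of the second, it therefore suffices to show $\normop{\Cov_{\tw}(\mx_{G'})} \le \frac 3 2 L$ uniformly over such subsets. Since $\msigs \preceq L\id$ by assumption, this is a standard robust second-moment estimation fact: after truncating the worst $\eps$-fraction of samples by $\norm{X_i}_2$, a matrix Bernstein / Chernoff inequality applied to $X_i X_i^\top$ gives the bound with $n = O(d \log d /\eps)$ samples, and the display above converts this into the desired gradient-covariance bound for every $\theta$ simultaneously.

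The main technical obstacle is the uniform gradient mean inequality. After the reduction, we must show with high probability that
\[\sup_{\theta \in \R^d}\; \sup_{G' \subseteq G,\, |G'| \ge (1-O(\eps))|G|} \norm{\frac{1}{|G'|} \sum_{i \in G'} g_i(\theta) - \nabla F^\star(\theta)}_2 \le \cest \sqrt{L\eps}.\]
For a fixed direction $v$ and fixed $\theta$, the statistic $\gamma'_{y_i}(\inprod{X_i}{\theta})\inprod{X_i}{v}$ is bounded pointwise by $|\inprod{X_i}{v}|$, whose variance is at most $L$; Bernstein gives pointwise concentration, and the supremum over $(1-O(\eps))$-sized subsets costs an extra $\sqrt{\eps}$ factor via the standard resilience argument (removing the extremal $\eps$-fraction of contributions). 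Uniformity over $\theta$ is the delicate point, because $\gamma_y$ is only Lipschitz (possibly non-smooth), so $\gamma_y'$ may jump and a naive $\theta$-net cannot be Lipschitz-propagated. The remedy is symmetrization combined with the observation that the function class $\Brace{(X,y) \mapsto \gamma_y'(\inprod{X}{\theta})\inprod{X}{v}}_{\theta,v \in \R^d}$ has VC-subgraph dimension $O(d)$, since level sets $\{X : \inprod{X}{\theta} \le t\}$ are halfspaces and $\gamma_y'$ acts through a single scalar threshold. A chaining argument then controls its Rademacher complexity at scale $\sqrt{L\eps}$ with $n = O(d\log(d/\eps)/\eps)$ samples, explaining the logarithmic factor in the stated sample complexity. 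Taking a union bound over the covariance and mean events yields the result with probability $\frac 9 {10}$.
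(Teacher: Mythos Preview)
The paper does not prove this proposition; it is quoted verbatim from \cite{DiakonikolasKK019} (their Proposition~C.3) and used as a black box. So there is no paper proof to compare against, and I will comment on your argument on its own merits.

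Your treatment of the covariance conditions is correct and is exactly the right idea: the pointwise domination $g_i(\theta)g_i(\theta)^\top \preceq X_iX_i^\top$ converts the uniform-in-$\theta$ operator norm bound on gradients into a single bound on the covariate second moment, which is handled by the standard ``large good subset'' argument (Lemma~\ref{lem:goodsubset} here). Likewise, once the covariance bound holds, closeness of $\nabla F_{\tw}(\theta)$ to $\nabla F_G(\theta)$ over all saturated $\tw$ follows from Lemma~\ref{lem:gradidentifiability}, so the quantifier over $\tw$ is not the hard part.

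The gap is in your uniform mean argument, specifically the claim that the class $\{(X,y)\mapsto \gamma_y'(\inprod{X}{\theta})\inprod{X}{v}\}_{\theta,v}$ has VC-subgraph dimension $O(d)$. Your justification (``$\gamma_y'$ acts through a single scalar threshold'') implicitly assumes $\gamma_y'$ is monotone, which does hold here since $\gamma_y$ is convex, but you never invoke convexity, and you yourself flag that $\gamma_y'$ can be discontinuous. More importantly, even granting monotonicity, you are not looking at level sets of $\gamma_y'(\inprod{X}{\theta})$ alone but of the \emph{product} with $\inprod{X}{v}$; VC-subgraph dimension does not in general behave additively under products, so the $O(d)$ claim needs a real argument (e.g.\ via pseudo-dimension of compositions with monotone functions together with a separate net over $v$, or a direct covering-number computation). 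As stated, this step is asserted rather than proved, and it is precisely the step that drives the $\log(d/\eps)$ factor in the sample complexity, so it cannot be skipped.
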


Finally, we make the useful observation that $F^\star$ under Model~\ref{model:glmlipschitz} is also Lipschitz. 

\begin{lemma}\label{lem:sqrtllip}
	Under Model~\ref{model:glmlipschitz}, $F^\star = \E_{f \sim \dist_f}[f]$ is $\sqrt{L}$-Lipschitz.
\end{lemma}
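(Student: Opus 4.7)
The plan is to bound the dual norm of an arbitrary (sub)gradient of $F^\star$ uniformly by $\sqrt{L}$, which directly implies the Lipschitz claim. Under Model~\ref{model:glmlipschitz}, each sampled $f$ has the form $f(\theta) = \gamma_y(\inprod{X}{\theta})$ where $\gamma_y$ is $1$-Lipschitz, so by the chain rule any subgradient of $f$ at $\theta$ lies in $\{c \cdot X \mid c \in \partial \gamma_y(\inprod{X}{\theta})\}$ with $|c| \le 1$. Interchanging expectation with the subdifferential (justified since $f$ is convex and has an integrable Lipschitz envelope) we can write an arbitrary element of $\partial F^\star(\theta)$ as
\[
g(\theta) = \E_{(X, y) \sim \dist_{Xy}}\Brack{c(X, y, \theta) \cdot X},
\]
where $c(X, y, \theta) \in \partial \gamma_y(\inprod{X}{\theta})$ satisfies $|c(X, y, \theta)| \le 1$ pointwise by Model~\ref{model:glmlipschitz}.2.

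Next I would pair $g(\theta)$ with an arbitrary unit vector $u \in \R^d$ and apply Cauchy-Schwarz under the expectation:
\[
\inprod{u}{g(\theta)} = \E\Brack{c(X, y, \theta)\cdot \inprod{u}{X}} \le \sqrt{\E\Brack{c(X, y, \theta)^2}} \cdot \sqrt{\E\Brack{\inprod{u}{X}^2}} \le 1 \cdot \sqrt{u^\top \msigs u}.
\]
The first factor uses $|c| \le 1$, and the second rewrites $\E[\inprod{u}{X}^2] = u^\top \msigs u$. Invoking Model~\ref{model:glmlipschitz}.1, $\msigs \preceq L\id$, yields $u^\top \msigs u \le L\norm{u}_2^2 = L$, so $\inprod{u}{g(\theta)} \le \sqrt{L}$. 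Taking the supremum over unit $u$ shows $\norm{g(\theta)}_2 \le \sqrt{L}$, and since this holds for every subgradient at every $\theta$, convexity of $F^\star$ (as an expectation of convex $\gamma_y$ composed with a linear map) gives that $F^\star$ is $\sqrt{L}$-Lipschitz.

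The only potentially delicate step is the exchange of the subdifferential with the expectation, which is the reason I would state it up front; for convex integrands with a uniform Lipschitz bound (as supplied by $|\gamma'_y| \le 1$) this is standard, but any direct derivative-based argument should flag the non-smooth case since Model~\ref{model:glmlipschitz} omits the smoothness hypothesis of Model~\ref{model:glmsmooth}. Everything else is a one-line application of Cauchy-Schwarz together with the second-moment bound on $\dist_X$.
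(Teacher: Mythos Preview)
Your proof is correct and follows the same core idea as the paper: bound the (sub)gradient norm via the second-moment condition $\msigs \preceq L\id$ together with $|\gamma_y'| \le 1$. The paper phrases this through a matrix inequality, using nonnegativity of covariance to get $(\nabla F^\star)(\nabla F^\star)^\top \preceq \E[(\nabla f)(\nabla f)^\top] \preceq \msigs \preceq L\id$, while you pair with a unit $u$ and apply scalar Cauchy--Schwarz in the probability space. These are equivalent computations (the paper's covariance step unwinds to $(\E[u^\top \nabla f])^2 \le \E[(u^\top \nabla f)^2]$, which is your Cauchy--Schwarz). Your version is slightly more elementary and, notably, more careful about the non-smooth case: the paper writes $\nabla F^\star$ throughout and implicitly assumes differentiability, whereas Model~\ref{model:glmlipschitz} allows e.g.\ the hinge loss, so your explicit treatment of subgradients and the exchange of expectation with the subdifferential is a genuine improvement in rigor.
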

\begin{proof}
	We wish to prove $\norm{\nabla F^\star(\theta)}_2 \le \sqrt{L}$ for all $\theta \in \R^d$. By nonnegativity of covariance,
	\begin{align*}
	\E_{f \sim \dist_f}\Brack{\Par{\nabla f(\theta)}\Par{\nabla f(\theta)}^\top} \succeq \Par{\E_{f \sim \dist_f}\Brack{\nabla f(\theta)}}\Par{\E_{f \sim \dist_f}\Brack{\nabla f(\theta)}}^\top = \Par{\nabla F^\star(\theta)}\Par{\nabla F^\star(\theta)}^\top.
	\end{align*}
	Since the right-hand side of the above display is rank-one, $\norm{\nabla F^\star(\theta)}_2^2$ is at most the largest eigenvalue of the gradient second moment matrix, so it suffices to show the left-hand side is $\preceq L\id$: assuming $f \sim \dist_f$ is associated with $(X, y) \sim \dist_{Xy}$, 
	\begin{align*}
	\E_{f \sim \dist_f}\Brack{\Par{\nabla f(\theta)}\Par{\nabla f(\theta)}^\top} = \E_{X, y \sim \dist_{Xy}} \Brack{X \Par{\gamma'\Par{\inprod{X}{\theta}, y}}^2 X^\top} \preceq \msigs \preceq L\id.
	\end{align*}
\end{proof}

\subsection{Robustly decreasing the covariance operator norm}\label{ssec:fastcov}

In this section, we describe a procedure, $\FCF$, which takes as input a set of vectors $\mv \defeq \{v_i\}_{i \in [n]} \in \R^{n \times d}$ such that for an (unknown) bipartition $[n] = G \cup B$, it is promised that $\E_{i \simunif G}\Brack{v_i v_i^\top}$ is bounded in operator norm by $R$. Given this promise, $\FCF$ takes as input a set of saturated weights $w \in \Delta^n$ and performs a sequence of safe weight removals to obtain a new saturated $w' \in \Delta^n$, such that $\sum_{i \in [n]} w'_i v_i v_i^\top$ is bounded in operator norm by $O(R)$. Moreover, the procedure runs in nearly-linear time in the description size of $\mv$. We formally describe the guarantees of $\FCF$ here as Proposition~\ref{prop:fcf}, and defer the proof to Appendix~\ref{app:prelims-deferred}.

\begin{restatable}{proposition}{restatefcf}\label{prop:fcf}
There is an algorithm, $\FCF$ (Algorithm~\ref{alg:fcf}), taking inputs $\mv \defeq \{v_i\}_{i \in [n]} \in \R^{n \times d}$, saturated weights $w \in \Delta^n$ with respect to bipartition $[n] = G \cup B$ with $|B| = \eps n$, $\delta \in (0, 1]$, and $R \ge 0$ with the promise that
\[\normop{\sum_{i \in G} \frac{1}{|G|} v_i v_i^\top} \le R. \]
Then, with probability at least $1 - \delta$, $\FCF$ returns saturated $w' \in \Delta^n$ such that
\[\normop{\sum_{i \in [n]} w'_i v_i v_i^\top} \le 5R.\]
The runtime of $\FCF$ is
\[O\Par{nd \log^3 (n) \log \Par{\frac n \delta}}.\]
\end{restatable}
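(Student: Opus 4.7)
The algorithm will iteratively filter bad coordinates of $w$ using the safe-scores framework of Lemma~\ref{lem:filterkey}, terminating once the operator norm of $\mathbf{M}_w \defeq \sum_{i \in [n]} w_i v_i v_i^\top$ drops below $5R$. The key design questions are (i) what scores $\tau$ to use so that they are safe but also drive the operator norm down, and (ii) how to compute everything in nearly-linear time.

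\textbf{Safety of scores.} The natural family of scores is indexed by a PSD test matrix $\mathbf{Y} \in \Sym^d_{\ge 0}$ with $\Tr(\mathbf{Y}) = 1$: set $\tau_i \defeq \langle v_i v_i^\top, \mathbf{Y}\rangle$. Since $w \le \frac 1 n \one$,
\[
\sum_{i \in G} w_i \tau_i = \Iprod{\sum_{i \in G} w_i v_i v_i^\top,\ \mathbf{Y}} \le \frac{|G|}{n}\Iprod{\sum_{i \in G} \tfrac{1}{|G|} v_i v_i^\top,\ \mathbf{Y}} \le R,
\]
while $\sum_i w_i \tau_i = \langle \mathbf{M}_w, \mathbf{Y}\rangle$. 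So whenever $\mathbf{Y}$ is chosen aligned with large eigendirections of $\mathbf{M}_w$ (e.g.\ $\mathbf{Y} = uu^\top$ for the top eigenvector, giving $\langle \mathbf{M}_w, \mathbf{Y}\rangle = \normop{\mathbf{M}_w}$) and $\normop{\mathbf{M}_w} \ge 2R$, the scores are safe. Lemma~\ref{lem:filterkey} then guarantees that the downweighted $w'$ stays saturated.

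\textbf{Iteration count and fast linear algebra.} Using a single top eigenvector suffices for safety but a priori only ensures one nonzero coordinate of $w$ is zeroed per round, giving $\Omega(n)$ rounds. To obtain polylogarithmic round complexity---which is where the three $\log n$ factors in the claimed runtime come from---I would follow the matrix multiplicative weights (MMW) paradigm used by \cite{DongH019}: take $\mathbf{Y} \propto \exp(\eta \mathbf{M}_w)$ for a tuned step size $\eta$. The MMW potential $\Phi(w) \defeq \log \Tr(\exp(\eta \mathbf{M}_w))$ is bounded above by $O(\log d + \eta \normop{\mathbf{M}_w})$ and below by $0$, and a standard matrix-exponential stability argument shows that each filtering step decreases $\Phi$ by a constant whenever $\normop{\mathbf{M}_w} \gtrsim R$, yielding $O(\log n)$ filtering rounds. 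Each round requires the scalars $\tau_i = v_i^\top \exp(\eta \mathbf{M}_w) v_i$; these will be estimated simultaneously for all $i \in [n]$ by a Johnson--Lindenstrauss sketch, drawing $k = \tO(\log(n/\delta))$ Gaussian vectors $g_j$ and setting $\hat\tau_i \approx \frac{1}{k}\sum_j \iprod{v_i, \exp(\eta \mathbf{M}_w/2) g_j}^2$. Each $\exp(\eta \mathbf{M}_w/2) g_j$ is in turn approximated to multiplicative accuracy by a degree-$O(\log n)$ Chebyshev polynomial in $\mathbf{M}_w$ applied to $g_j$, costing $O(nd \log n)$ via $O(\log n)$ matrix--vector products with $\mathbf{M}_w$ (each of which is $O(nd)$ by streaming the outer products $v_i v_i^\top$). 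Multiplying out $O(\log n)$ rounds, $\tO(\log(n/\delta))$ sketches, $O(\log n)$ matvecs for the Chebyshev approximation, and $O(nd)$ per matvec gives the advertised runtime $O(nd \log^3(n)\log(n/\delta))$.

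\textbf{Main obstacle.} The hard part will be carrying the safe-scores guarantee cleanly through the three layers of approximation: (a) Chebyshev truncation of $\exp(\eta \mathbf{M}_w)$, (b) Johnson--Lindenstrauss distortion applied simultaneously to all $n$ scores, and (c) use of an approximate rather than exact $\tau_{\max}$ in the rescaling of Lemma~\ref{lem:filterkey}. Each of these perturbs both sides of the safety inequality $\iprod{w_G,\tau} \le \iprod{w_B,\tau}$ and of the MMW progress bound. The plan is to absorb the resulting $(1 \pm O(1))$ slack into the gap between the trigger threshold $\normop{\mathbf{M}_w} > 2R$ and the output guarantee $\normop{\mathbf{M}_{w'}} \le 5R$, and to take a union bound over the $O(\log n)$ rounds to drive the JL failure probability below $\delta$. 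Once these are in place, the output saturation of $w'$ is immediate from iterating Lemma~\ref{lem:filterkey}, and the operator-norm bound at termination is exactly the stopping condition.
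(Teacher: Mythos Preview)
Your high-level strategy matches the paper's: a soft-max test matrix $\mathbf{Y}$ yields scores $\tau_i = v_i^\top \mathbf{Y} v_i$ that are safe whenever the weighted second moment is large, and JL sketching lets you compute all scores fast. Two points where the paper differs, and one genuine gap in your plan:

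\emph{The paper uses a monomial, not the matrix exponential.} It sets $\mathbf{Y}_t = \mathbf{M}_t^{2\log d}$ and tracks $\Phi_t = \Tr(\mathbf{M}_t^{2\log d})$. This buys two things. First, no Chebyshev approximation is needed: $\mathbf{M}_t^{\log d} u$ is exactly $\log d$ matrix--vector products. Second, and more importantly, the monomial gives a clean \emph{multiplicative} contraction. Using the elementary facts $\alpha\Tr(\mathbf{A}^{2\log d}) \le \Tr(\mathbf{A}^{2\log d+1}) + d\alpha^{2\log d+1}$ and $\Tr(\mathbf{A}^{p-1}\mathbf{B}) \ge \Tr(\mathbf{B}^p)$ for $\mathbf{A}\succeq\mathbf{B}$, together with the post-filter guarantee $\langle \mathbf{M}_t^{2\log d}, \mathbf{M}_{t+1}\rangle \le 2.5R\,\Phi_t$, one gets $\Phi_{t+1} \le 0.9\,\Phi_t + d(2.8R)^{2\log d}$. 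Since $\Phi_0 \le d(nR)^{2\log d}$ (after the preprocessing step below), this yields $O(\log d \cdot \log n) = O(\log^2 n)$ outer rounds; that, not $O(\log n)$, is the source of the third logarithm in the runtime.

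\emph{Your $O(\log n)$-round claim is the gap.} With $\Phi = \log\Tr\exp(\eta\mathbf{M}_w)$, the standard MMW argument gives an \emph{additive} per-round drop of order $\eta R$, while $\Phi_0 = \Theta(\log d + \eta n R)$ after preprocessing. That is $\Omega(n)$ rounds, not $O(\log n)$. There is no off-the-shelf ``matrix-exponential stability'' bound that gives constant-factor shrinkage of $\Phi$ here; the monomial potential is what makes the contraction multiplicative. If you want to stay with $\exp$, you would need a different (nonstandard) argument.

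\emph{You are also missing the preprocessing step.} The paper first deletes every $i$ with $\|v_i\|_2^2 > nR$; such $i$ cannot be in $G$ (they alone would violate the promise), so this preserves saturation. Without it, the initial potential is unbounded and $\tau_{\max}$ can be arbitrarily large, breaking both the iteration-count analysis and the $O(n\log n)$ binary search the paper uses to collapse the inner filtering loop (Lines~11--13) into a single exponentiated update.

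Finally, two of your listed ``obstacles'' are non-issues in the paper's setup: $\tau_{\max}$ is computed exactly (it is a max over $n$ scalars), and the JL slack is absorbed not into the safety inequality but into the gap between the filtering threshold $2R\|\tilde u_j\|_2^2$ and the good-set bound $R\|\tilde u_j\|_2^2$, which holds deterministically.
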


An algorithm with similar guarantees to $\FCF$ is implicit in the work \cite{DongH019}, but we give a self-contained exposition in this work for completeness. Our approach in designing $\FCF$ is to use a matrix multiplicative weights-based potential function, along with the filtering approach suggested by Lemma~\ref{lem:filterkey}, to rapidly decrease the quadratic form of the empirical second moment matrix in a number of carefully-chosen directions, and argue this quickly decreases the potential. We remark that this potential-based approach was also used in the recent work \cite{DiakonikolasKKLT21}.

\section{Linear regression}
\label{sec:sever}

Throughout this section, we operate under Models~\ref{assume:huber} and~\ref{assume:linreg}, and Assumption~\ref{assume:linregdet}. Namely, there is a dataset $\{X_i, y_i\}_{i \in [n]}$ and an unknown bipartition $[n] = G \cup B$, such that $\{X_i, y_i\}_{i \in G}$ were draws from a distribution $\dist_{Xy}$, and we wish to estimate
\[\st \defeq \argmin_{\theta \in \R^d} F^\star(\theta),\text{ where } F^\star(\theta) \defeq \E_{X, y \sim \dist_{Xy}}\Brack{\half \Par{\inprod{X}{\theta} - y}^2}.\]
We follow notation \eqref{eq:fwdef}, \eqref{eq:fgregdef} in this section, and define $G^\star \subseteq G$ as the subset given by Assumption~\ref{assume:linregdet}.2 for the true minimizer $\st$. We also define $B^\star \defeq [n] \setminus G^\star$, so $B^\star \supseteq B$.

We begin in Section~\ref{ssec:weakfilter} with a preliminary on filtering under a weaker assumption than the safety condition in Definition~\ref{def:safety}; in particular, Assumption~\ref{assume:linregdet} is not quite compatible with Definition~\ref{def:safety} because $G'$ can change based on $\theta$, but will not affect saturation by more than constants. In Section~\ref{ssec:identifiability}, we then state a general ``identifiability proof'' showing that controlling certain quantities such as the operator norm of the gradient covariances and near-optimality of a current estimate $\theta$, with respect to some weights $w \in \Delta^n$, suffices to bound closeness of $\theta$ and $\st$. This identifiability proof is motivated by an analogous argument in \cite{BakshiP21}, and will guide our algorithm design. Next, we give a self-contained oracle which rapidly halves the distance to $\st$ outside of a sufficiently large radius in Section~\ref{ssec:halfdistreg}, and analyze the final phase (once this radius is reached) in Section~\ref{ssec:lastphasereg}. We put the pieces together to give our main result and full algorithm in Section~\ref{ssec:severfull}.

\subsection{Filtering under $(\eps, \frac{\eps^2}{\alpha})$-goodness}\label{ssec:weakfilter}

Throughout this section, we will globally fix a value (for a sufficiently large constant)
\begin{equation}\label{eq:alphadef}\alpha = O\Par{\max\Par{1, \eps \kappa \log \frac{R_0}{\sigma}}}.\end{equation}
In this definition, $R_0$ is an initial distance bound on $\norm{\theta_0 - \st}_{\msigs}$ we will provide to our final algorithm (see the statement of Theorem~\ref{thm:fastreg}). We operate under Assumption~\ref{assume:linregdet} such that our dataset is $(\eps, \frac{\eps^2}{\alpha})$-good, which inflates the sample complexity of Proposition~\ref{prop:assumelinregok} by a factor depending on $\alpha$. 

At a high level, this technical complication is to ensure that throughout the algorithm we never remove more than $3\eps$ mass from $G$, and that our weights are always $\eps$-saturated with respect to $G \cup B$, allowing for inductive application of Assumption~\ref{assume:linregdet}. Formally, we demonstrate the following (simple) generalization of Lemma~\ref{lem:filterkey}, using safety definitions on different sets.

\begin{lemma}\label{lem:strongerfilterkey}
Suppose Assumption~\ref{assume:linregdet} holds with $r = \frac{\eps^2}{\alpha}$. Consider any algorithm which performs the following weight removal.
\begin{enumerate}
	\item $w_0 \gets \frac 1 n \one$
	\item For $0 \le t < T$:
	\begin{enumerate}
		\item $[w_{t + 1}]_i \gets \Par{1 - \frac{\tau_i}{\tau_{\max}}} [w_t]_i$ for all $i \in [n]$ and $\tau_{\max} \defeq \max_{i \in [n] \mid w_i \neq 0} \tau_i$, for $\{\tau_i\}_{i \in [n]} $ safe with respect to $w_t$ and a bipartition $G'_t$, $[n] \setminus G'_t$ where $G'_t \subseteq G$, $|G'_t| \ge (1 - \frac{\eps^2}{\alpha})|G|$.
	\end{enumerate}
\end{enumerate}
Suppose the number of distinct sets $G'_t$ throughout the algorithm is bounded by $\frac{\alpha}{2\eps}$. Then throughout the algorithm, $w_t$ is $\eps$-saturated with respect to the bipartition $G \cup B$.
\end{lemma}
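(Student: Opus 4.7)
The plan is to track an appropriate ``deficit'' potential measuring how far the weights are from being saturated, and to bound the slack introduced at each step by the fact that safety is taken with respect to the possibly smaller subset $G'_t$ instead of all of $G$.

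First I would define the deficit
\[\Delta_t \defeq \norm{\Brack{\tfrac 1 n \one - w_t}_G}_1 - \norm{\Brack{\tfrac 1 n \one - w_t}_B}_1\]
and write $H_t \defeq G \setminus G'_t$, which is the portion of $G$ that safety at step $t$ does not directly protect; by hypothesis $|H_t| \le \frac{\eps^2}{\alpha}|G| \le \frac{\eps^2}{\alpha}n$. The pointwise bound $w_t \le \tfrac 1 n \one$ is immediate because every update multiplies each coordinate by a factor in $[0,1]$, so establishing $\eps$-saturation of $w_t$ reduces to showing $\Delta_t \le \eps$ throughout.

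Next I would analyze a single filter step. Let $r_t(S) \defeq \sum_{i \in S}([w_t]_i - [w_{t+1}]_i) = \tfrac{1}{\tau_{\max}}\iprod{(w_t)_S,\tau}$ denote the mass removed from $S \subseteq [n]$ at step $t$. Safety of $\tau$ with respect to $(G'_t, [n]\setminus G'_t)$ reads $\iprod{(w_t)_{G'_t}, \tau} \le \iprod{(w_t)_{[n]\setminus G'_t}, \tau}$; dividing by $\tau_{\max}$ and using $[n]\setminus G'_t = B \cup H_t$ (disjoint union) yields $r_t(G'_t) \le r_t(B) + r_t(H_t)$. Combining with $r_t(G) = r_t(G'_t) + r_t(H_t)$ gives
\[\Delta_{t+1} - \Delta_t = r_t(G) - r_t(B) \le 2\, r_t(H_t).\]

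Finally I would bound $\sum_{t=0}^{T-1} r_t(H_t)$ by grouping steps according to which subset $H$ appears as $H_t$. Because weights are monotonically decreasing, for any fixed $H \subseteq G$ the total mass ever removed from $H$ across the full run satisfies
\[\sum_{t=0}^{T-1} r_t(H) = \sum_{i \in H}\Par{[w_0]_i - [w_T]_i} \le \tfrac{|H|}{n} \le \tfrac{\eps^2}{\alpha}.\]
Since there are at most $\tfrac{\alpha}{2\eps}$ distinct values of $H_t$, we obtain
\[\sum_{t=0}^{T-1} r_t(H_t) = \sum_{\text{distinct }H}\;\sum_{t:\, H_t = H} r_t(H) \le \tfrac{\alpha}{2\eps}\cdot \tfrac{\eps^2}{\alpha} = \tfrac{\eps}{2},\]
and telescoping the per-step inequality from $\Delta_0 = 0$ gives $\Delta_T \le \eps$, which is exactly $\eps$-saturation with respect to $G\cup B$.

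The one delicate point is the bookkeeping in the last step: one must use monotonicity to charge mass removed from $H_t$, across all steps in which $H_t$ takes a particular value $H$, against the \emph{initial} weight on $H$ rather than the current weight. The cap on the number of distinct $G'_t$ sets is what converts this into exactly the $\eps/2$ slack budget needed, and accounts for the extra factor of $2$ arising from the fact that mass removed from $H_t$ is double-counted in the per-step bound (once as part of $r_t(G)$ and once as ``unprotected'' mass on the $B$ side of safety).
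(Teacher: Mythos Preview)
Your proof is correct and follows essentially the same approach as the paper: both arguments group iterations by the distinct set $G'_t$ (equivalently $H_t = G\setminus G'_t$), charge the ``extra'' mass removed from $G$ against the total weight ever sitting on $H_t$, pick up a factor of $2$ from the double-counting, and use the cap $\tfrac{\alpha}{2\eps}$ on distinct sets to close. Your version is slightly more explicit in introducing the deficit potential $\Delta_t$ and telescoping, whereas the paper phrases the same accounting verbally, but the two are the same argument.
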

\begin{proof}
Consider some distinct set $G'$. The proof of Lemma~\ref{lem:filterkey} demonstrates that under these assumptions, in every iteration $t$ using $G'$ for weight removal, the amount of mass removed from $G'$ is less than the amount removed from $[n] \setminus G'$. Moreover, the amount of mass removed from $G$ in these iterations can be at most the amount removed from $G'$, plus the weight assigned to the \emph{entire difference} $G \setminus G'$, which is at most a $\frac{\eps^2}{\alpha}$ fraction. Similarly, the amount of mass removed from $B$ is at least the amount of mass removed from $[n] \setminus G'$, minus their set difference, which is again at most $\frac{\eps^2}{\alpha}$. Combining over all distinct sets, this deviation is at most $\eps$.
\end{proof}

It will be straightforward to verify that throughout this section, all weight removals we perform will be of the form in Lemma~\ref{lem:strongerfilterkey}, and that we never perform weight removals with respect to more than $\frac{\alpha}{2\eps}$ distinct sets. Hence, we will always assume that any weights $w$ we discuss are $\eps$-saturated with respect to the bipartition $G \cup B$, and thus has $\norm{w_G}_1 \ge 1 - 3\eps$, allowing for application of Assumption~\ref{assume:linregdet}. Finally, we remark we sometimes will apply Assumption~\ref{assume:linregdet}.1 with weight vectors $w_{G^\star}$ instead of $w_G$; since their difference is $O(\eps^2) < \eps$, this is a correct application for any $\norm{w_G}_1 \ge 1 - 3\eps$.

\subsection{Identifiability proof for linear regression}\label{ssec:identifiability}

In this section, we give an identifiability proof similar to that appearing in \cite{BakshiP21} which demonstrates, for a given weight-parameter estimate pair $(w, \theta) \in \Delta^n \times \R^d$, verifiable technical conditions on this pair which certify a bound on $\norm{\theta - \st}_{\msigs}$. In short, Proposition~\ref{prop:identifiability} will show that if both of the quantities
\begin{equation}\label{eq:boundthese}\norm{\nabla F_w(\theta)}_{(\msigs)^{-1}},\; \normop{\Cov_{w}\Par{\Brace{g_i(\theta)}_{i \in [n]}}} \end{equation}
are simultaneously controlled, then we obtain a distance bound to $\theta_{G^\star}$.

\begin{proposition}\label{prop:identifiability}
Let $w \in \Delta^n$ be $\eps$-saturated with respect to the bipartition $[n] = G \cup B$, and let $\theta \in \R^d$. Assuming $\eps \kappa$ is sufficiently small, there is a universal constant $\cid$ such that
\[\norm{\theta - \st}_{\msigs} \le \cid \Par{\sqrt{\eps}\Par{\sigma \sqrt{\kappa} + \sqrt{\frac{\normop{\Cov_{w}\Par{\Brace{g_i(\theta)}_{i \in [n]}}}}{\mu}}} + \norm{\nabla F_w(\theta)}_{(\msigs)^{-1}}}.\]
\end{proposition}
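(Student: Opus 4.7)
The plan is to reduce the claim to a triangle-inequality decomposition of $\nabla F^\star(\theta)$ against the ``observable'' gradient $\nabla F_w(\theta)$, taking all norms in $(\msigs)^{-1}$. The key starting observation is that $F^\star(\theta) = \tfrac12 \E_{X,y}[(\iprod{X,\theta}-y)^2]$ is a quadratic with Hessian $\msigs$ and minimizer $\st$, so $\nabla F^\star(\theta) = \msigs(\theta - \st)$ and therefore $\norm{\theta - \st}_{\msigs} = \norm{\nabla F^\star(\theta)}_{(\msigs)^{-1}}$. Thus the problem reduces to bounding this dual-norm quantity, for which it suffices to bound $\norm{\nabla F_w(\theta) - \nabla F^\star(\theta)}_{(\msigs)^{-1}}$, after which the triangle inequality produces the $\norm{\nabla F_w(\theta)}_{(\msigs)^{-1}}$ term appearing in the claim.

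Next, let $\tw \defeq w_{G^\star}/\norm{w_{G^\star}}_1$, where $G^\star$ is the set guaranteed by Assumption~\ref{assume:linregdet}.2 at the point $\theta = \st$. Since $w$ is $\eps$-saturated and $|G \setminus G^\star| \le \eps^2 n/\alpha$, the mass bounds $\norm{w_{G^\star}}_1 \ge 1 - 4\eps$ and $\sum_{i \in B^\star} w_i \le 2\eps$ both hold, where $B^\star \defeq [n] \setminus G^\star$. I will then write the difference as
\[
\nabla F_w(\theta) - \nabla F^\star(\theta) = \norm{w_{G^\star}}_1\bigl(\nabla F_{\tw}(\theta) - \nabla F^\star(\theta)\bigr) + (\norm{w_{G^\star}}_1 - 1)\nabla F^\star(\theta) + \sum_{i \in B^\star} w_i g_i(\theta),
\]
and bound each term separately. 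The first term is controlled directly by Assumption~\ref{assume:linregdet}.2: its $\ell_2$ norm is at most $\cest\sqrt{L\eps}(\sigma + \norm{\theta-\st}_{\msigs})$, which becomes $\cest\sqrt{\kappa\eps}(\sigma + \norm{\theta-\st}_{\msigs})$ in $(\msigs)^{-1}$-norm using $\norm{\cdot}_{(\msigs)^{-1}} \le \mu^{-1/2}\norm{\cdot}_2$. The second term is trivially $O(\eps)\norm{\theta - \st}_{\msigs}$ using the quadratic identity for $\nabla F^\star$.

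The step that actually requires care is bounding the bad-set contribution $\sum_{i \in B^\star} w_i g_i(\theta)$ in the $(\msigs)^{-1}$-norm using only the covariance operator norm $\normop{\Cov_w(\{g_i\})}$ (not a covariance restricted to $B^\star$, which we have no control over). The trick is a two-step Cauchy--Schwarz: for any test vector $v$ with $\norm{v}_{\msigs} \le 1$,
\[
\Iprod{v,\sum_{i \in B^\star} w_i g_i(\theta)} \le \sqrt{\sum_{i \in B^\star} w_i}\cdot \sqrt{\sum_{i \in B^\star} w_i \iprod{v, g_i(\theta)}^2} \le \sqrt{2\eps}\cdot\sqrt{v^\top\Par{\sum_{i \in [n]} w_i g_i g_i^\top} v},
\]
where I have enlarged the sum from $B^\star$ to $[n]$ in the second factor (all terms are nonnegative). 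Since $\sum_i w_i g_i g_i^\top = \norm{w}_1\Cov_w(\{g_i\}) \preceq \normop{\Cov_w(\{g_i\})}\id$ and $\norm{v}_2^2 \le \mu^{-1}\norm{v}_{\msigs}^2 \le \mu^{-1}$, taking a supremum over $v$ yields the bound $\sqrt{2\eps}\cdot\sqrt{\normop{\Cov_w(\{g_i\})}/\mu}$.

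Combining all three bounds with the triangle inequality gives
\[
\norm{\theta - \st}_{\msigs} \le \norm{\nabla F_w(\theta)}_{(\msigs)^{-1}} + \cest\sqrt{\kappa\eps}\,\sigma + \bigl(\cest\sqrt{\kappa\eps} + 4\eps\bigr)\norm{\theta - \st}_{\msigs} + \sqrt{2\eps}\sqrt{\tfrac{\normop{\Cov_w(\{g_i\})}}{\mu}}.
\]
Since $\eps\kappa$ is assumed sufficiently small, the coefficient of $\norm{\theta - \st}_{\msigs}$ on the right is bounded away from $1$, and I can absorb it to the left-hand side, yielding the claimed bound with $\cid$ an absolute constant multiple of $\cest$. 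The only potentially subtle step is the covariance bound for the bad-set tail; everything else is bookkeeping through the saturation hypothesis and the fact that $F^\star$ is a pure quadratic.
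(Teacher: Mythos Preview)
Your argument has one slip that, as written, breaks the bound on the first term of your decomposition. You take $G^\star$ to be the subset from Assumption~\ref{assume:linregdet}.2 associated with the point $\st$, but that assumption only guarantees \eqref{eq:linreg1} at the point used to produce the subset. So with $\tw = w_{G^\star}/\norm{w_{G^\star}}_1$ you obtain a bound on $\norm{\nabla F_{\tw}(\st)}_2$, not on $\norm{\nabla F_{\tw}(\theta) - \nabla F^\star(\theta)}_2$. Trying to bridge the gap via the identity $\nabla F_{\tw}(\theta) - \nabla F^\star(\theta) = \nabla F_{\tw}(\st) + (\Cov_{\tw}(\mx) - \msigs)(\theta - \st)$ only yields a contribution of size $\tfrac{1}{2}\norm{\theta - \st}_{\msigs}$ from Assumption~\ref{assume:linregdet}.1, which is not absorbable. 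The fix is immediate: instead take the subset $G'$ that Assumption~\ref{assume:linregdet}.2 provides for the point $\theta$ appearing in the statement. Then \eqref{eq:linreg1} gives exactly the $\cest\sqrt{L\eps}(\sigma + \norm{\theta - \st}_{\msigs})$ bound you claim, and the rest of your argument goes through unchanged (the saturation bookkeeping is identical, since $|G \setminus G'| \le r|G|$ just as for $G^\star$).

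With this correction, your route is genuinely different from the paper's and somewhat cleaner. The paper introduces an auxiliary point $\hatt$, the minimizer of $F_{\sw_\theta}$ for $\sw_\theta$ uniform on $G' \cap G^\star$, bounds $\norm{\hatt - \st}_{\msigs}$ via Lemma~\ref{lem:empiricialminclose}, and then controls $\norm{\theta - \hatt}_{\msigs}$ through an optimal-coupling argument between $\sw_\theta$ and $w/\norm{w}_1$. You instead exploit directly that $F^\star$ is an exact quadratic, so that $\norm{\theta - \st}_{\msigs} = \norm{\nabla F^\star(\theta)}_{(\msigs)^{-1}}$, reducing the whole claim to a dual-norm triangle inequality on gradients, with no auxiliary empirical minimizer and no coupling. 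The paper's argument is somewhat more robust to perturbations of the quadratic structure (it relies only on empirical strong convexity, not the exact identity $\nabla F^\star(\theta) = \msigs(\theta - \st)$), but in the present linear-regression setting your approach is shorter and equally valid once the correct subset $G'$ is used.
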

\begin{proof}
Let $G'$ be the set promised by Assumption~\ref{assume:linregdet}.2 for the point $\theta$. Throughout this proof, we define $G_\theta \defeq G' \cap G^\star$, and we let $\sw_\theta \defeq \frac{1}{|G_\theta|} \one_{G_\theta}$ be the uniform weights on $G_\theta$. Finally, let $\hatt$ minimize $F_{\swt}$. By applying Lemma~\ref{lem:empiricialminclose} on the weights $\swt$, we have that $\|\hatt - \st\|_{\msigs} = O(\sigma \sqrt{\kappa \eps})$. Thus, in the remainder of the proof we focus on bounding $\|\theta - \hatt\|_{\msigs}$ by the required quantity.

Let $\couple(\swt, \tw)$ supported on $[n] \times [n]$ be an optimal \emph{coupling} between $i \sim \swt$ and $j \sim \tw$, where $\tw \defeq \frac{w}{\norm{w}_1}$ is the distribution proportional to $w$; we denote the coupling as $\couple$ for short. For a pair $(i, j) \in [n] \times [n]$ sampled from $\couple$, we let $\one_{i = j}$ be the indicator of the event $i = j$, and similarly define $\one_{i \neq j}$. From the total variation characterization of coupling, we have by Lemma~\ref{lem:saturatedtv} that
\[\E_{i, j \sim \couple}\Brack{\one_{i \neq j}} = \norm{\tw - \swt}_{1} \le 9\eps.\]
Here we used that $\norm{\swt - \sw_G}_1 = O(\eps^2)$, where $\sw_G$ is the uniform distribution on $G$, as in Lemma~\ref{lem:saturatedtv}. Now, let $v = \theta - \hatt$. We have by Assumption~\ref{assume:linregdet}.1 that
\begin{equation}\label{eq:lowerboundid}\E_{i \sim \swt}\Brack{\inprod{v}{X_i}\inprod{X_i}{\theta - \hatt}} = \inprod{\theta - \hatt}{\E_{i \sim \swt} \Brack{X_iX_i^\top} (\theta - \hatt)} \ge \half \norm{\theta - \hatt}_{\msigs}^2.\end{equation}
On the other hand,
\begin{align*}
\E_{i \sim \swt}\Brack{\inprod{v}{X_i}\inprod{X_i}{\theta - \hatt}} &= \E_{i \sim \swt}\Brack{\inprod{v}{X_i}\Par{\inprod{X_i}{\theta} - y_i}} + \E_{i \sim\swt}\Brack{\inprod{v}{X_i}\Par{y_i - \inprod{X_i}{\hatt}}}\\
&= \E_{i \sim\swt}\Brack{\inprod{v}{g_i(\theta)}} - \E_{i \sim \swt}\Brack{\inprod{v}{g_i(\hatt)}} = \E_{i \sim \swt}\Brack{\inprod{v}{g_i(\theta)}}.
\end{align*}
Here, we used that $\E_{i \sim \swt}\Brack{g_i(\hatt)} = \nabla F_{\swt}(\hatt) = 0$ by definition of $\hatt$. Continuing,
\begin{equation}\label{eq:upperboundid}
\begin{aligned}
\E_{i \sim \swt}\Brack{\inprod{v}{X_i}\inprod{X_i}{\theta - \hatt}} &= \E_{i, j \sim \couple}\Brack{\inprod{v}{g_i(\theta)} \one_{i = j}} + \E_{i, j \sim \couple}\Brack{\inprod{v}{g_i(\theta)} \one_{i \neq j}}\\
&= \E_{j \sim \tw}\Brack{\inprod{v}{g_j(\theta)}} - \E_{i, j \sim \couple}\Brack{\inprod{v}{g_j(\theta)} \one_{i \neq j}} + \E_{i, j \sim \couple}\Brack{\inprod{v}{g_i(\theta)} \one_{i \neq j}} \\
&\le \inprod{v}{\nabla F_{\tw}\Par{\theta}} + 3\sqrt{\eps} \Par{\E_{j \sim \tw}\Brack{\inprod{v}{g_j(\theta)}^2}^{\half} + \E_{i \sim \swt}\Brack{\inprod{v}{g_i(\theta)}^2}^{\half}}.
\end{aligned}
\end{equation}
In the last line, we used Cauchy-Schwarz and $\E_{i,j \sim \couple}[\one_{i \neq j}^2] \le 9\eps$ to deal with the second and third terms. To bound the term corresponding to $i \sim \swt$,
\begin{align*}
\E_{i \sim \swt}\Brack{\inprod{v}{g_i(\theta)}^2} \le \norm{v}_2^2 \normop{\Cov_{\swt}\Par{\Brace{g_i(\theta)}_{i \in [n]}}} \le 1.1\cest\kappa \norm{v}_{\msigs}^2 \Par{\norm{\theta - \st}_{\msigs}^2 + \sigma^2}.
\end{align*}
The first inequality is by definition of $\normop{\cdot}$, and the second used $\norm{v}_2^2 \ge \frac 1 \mu \norm{v}_{\msigs}^2$ and Assumption~\ref{assume:linregdet}.2, since $G_\theta$ is a subset of the set $G'$ promised by Assumption~\ref{assume:linregdet}.2 (where we adjusted by a constant for normalization). We similarly arrive at the bound
\[\E_{j \sim \tw}\Brack{\inprod{v}{g_j(\theta)}^2} \le \frac{\norm{v}_{\msigs}^2}{\mu} \normop{\Cov_{w}\Par{\Brace{g_i(\theta)}_{i \in [n]}}}.\]
Now plugging the above displays into \eqref{eq:upperboundid} and combining with \eqref{eq:lowerboundid}, as well as using $\norm{\theta - \st}_{\msigs}^2 + \sigma^2 = O(\|\theta - \hatt\|_{\msigs}^2 + \sigma^2)$ by the triangle inequality and our earlier bound on $\|\st - \hatt\|_{\msigs}$,
\begin{align*}
\norm{\theta - \hatt}_{\msigs}^2 &\le \norm{\theta - \hatt}_{\msigs} \norm{\nabla F_{\tw}(\theta)}_{(\msigs)^{-1}}+ O\Par{\sqrt{\eps \kappa}}\Par{\sigma \norm{\theta - \hatt}_{\msigs} + \norm{\theta - \hatt}_{\msigs}^2} \\
&+ O\Par{\sqrt{\eps}} \Par{\norm{\theta - \hatt}_{\msigs} \sqrt{\frac{\normop{\Cov_{w}\Par{\Brace{g_i(\theta)}_{i \in [n]}}}}{\mu}}}.
\end{align*}
In the first line, we used Cauchy-Schwarz to bound the term $\inprod{v}{\nabla F_{\tw}(\theta)}$. Dividing through by $\|\theta - \hatt\|_{\msigs}$, and using that $\eps \kappa$ is sufficiently small, yields the conclusion.
\end{proof}

Proposition~\ref{prop:identifiability} suggests a natural approach. On the one hand, if $\theta$ is an (approximate) minimizer to $F_w$, the first quantity in \eqref{eq:boundthese} will be small. On the other hand, for a fixed $\theta \in \R^d$, we can filter on $w$ using our subroutine $\FCF$ until the second quantity in \eqref{eq:boundthese} is small. The main challenge is accomplishing both bounds simultaneously. To this end, we show that the number of times we have to repeat this process of filtering and then computing an approximate minimizer is bounded, using $F_w$ as a potential function; by preprocessing $F_w$ so that is smooth at all points, if $\theta$ is an approximate minimizer for the $F_w$ attained \emph{after} filtering on gradients at $\theta$, then we can exit the subroutine. Otherwise, we argue we make substantial function progress by calling an empirical risk minimizer to terminate quickly. We make this strategy formal in the following sections.

\subsection{Halving the distance to $\st$}\label{ssec:halfdistreg}

In this section, we design a procedure, $\HalfRadius$, with the following guarantee. Suppose that we have $\eps$-saturated $\bw \in \Delta^n$ and a parameter $\bt \in \R^d$, as well as scalar $R$ with the promise
\[\norm{\bt - \st}_{\msigs} \le R.\]
The goal of $\HalfRadius$ is to return a new $\theta$ with $\norm{\theta - \st}_{\msigs} \le \half R$; we require that $R = \Omega(\sigma)$ for a sufficiently large constant in this section. We do so by using saturated weights to guide a potential analysis, crucially using Proposition~\ref{prop:identifiability}. Before stating $\HalfRadius$, we require two additional helper tools. The first is an approximate optimization procedure.

\begin{definition}
We call $\oerm$ a $\gamma$-approximate ERM oracle if on input $F: \R^d \rightarrow \R$ it returns a point $\hatt$ such that $F(\hatt) - F(\theta^\star_F) \le \gamma$, for $\theta^\star_F \defeq \argmin_{\theta \in \R^d} F(\theta)$.
\end{definition}

The second controls the initial error, which we use to yield distance bounds via strong convexity.

\begin{lemma}\label{lem:ffilter}
There is an algorithm, $\ffilter$, which takes as input $\eps$-saturated $w \in \Delta^n$, $\theta \in \R^d$, and $R \ge \norm{\theta - \st}_{\msigs}$, and produces $\eps$-saturated $w' \in \Delta^n$ such that $F_{w'}(\theta) \le 2\cub(\sigma^2 + R^2)$, in time $O(nd + n\log \frac{D}{R^2})$, where $D$ is a bound on the largest $\half(\inprod{X_i}{\theta} - y_i)^2$ for any nonzero $w_i$.
\end{lemma}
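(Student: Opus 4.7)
I would reduce the lemma to a threshold-based filter whose safety is certified by a witness bound on the good-set contribution $\sum_{i\in G} w_i f_i(\theta)$, and whose runtime is controlled by the dyadic structure of the $f_i(\theta)$-values.

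\textbf{Witness bound.} Writing $\inprod{X_i}{\theta}-y_i = \inprod{X_i}{\theta-\st} + (\inprod{X_i}{\st}-y_i)$ and applying $(a+b)^2\le 2a^2+2b^2$ gives $f_i(\theta)\le (\inprod{X_i}{\theta-\st})^2 + 2 f_i(\st)$ for every $i\in G$. Averaging over $G$ with $\sw_G = \frac{1}{|G|}\one_G$, Assumption~\ref{assume:linregdet}.1 applied to $w=\sw_G$ controls $\frac{1}{|G|}\sum_{i\in G}(\inprod{X_i}{\theta-\st})^2 = \inprod{\theta-\st}{\Cov_{\sw_G}(\mx)(\theta-\st)} \le \tfrac{3}{2}R^2$, and Assumption~\ref{assume:linregdet}.3 controls $\frac{1}{|G|}\sum_{i\in G} f_i(\st)\le \cub\sigma^2$. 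Because $\eps$-saturated $w$ satisfy $w_i\le\tfrac{1}{n}$ and $|G|\ge(1-\eps)n$, this transfers to $\sum_{i\in G} w_i f_i(\theta) \le C(R^2+\sigma^2)$ for an absolute constant $C$, and after mild constant adjustments absorbed into $\cub$ we get the usable witness
\[\sum_{i\in G} w_i f_i(\theta) \le \tfrac{T}{2}, \text{ where } T \defeq 2\cub(\sigma^2+R^2).\]

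\textbf{Algorithm and safety.} First compute the scalars $t_i\defeq f_i(\theta)$ for every $i$ in the support of $w$, which costs $O(nd)$. Then repeatedly invoke Lemma~\ref{lem:filterkey} with scores $\tau_i\defeq t_i$ until $F_w(\theta)\le T$. While the loop is active, the invariant $F_w(\theta)>T$ together with the witness yields
\[\iprod{w_G,\tau}\le \tfrac{T}{2} < \tfrac{F_w(\theta)}{2} \le \iprod{w_B,\tau},\]
so the scores are safe per Definition~\ref{def:safety}, and $\eps$-saturation is preserved via Lemma~\ref{lem:strongerfilterkey} applied with the single distinguished set $G$ (well within the allowed budget of distinct sets). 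To hit the claimed $O(nd + n\log(D/R^2))$ runtime, I would bucket the $t_i$-values into $O(\log(D/R^2))$ dyadic buckets on the range $(T,D]$ and process buckets from top to bottom: each filter invocation hard-zeroes the current argmax and softly rescales the rest, so the top bucket is cleared in a number of iterations equal to its population, and across all buckets the per-iteration bookkeeping is $O(1)$ amortized using a bucketed max-heap. Termination is guaranteed once the maximum surviving $t_i$ drops below $T$, since at that point $F_w(\theta) \le t_{\max}\cdot\norm{w}_1 \le T$.

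\textbf{Main obstacle.} The principal technical hurdle is the iteration-count bound: because Lemma~\ref{lem:filterkey} with weighted scores only hard-zeroes one element per call, a literal application could require up to $n$ calls, whereas we need only $O(\log(D/R^2))$ threshold checks. I expect the cleanest workaround to be replacing the iterated soft filter by a single batch hard-threshold removal found via binary search: locate the minimal $\tau^\star$ such that zeroing every $w_i$ with $t_i > \tau^\star$ yields $F_w(\theta)\le T$, using $O(\log(D/R^2))$ trial evaluations each costing $O(n)$. Verifying that the resulting batch removal is itself saturation-preserving requires an indicator-score safety argument rather than the weighted one, which can be pushed through by combining the witness upper bound on good-set mass above any threshold with a matching lower bound on bad-set mass above threshold obtained from the positive gap $F_w(\theta) - T$ via a Markov-style accounting.
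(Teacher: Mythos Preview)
Your witness bound and safety argument are sound; your derivation of the witness via $(a+b)^2\le 2a^2+2b^2$ together with Assumptions~\ref{assume:linregdet}.1 and~\ref{assume:linregdet}.3 is in fact more elementary than the paper's, which routes through smoothness of $F_{w_{G^\star}}$ and Lemma~\ref{lem:empiricialminclose} to obtain the same conclusion.

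The gap is entirely in the runtime. Neither workaround closes it. For (a), each application of Lemma~\ref{lem:filterkey} rescales \emph{all} $n$ weights by coordinate-dependent factors, so the claimed $O(1)$ amortized cost per call requires a lazy data structure you have not specified, and it is not clear one exists. For (b), hard-threshold removal need not preserve saturation: take $\eps n/2$ good points with $t_i = T/\eps$ (so $\iprod{w_G,\tau}=T/2$), all $\eps n$ bad points with $t_i = 0.6T/\eps$, and the remaining good points with $t_i=0$. Then $F_w(\theta)=1.1T$, and the least-aggressive threshold achieving $F_w\le T$ after removal is any $\tau^\star\in[0.6T/\eps,T/\eps)$, which removes $\eps/2$ good mass and zero bad mass. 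Markov only upper-bounds the good mass above a threshold; it supplies no lower bound on bad mass above the same threshold, so the ``Markov-style accounting'' you gesture at does not go through.

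The paper's resolution is a one-line observation you are missing: freeze $\tau_{\max}$ at its \emph{initial} value and never recompute it. Each single step $w_i\mapsto(1-\tau_i/\tau_{\max})w_i$ remains a valid safe update (the proof of Lemma~\ref{lem:filterkey} only needs $\tau_i\le\tau_{\max}$ and $\iprod{w_G,\tau}\le\iprod{w_B,\tau}$, and the latter holds at every intermediate step since the good-side sum can only decrease from its initial bound $T/2$ while the total stays above $T$). The $K$-fold composite is then the closed form $w_i'=(1-\tau_i/\tau_{\max})^K w_i$, and one binary-searches directly for the smallest $K$ with $\sum_i w_i'\tau_i\le T$. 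Each trial costs $O(n)$, and the bound
\[\sum_{i\in[n]}\Par{1-\frac{\tau_i}{\tau_{\max}}}^{K}w_i\tau_i\le\sum_{i\in[n]}\exp\Par{-\frac{K\tau_i}{\tau_{\max}}}w_i\tau_i\le\frac{\tau_{\max}}{eK}\]
(via $xe^{-Cx}\le 1/(eC)$) gives $K=O(D/R^2)$, hence $O(\log(D/R^2))$ trials.
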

\begin{proof}
Define $\tau_i \defeq f_i(\theta) = \half (\inprod{X_i}{\theta} - y_i)^2$. Assumption~\ref{assume:linregdet}.1 shows that $F_{w_{G^\star}}$ is $\frac 3 2$-smooth in the $\msigs$ norm, since its Hessian is exactly $\norm{w_{G^\star}}_1 \Cov_{w_{G^\star}}(\mx) \preceq \frac 3 2 \msigs$. Moreover, letting $\hatt$ minimize $F_{w_{G^\star}}$, by Lemma~\ref{lem:empiricialminclose}, the triangle inequality and the assumed bound $R$,
\[\norm{\theta - \hatt}_{\msigs} \le R + \norm{\st - \hatt}_{\msigs} \le R + \sigma.\]
Hence, assuming $\cub$ is large enough and $R \ge \sigma$, smoothness demonstrates
\[\sum_{i \in G^\star} w_i \tau_i = F_{w_{G^\star}}(\theta) \le F_{w_{G^\star}}(\hatt) + \frac 3 4 \norm{\theta - \hatt}_{\msigs}^2 \le \cub (\sigma^2 + R^2),\]
Next, letting $\tau_{\max} \defeq \max_{i \in [n] \mid w_i \neq 0} \tau_i$ and $K \in \N \cup \{0\}$ be smallest such that
\[\sum_{i \in [n]} \Par{1 - \frac{\tau_i}{\tau_{\max}}}^K w_i \tau_i \le 2\cub (\sigma^2 + R^2),\]
each of the first $K$ weight removals according to the scores $\tau$ are safe with respect to $G^\star \cup B^\star$, and Lemma~\ref{lem:strongerfilterkey} implies we can output $\eps$-saturated $w'_i = \Par{1 - \frac{\tau_i}{\tau_{\max}}}^K w_i$ entrywise. It remains to binary search for $K$; given access to the scores $\tau$, checking if a given $K$ passes the above display takes $O(n)$ time. We can upper bound $K$ by the following inequality:
\[\sum_{i \in [n]} \Par{1 - \frac{\tau_i}{\tau_{\max}}}^K w_i \tau_i \le \sum_{i \in [n]} \exp\Par{-\frac{K\tau_i}{\tau_{\max}}} w_i \tau_i \le \frac{1}{eK} \sum_{i \in [n]} w_i \tau_{\max} \le \frac{\tau_{\max}}{K}.\]
Here the second inequality used $x \exp(-Cx) \le \frac{1}{eC}$ for all nonnegative $x$, $C$, where we chose $C = \frac{K}{\tau_{\max}}$ and $x = \tau_i$. Hence, $K = O(\frac{D}{R^2})$. The runtime follows as computing scores takes time $O(nd)$.
\end{proof}

We remark that every time we use $\ffilter$ is a weight removal of the form in Lemma~\ref{lem:strongerfilterkey}, which is safe with respect to the bipartition $G^\star \cup B^\star$. This accounts for one distinct set throughout.

\begin{algorithm}[ht!]
	\caption{$\HalfRadius(\mx, y, \bw, \bt, R, D, \oerm, \delta)$}
	\begin{algorithmic}[1]\label{alg:halfradius}
		\STATE \textbf{Input:} Dataset $\mx = \Brace{X_i}_{i \in [n]} \in \R^{n \times d}$, $y = \{y_i\}_{i \in [n]} \in \R^n$, satisfying Assumption~\ref{assume:linregdet}, $\eps$-saturated $\bw \in \Delta^n$ with respect to bipartition $[n] = G \cup B$ such that $\mx^\top \diag{\bw} \mx \preceq 8L\id$, $\bt \in \R^d$ with $\norm{\bt - \st}_{\msigs} \le R$, $\delta \in (0, 1)$, $O(\frac{\sigma^2}{\kappa})$-approximate ERM oracle $\oerm$,
		\begin{equation}\label{eq:halfradiusd}
		\begin{aligned}
		D &\ge \max_{i \in [n]} \half (\inprod{X_i}{\bt} - y_i)^2.
		\end{aligned}
		\end{equation}
		\STATE \textbf{Output:} With probability $\ge 1 - \delta$, saturated $w$ with respect to $G \cup B$, and $\theta$ with 
		\[\norm{\theta - \st}_{\msigs} \le \half R.\]
		\STATE $t \gets 0$, $w^{(0)} \gets \ffilter(w, \bt, R, D)$, $\Delta_0 \gets \infty$
		\WHILE{$\Delta_t > \frac{R^2}{512 \kappa \cid^2}$}
		\STATE $\theta^{(t)} \gets \oerm(F_{w^{(t)}})$
		\STATE $w^{(t + 1)} \gets \FCF\Par{\Brace{g_i\Par{\theta^{(t)}}}_{i \in [n]}, w^{(t)}, \frac{\delta}{O(\kappa)}, 40\cest\cub LR^2}$
		\STATE $\Delta_{t + 1} \gets F_{w^{(t + 1)}}\Par{\theta^{(t)}} - F_{w^{(t + 1)}}\Par{\theta^{(t + 1)}}$
		\STATE $t \gets t + 1$
		\ENDWHILE
		\RETURN{$\Par{w^{(t)}, \theta^{(t - 1)}}$}
	\end{algorithmic}
\end{algorithm}

\begin{lemma}\label{lem:halfradiuscorrect}
$\HalfRadius$ is correct, i.e.\ if its preconditions are met, it successfully returns $(w, \theta)$ such that $w \in \Delta^n$ is $\eps$-saturated and $\norm{\theta - \st}_{\msigs} \le \half R$. It runs in $O(\kappa)$ calls to $\oerm$, plus
\[ O\Par{n \log\Par{\frac{D}{R^2}} + nd\kappa \log^3(n) \log\Par{\frac{n\kappa}{\delta}}} \textup{ additional time.}\]
\end{lemma}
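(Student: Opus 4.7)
The plan is to separately verify correctness (the returned $\theta$ satisfies $\|\theta - \st\|_{\msigs} \le \tfrac{1}{2}R$) and the runtime bound, using the weighted empirical loss as a monotone potential throughout.

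\textbf{Correctness.} At exit I apply Proposition~\ref{prop:identifiability} to the returned pair $(w^{(t)}, \theta^{(t-1)})$. Its hypothesis has two quantities to control. For the gradient covariance, the preceding $\FCF$ call bounds $\normop{\Cov_{w^{(t)}}(\{g_i(\theta^{(t-1)})\}_{i \in [n]})} \le O(\cest \cub L R^2)$ by Proposition~\ref{prop:fcf}, provided its promise holds. That promise reduces via Assumption~\ref{assume:linregdet}.2 at $\theta^{(t-1)}$ to showing $\|\theta^{(t-1)} - \st\|_{\msigs} = O(\sqrt{\cub} R)$, which I obtain inductively: by strong convexity of $F_{w^{(t-1)}}$ in the $\msigs$ norm (which holds with constant $\Omega(1)$ since $\nabla^2 F_{w^{(t-1)}} \succeq \sum_{i \in G^\star} w_i^{(t-1)} X_iX_i^\top \succeq \tfrac{1}{4}\msigs$ by Assumption~\ref{assume:linregdet}.1 on $w_{G^\star}^{(t-1)}$), combined with $F_{w^{(t-1)}}(\theta^{(t-1)}) \le F_{w^{(0)}}(\bt) = O(R^2)$ from monotonicity and Lemma~\ref{lem:ffilter}, and the triangle inequality against $\bt$. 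For the gradient norm, using $\nabla^2 F_{w^{(t)}} \preceq \mx^\top \diag{\bw} \mx \preceq 8L\id \preceq 8\kappa \msigs$, the smoothness consequence recorded in Section~\ref{ssec:notation} yields
\[\|\nabla F_{w^{(t)}}(\theta^{(t-1)})\|_{(\msigs)^{-1}}^2 \le 16\kappa\Par{\Delta_t + O\Par{\tfrac{\sigma^2}{\kappa}}} = O\Par{\tfrac{R^2}{\cid^2}}\]
at exit, using $\Delta_t \le R^2/(512\kappa\cid^2)$ and $R = \Omega(\sigma)$. Plugging both controlled quantities into Proposition~\ref{prop:identifiability} and using that $\eps\kappa$ is sufficiently small (so that the $\sqrt{\eps\kappa}$ and $\sqrt{\eps \cub \cest \kappa}$ prefactors become small constants) bounds each of its three terms by $R/6$, summing to $\|\theta^{(t-1)} - \st\|_{\msigs} \le R/2$.

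\textbf{Runtime and iteration count.} The potential $F_{w^{(t)}}$ evaluated at the current iterate is monotone nonincreasing: each $\FCF$ step only decreases $w$ entrywise (and $f_i \ge 0$), while each ERM step lowers $F_{w^{(t)}}$ at the updated iterate by definition. The initial value is $O(R^2)$ by Lemma~\ref{lem:ffilter}, and the potential is always nonnegative. Every non-terminating iteration has $\Delta_t > R^2/(512\kappa\cid^2)$, giving a net potential drop of $\Omega(R^2/\kappa)$ per iteration (after absorbing the $\gamma = O(\sigma^2/\kappa)$ ERM slack into $R = \Omega(\sigma)$). Hence the algorithm terminates in $O(\kappa)$ iterations, making $O(\kappa)$ calls to $\oerm$. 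The initial $\ffilter$ contributes $O(nd + n\log(D/R^2))$ by Lemma~\ref{lem:ffilter}, and each of the $O(\kappa)$ $\FCF$ calls runs in time $O(nd \log^3(n) \log(n\kappa/\delta))$ by Proposition~\ref{prop:fcf}; a union bound over these calls, each with failure probability $\delta/O(\kappa)$, gives overall success probability $\ge 1 - \delta$, matching the stated bound.

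\textbf{Main obstacle.} The main subtlety is ensuring every weight vector encountered remains $\eps$-saturated with respect to $G \cup B$, so that both Assumption~\ref{assume:linregdet} and Proposition~\ref{prop:identifiability} apply uniformly. Each $\FCF$ invocation at iteration $t$ performs weight removals safe with respect to the set $G'_t \subseteq G$ from Assumption~\ref{assume:linregdet}.2 evaluated at $\theta^{(t-1)}$ (with $|G \setminus G'_t| \le \tfrac{\eps^2}{\alpha}|G|$), and $\ffilter$ uses $G^\star$. Applying Lemma~\ref{lem:strongerfilterkey}, saturation is preserved so long as the total number of distinct $G'_t$ does not exceed $\alpha/(2\eps)$; since the iteration count is $O(\kappa)$ and $\alpha = \Omega(\eps\kappa\log\tfrac{R_0}{\sigma})$ by \eqref{eq:alphadef}, this is amply satisfied and the induction closes.
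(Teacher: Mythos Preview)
Your proof is correct and follows the paper's strategy: justify the $\FCF$ promise via an iterate-distance bound, combine the resulting covariance bound with the smoothness-based gradient bound in Proposition~\ref{prop:identifiability} at termination, and count iterations via the monotone potential $F_{w^{(t)}}(\theta^{(t)})$. The one local difference is how you obtain $\|\theta^{(t-1)}-\st\|_{\msigs}=O(R)$: you lower-bound $\nabla^2 F_{w^{(t-1)}}$ by its good-coordinate part to get strong convexity of the \emph{full} weighted loss, then triangle-inequality through $\bt$ (using $F_{w^{(t-1)}}(\bt)\le F_{w^{(0)}}(\bt)$ since $f_i\ge 0$ and weights only decrease), whereas the paper restricts to $F_{w^{(t)}_{G_t}}$ with $G_t=G'\cap G^\star$ and routes through its empirical minimizer via Lemma~\ref{lem:empiricialminclose}. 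Your variant avoids the set intersection and the extra lemma, and is a bit cleaner; otherwise the arguments coincide.
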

\begin{proof}
We discuss correctness and runtime separately.

\textit{Correctness.} The first step of our correctness proof is to show that throughout the algorithm,
\begin{equation}\label{eq:distbound}\norm{\theta^{(t)} - \theta^\star}_{\msigs} \le 6\sqrt{\cub} R.\end{equation}
To see this, consider iteration $t$ and suppose $w^{(t)}$ is $\eps$-saturated. Let $G' \subseteq G$ be the set promised by Assumption~\ref{assume:linregdet}.2 for the pair $(w^{(t)}, \theta^{(t)})$, and let $G_t = G' \cap G^\star$. At the beginning of the algorithm we applied $\ffilter$ (see Lemma~\ref{lem:ffilter}), and the minimum value of $F_w$ is monotone nonincreasing as $w$ is decreasing, and $\oerm$ only decreases function value (else Line 4 would fail), so since $R \ge \sigma$,
\[F_{w_{G_t}^{(t)}}\Par{\theta^{(t)}} \le F_{w^{(t)}}\Par{\theta^{(t)}} \le 4\cub R^2.\]
On the other hand, $F_{w_{G_t}^{(t)}}$ is $\frac 1 3$-strongly convex in the $\msigs$ norm by Assumption~\ref{assume:linregdet}.1 (adjusting for the normalization factor), and hence letting $\theta^{\star}_t$ be the minimizer of $F_{w_{G_t}^{(t)}}$, we have
\[4\cub R^2 \ge F_{w_{G_t}^{(t)}}\Par{\theta^{(t)}} \ge F_{w_{G_t}^{(t)}}\Par{\theta^{(t)}} - F_{w_{G_t}^{(t)}}\Par{\theta^{\star}_t} \ge \frac{1}{6} \norm{\theta^{(t)} - \theta^\star_t}_{\msigs}^2 \implies \norm{\theta^{(t)} - \theta^\star_t}_{\msigs} \le 5\sqrt{\cub} R. \]
Finally, by using Lemma~\ref{lem:empiricialminclose} on $\theta^{\star}_t$, we have $\norm{\theta^\star_t - \theta^\star}_{\msigs} \le 4\cest\sigma\sqrt{\kappa\eps} \le R$. From this and the above display, the triangle inequality yields \eqref{eq:distbound}. Now, \eqref{eq:distbound} with Assumption~\ref{assume:linregdet}.2 shows that for all $t$,
\[\normop{\Cov_{\frac{1}{|G_t|}\one_{G_t}} \Par{\Brace{g_i(\theta^{(t)})}_{i \in G_t}}} \le 40 \cest \cub L R^2. \]
We argue later in this proof that there are at most $O(\kappa)$ loops throughout the algorithm. This shows all calls to $\FCF$ are safe with respect to the bipartition $G_t$, $[n] \setminus G_t$ (adjusting the definition of $\eps$ by a constant in Proposition~\ref{prop:fcf}), and thus taking a union bound the algorithm succeeds with probability at least $1 - \delta$. Condition on this for the remainder of the proof.

The success of all calls to $\FCF$ implies that in every iteration $t$ (following Proposition~\ref{prop:fcf}),
\begin{equation}\label{eq:covopnormbound}\normop{\Cov_{w^{(t + 1)}} \Par{\Brace{g_i(\theta^{(t)})}_{i \in [n]}}} = O\Par{LR^2}.\end{equation}
Next, in any iteration where $\Delta_{t + 1} \le \frac{R^2}{512\kappa \cid^2}$, we claim that
\begin{equation}\label{eq:gradientbound}
\norm{\nabla F_{w^{(t + 1)}} \Par{\theta^{(t)}}}_{\Par{\msigs}^{-1}} \le \frac{R}{4\cid}.
\end{equation}
This is because $F_{w^{(t + 1)}}$ is $8\kappa$-smooth in the $\msigs$ norm, since $\nabla^2 F_{w^{(t + 1)}} = \mx^\top \diag{w^{(t + 1)}} \mx \preceq 8L\id$ by assumption, and $8\kappa \msigs \succeq 8L\id$ by assumption, so by the guarantees of $\oerm$,
\begin{align*}
\frac{1}{16\kappa} \norm{\nabla F_{w^{(t + 1)}} \Par{\theta^{(t)}}}_{\Par{\msigs}^{-1}}^2 &\le F_{w^{(t + 1)}}\Par{\theta^{(t)}} - \min_{\theta \in \R^d} F_{w^{(t + 1)}}\Par{\theta} \\
&\le \Delta_{t + 1} + O\Par{\frac{\sigma^2}{\kappa}} \le \frac{R^2}{256\kappa \cid^2}.
\end{align*}
Rearranging indeed yields \eqref{eq:gradientbound}. Now by combining \eqref{eq:covopnormbound} and \eqref{eq:gradientbound} in Proposition~\ref{prop:identifiability}, we see that if $\Delta_{t + 1} \le \frac{R^2}{512\kappa \cid^2}$ in an iteration, we obtain the desired
$\norm{\theta^{(t)} - \st}_{\msigs} \le \half R$.

\textit{Runtime.} We first observe that the loop in Lines 4-9 of Algorithm~\ref{alg:halfradius} can only run $O(\kappa)$ times. This is because $\ffilter$ decreases the initial function value until it is $O(R^2)$, and every loop decreases the function value by $\Omega(\frac{R^2}{\kappa})$. Hence, the cost of the whole algorithm is one call to $\ffilter$, and $O(\kappa)$ calls to $\oerm$, $\FCF$, and two function value computations, which fit into the allotted runtime budget by Lemma~\ref{lem:ffilter} and Proposition~\ref{prop:fcf}.

\end{proof}

Finally, we remark that throughout Algorithm~\ref{alg:halfradius}, we only filtered weight based on $\ffilter$ and $\FCF$, with respect to at most $O(\kappa)$ distinct sets (in the manner described by Lemma~\ref{lem:strongerfilterkey}): the sets $\{G_t\}$ used in correctness calls to $\FCF$, and the set $G^\star$ for the one call to $\ffilter$.

\subsection{Last phase analysis}\label{ssec:lastphasereg}

In this section, we give a slight variant of Algorithm~\ref{alg:halfradius} which applies when $R \le \clp\sigma$ for a universal constant $\clp$. It will have a somewhat more stringent termination condition, because we require the gradient term in Proposition~\ref{prop:identifiability} to be $O(\sigma \sqrt{\eps\kappa})$, but otherwise is identical.

\begin{algorithm}[ht!]
	\caption{$\LastPhase(\mx, y, \bw, \bt, D, \oerm, \delta)$}
	\begin{algorithmic}[1]\label{alg:lastphase}
		\STATE \textbf{Input:} Dataset $\mx = \Brace{X_i}_{i \in [n]} \in \R^{n \times d}$, $y = \{y_i\}_{i \in [n]} \in \R^n$,  satisfying Assumption~\ref{assume:linregdet}, $\eps$-saturated $\bw \in \Delta^n$ with respect to bipartition $[n] = G \cup B$ such that $\mx^\top \diag{\bw} \mx \preceq 8L\id$, $\bt \in \R^d$ with $\norm{\bt - \st}_{\msigs} \le \clp \sigma$, $\delta \in (0, 1)$, $O(\sigma^2\eps)$-approximate ERM oracle $\oerm$,
		\begin{equation}\label{eq:lastphased}\begin{aligned} D &\ge \max_{i \in [n]} \half (\inprod{X_i}{\bt} - y_i)^2.\end{aligned}\end{equation}
		\STATE \textbf{Output:} With probability $\ge 1 - \delta$, saturated $w$ with respect to $G \cup B$, and $\theta$ with 
		\[\norm{\theta - \st}_{\msigs} = O\Par{\sigma \sqrt{\kappa\eps}}.\]
		\STATE $t \gets 0$, $w^{(0)} \gets \ffilter(w, \bt, \clp\sigma, D)$, $\Delta_0 \gets \infty$
		\WHILE{$\Delta_t > \sigma^2\eps$}
		\STATE $\theta^{(t)} \gets \oerm(F_{w^{(t)}})$
		\STATE $w^{(t + 1)} \gets \FCF\Par{\Brace{g_i\Par{\theta^{(t)}}}_{i \in [n]}, w^{(t)}, O\Par{\delta\eps}, 40\cest\cub\clp^2 L\sigma^2}$
		\STATE $\Delta_{t + 1} \gets F_{w^{(t + 1)}}\Par{\theta^{(t)}} - F_{w^{(t + 1)}}\Par{\theta^{(t + 1)}}$
		\STATE $t \gets t + 1$
		\ENDWHILE
		\RETURN{$\Par{w^{(t)}, \theta^{(t - 1)}}$}
	\end{algorithmic}
\end{algorithm}

\begin{lemma}\label{lem:lastphasecorrect}
$\LastPhase$ is correct, i.e.\ if its preconditions are met, it successfully returns $(w, \theta)$ such that $w \in \Delta^n$ is $\eps$-saturated and $\norm{\theta - \st}_{\msigs} = O\Par{\sigma \sqrt{\kappa\eps}}$. It runs in $O(\frac 1 \eps)$ calls to $\oerm$, plus
\[ O\Par{n\log\Par{\frac{D}{R^2}} + \frac{nd}{\eps} \log^3(n) \log\Par{\frac{n}{\delta\eps}}} \textup{ additional time.}\]
\end{lemma}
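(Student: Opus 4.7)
The plan is to mirror the proof of Lemma~\ref{lem:halfradiuscorrect} almost line-for-line, since $\LastPhase$ is structurally identical to $\HalfRadius$ with $R = \clp \sigma$ fixed, a tighter termination threshold $\sigma^2 \eps$, and a correspondingly more accurate ERM oracle. I would split the argument into a distance-control step, an identifiability step, and a runtime/saturation bookkeeping step.

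First I would show that $\norm{\theta^{(t)} - \st}_{\msigs} = O(\sqrt{\cub}\,\sigma)$ for every $t$. As in the $\HalfRadius$ analysis, $\ffilter$ combined with smoothness of $F_{w_{G^\star}}$ (Assumption~\ref{assume:linregdet}.1) yields $F_{w^{(0)}}(\bt) \le 2\cub(\sigma^2 + \clp^2\sigma^2) = O(\sigma^2)$, and both $\oerm$ and $\FCF$ only decrease the minimum of $F_w$ (the latter since $w^{(t+1)} \le w^{(t)}$ entrywise); hence this bound propagates to $F_{w^{(t)}}(\theta^{(t)})$. Restricting weights to $G_t \defeq G' \cap G^\star$ where $G'$ is the set promised by Assumption~\ref{assume:linregdet}.2 for $\theta^{(t)}$, the restricted function is $\tfrac 1 3$-strongly convex in $\norm{\cdot}_{\msigs}$, and its minimizer lies within $4\cest\sigma\sqrt{\kappa\eps}$ of $\st$ by Lemma~\ref{lem:empiricialminclose}, yielding the claimed distance bound via the triangle inequality. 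Plugging this bound into Assumption~\ref{assume:linregdet}.2 certifies $\normop{\Cov_{\frac{1}{|G_t|}\one_{G_t}}(\{g_i(\theta^{(t)})\}_{i \in G_t})} \le 40\cest\cub\clp^2 L\sigma^2$, so each $\FCF$ call is safe with respect to $G_t,[n]\setminus G_t$ and, by Proposition~\ref{prop:fcf}, delivers $\normop{\Cov_{w^{(t+1)}}(\{g_i(\theta^{(t)})\}_{i \in [n]})} = O(L\sigma^2)$.

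Next, at termination I would exploit that $\nabla^2 F_{w^{(t+1)}} = \mx^\top \diag{w^{(t+1)}} \mx \preceq 8L\id$ (preserved under weight decreases) means $F_{w^{(t+1)}}$ is $8\kappa$-smooth in the $\msigs$ norm. Combined with the $O(\sigma^2\eps)$ accuracy of $\oerm$ and the exit condition $\Delta_{t+1} \le \sigma^2\eps$, the standard smoothness-to-gradient bound gives
\[\frac{1}{16\kappa}\norm{\nabla F_{w^{(t+1)}}(\theta^{(t)})}_{(\msigs)^{-1}}^2 \le F_{w^{(t+1)}}(\theta^{(t)}) - \min F_{w^{(t+1)}} \le \Delta_{t+1} + O(\sigma^2\eps) = O(\sigma^2\eps),\]
so $\norm{\nabla F_{w^{(t+1)}}(\theta^{(t)})}_{(\msigs)^{-1}} = O(\sigma\sqrt{\kappa\eps})$. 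Feeding this together with the covariance bound $O(L\sigma^2)$ into Proposition~\ref{prop:identifiability} yields the required $\norm{\theta^{(t)} - \st}_{\msigs} = O(\sigma\sqrt{\kappa\eps})$.

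For runtime, each loop decreases $F_{w^{(t)}}(\theta^{(t)})$ (via monotonicity in $w$ plus the ERM step) by $\Omega(\sigma^2\eps)$ until the exit condition fires, and the initial function value after $\ffilter$ is $O(\sigma^2)$, so there are $O(\tfrac{1}{\eps})$ iterations. Each iteration costs one $\oerm$ call, one $\FCF$ call costing $O(nd \log^3(n)\log(\tfrac{n}{\delta \eps}))$ with failure probability $O(\delta\eps)$ (so the union bound over iterations gives total failure at most $\delta$), and $O(nd)$ additional work for the function difference $\Delta_{t+1}$; the $\ffilter$ preprocessing contributes the $O(nd + n\log(D/R^2))$ term. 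The main delicate point, and the place I would be most careful, is the saturation bookkeeping: I must verify that the sets $\{G_t\}$ used by $\FCF$ plus the single set $G^\star$ used by $\ffilter$ together fit within the $\tfrac{\alpha}{2\eps}$ budget on distinct filtering sets required by Lemma~\ref{lem:strongerfilterkey}, so that $w^{(t)}$ remains $\eps$-saturated and Assumption~\ref{assume:linregdet} can be invoked inductively. Since $\LastPhase$ contributes $O(\tfrac{1}{\eps})$ distinct sets and $\alpha = \Omega(1)$ by \eqref{eq:alphadef}, this budget is comfortably met when $\LastPhase$ is combined with the earlier $\HalfRadius$ phases in the final algorithm.
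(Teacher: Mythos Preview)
Your proposal is correct and follows essentially the same approach as the paper's proof, which simply notes that the analysis is ``nearly identical to Lemma~\ref{lem:halfradiuscorrect}'' with the adjustments that the distance bound becomes $O(\sigma)$, the covariance bound becomes $O(L\sigma^2)$, the termination threshold $\sigma^2\eps$ yields the gradient bound $\norm{\nabla F_{w^{(t+1)}}(\theta^{(t)})}_{(\msigs)^{-1}}^2 = O(\kappa\sigma^2\eps)$, and the iteration count becomes $O(\tfrac{1}{\eps})$. Your write-up is in fact more explicit than the paper's (which defers most details by pointer), including the saturation bookkeeping via Lemma~\ref{lem:strongerfilterkey}, which the paper handles in a separate remark after the proof.
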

\begin{proof}
On the correctness side, the analysis is nearly identical to Lemma~\ref{lem:halfradiuscorrect}; the same logic applies to yield an analogous bound to \eqref{eq:distbound}, which shows that all iterates are within $O(\sigma)$ from the minimizer, so all calls to $\FCF$ are correct. This implies that (analogous to \eqref{eq:covopnormbound}) the gradient operator norm is always bounded by $O(L\sigma^2)$. Similarly, since the threshold for termination is when the function decrease is $\sigma^2 \eps$, we have that on the terminating iteration,
\[\frac{1}{16\kappa}\norm{\nabla F_{w^{(t + 1)}} \Par{\theta^{(t)}}}_{\Par{\msigs}^{-1}}^2 = O(\sigma^2\eps),\]
and combining this with the operator norm bound in Proposition~\ref{prop:identifiability} yields the conclusion. On the runtime side, the analysis is the same as Lemma~\ref{lem:halfradiuscorrect}, except that there are now $O(\frac 1 \eps)$ iterations.
\end{proof}

Again, we remark here that throughout Algorithm~\ref{alg:lastphase}, we filtered weight with respect to at most $O(\eps^{-1})$ distinct sets (in the manner described by Lemma~\ref{lem:strongerfilterkey}).

\subsection{Full algorithm}\label{ssec:severfull}

Before we give our full algorithm, we require some preliminary pruning procedures on the dataset.

\begin{lemma}\label{lem:xbound}
Under Assumption~\ref{assume:linregdet}, for all $i \in G$, $\norm{X_i}_2 \le \sqrt{2Ln}$.
\end{lemma}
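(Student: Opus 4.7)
The plan is to instantiate Assumption~\ref{assume:linregdet}.1 at the uniform weighting $w = \tfrac{1}{n}\one$, which is admissible since $\norm{w_G}_1 = |G|/n \ge 1 - \eps \ge 1 - 4\eps$. Under this choice, $w_G$ places mass $\tfrac{1}{n}$ on each index in $G$ and zero elsewhere, so $\norm{w_G}_1 = |G|/n$ and the (uncentered) empirical covariance becomes
\[\Cov_{w_G}(\mx) = \sum_{i \in G} \frac{w_i}{\norm{w_G}_1} X_i X_i^\top = \frac{1}{|G|}\sum_{i \in G} X_i X_i^\top.\]
Assumption~\ref{assume:linregdet}.1 then gives $\tfrac{1}{|G|}\sum_{i \in G} X_i X_i^\top \preceq \tfrac{3}{2}\msigs \preceq \tfrac{3}{2} L\id$, where the last bound uses Model~\ref{assume:linreg}.1.

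Next, since each summand $X_j X_j^\top$ is positive semidefinite, for any fixed $i \in G$ we have
\[X_i X_i^\top \preceq \sum_{j \in G} X_j X_j^\top \preceq \tfrac{3}{2} L |G|\id \preceq \tfrac{3}{2}Ln\id.\]
Evaluating the quadratic form on both sides at the unit vector $X_i/\norm{X_i}_2$ (assuming $X_i \ne 0$; otherwise the bound is trivial) yields $\norm{X_i}_2^2 \le \tfrac{3}{2}Ln \le 2Ln$, proving the claim.

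There is no real obstacle — the entire argument is two lines of PSD manipulation. The only small subtlety is verifying that the hypothesis of Assumption~\ref{assume:linregdet}.1 (namely $\norm{w_G}_1 \ge 1-4\eps$) is met by the uniform weights, which holds because at most an $\eps$-fraction of the dataset is corrupted; everything else is immediate.
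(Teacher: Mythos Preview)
Your proof is correct and takes essentially the same approach as the paper. The paper phrases the argument as a contradiction (supposing $\norm{X_i}_2 > \sqrt{2Ln}$ and exhibiting an eigenvalue of $\Cov_{w_G}(\mx)$ exceeding $\tfrac{3}{2}L$ in the direction $X_i/\norm{X_i}_2$), but the underlying PSD manipulation is identical to yours.
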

\begin{proof}
Suppose otherwise for some $i \in G$. Then, since $\Cov_{w_G}\Par{\mx} \succeq \frac{1}{|G|} X_iX_i^\top \succeq \frac 1 n X_i X_i^\top$, $\Cov_{w_G}\Par{\mx}$ has an eigenvalue larger than $\frac 3 2 L$ (certified by $\frac{X_i}{\norm{X_i}_2}$), contradicting Assumption~\ref{assume:linregdet}.1.
\end{proof}

We next give a bound on $D$ required by Algorithms~\ref{alg:halfradius} and~\ref{alg:lastphase}, assuming a bound on $\norm{\bt - \st}_{\msigs}$.

\begin{lemma}\label{lem:dbound}
Suppose all $\{X_i\}_{i \in [n]}$ satisfy $\norm{X_i}_2 \le \sqrt{2Ln}$, and we have a bound $\norm{\bt - \st}_{\msigs} \le R$. Under Assumption~\ref{assume:linregdet}, it suffices to set $D$ in $\HalfRadius$ or $\LastPhase$ to
\begin{equation}\label{eq:ddef}
\begin{aligned}
D &= 2n\cub \sigma^2 + 2\kappa n R^2.
\end{aligned}
\end{equation}
\end{lemma}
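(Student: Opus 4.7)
The plan is to bound $\half(\inprod{X_i}{\bt}-y_i)^2$ uniformly over $i\in G$; indices in $B$ can be handled by a preliminary safe pruning step that zeroes $w_i$ whenever $f_i(\bt)>D$, which is safe with respect to $G$ precisely because the bound we derive holds for all $i\in G$. So the core task is to control $|\inprod{X_i}{\bt}-y_i|$ for good $i$.

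For $i\in G$, write $y_i=\inprod{X_i}{\st}+\delta_i$ so that $|\inprod{X_i}{\bt}-y_i|\le |\inprod{X_i}{\bt-\st}|+|\delta_i|$. The first term I would bound via Cauchy–Schwarz together with Lemma~\ref{lem:xbound}, which gives $\norm{X_i}_2\le\sqrt{2Ln}$, and the strong-convexity inequality $\norm{v}_{\msigs}^2\ge\mu\norm{v}_2^2$, which combined with the hypothesis $\norm{\bt-\st}_{\msigs}\le R$ yields $\norm{\bt-\st}_2\le R/\sqrt{\mu}$. Thus
\[
|\inprod{X_i}{\bt-\st}|\;\le\;\sqrt{2Ln}\cdot\frac{R}{\sqrt{\mu}}\;=\;R\sqrt{2\kappa n}.
\]
For the second term, I would use Assumption~\ref{assume:linregdet}.3, which gives $\sum_{j\in G}\delta_j^2\le 2\cub\sigma^2|G|\le 2\cub\sigma^2 n$; in particular any single summand satisfies $\delta_i^2\le 2\cub\sigma^2 n$.

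Finally, applying $(a+b)^2\le 2a^2+2b^2$ yields
\[
\half(\inprod{X_i}{\bt}-y_i)^2\;\le\;|\inprod{X_i}{\bt-\st}|^2+\delta_i^2\;\le\;2\kappa nR^2+2\cub\sigma^2 n=D,
\]
as required. There is no genuine obstacle here; the one subtlety worth flagging is the handling of $i\in B$, for which a one-shot safe filtering step (discard any $i$ with $f_i(\bt)>D$, which necessarily lies in $B$ by the display above) ensures the precondition of $\ffilter$ inside $\HalfRadius$ and $\LastPhase$ is met without disturbing saturation.
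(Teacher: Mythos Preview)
Your proof is correct and essentially identical to the paper's: both bound $|\langle X_i,\bt\rangle - y_i|$ for $i\in G$ by splitting into $|\langle X_i,\bt-\st\rangle|$ (controlled via Cauchy--Schwarz, the hypothesis $\norm{X_i}_2\le\sqrt{2Ln}$, and $\norm{\bt-\st}_2\le R/\sqrt{\mu}$) and $|\delta_i|$ (controlled via Assumption~\ref{assume:linregdet}.3), then apply $(a+b)^2\le 2a^2+2b^2$. Your remark about the handling of $i\in B$ via a safe pruning step is exactly how the outer algorithm $\FastReg$ uses this lemma (Line 8).
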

\begin{proof}
First, under Assumption~\ref{assume:linregdet} we have that for all $i \in G$,
\[\Abs{\inprod{X_i}{\st} - y_i} \le \sqrt{2n\cub \sigma^2}.\]
Applying Cauchy-Schwarz, we have that for all $i \in G$, since $\norm{\bt - \st}_2 \le \frac{1}{\sqrt{\mu}} \norm{\bt - \st}_{\msigs} \le \frac{R}{\sqrt \mu}$,
\[\Abs{\inprod{X_i}{\bt} - y_i} \le \sqrt{2n\cub \sigma^2} + \norm{X_i}_2\norm{\bt - \st}_2 \le \sqrt{2n\cub \sigma^2} + \sqrt{2\kappa n}R.\]
\end{proof}

Finally, we give our full algorithm for regression, $\FastReg$, below.

\begin{algorithm}[ht!]
	\caption{$\FastReg(\mx, y, \theta_0, R_0, \oerm, \delta)$}
	\begin{algorithmic}[1]\label{alg:fastreg}
		\STATE \textbf{Input:} Dataset $\mx = \Brace{X_i}_{i \in [n]} \in \R^{n \times d}$, $y = \{y_i\}_{i \in [n]} \in \R^n$, satisfying Models~\ref{assume:huber} and~\ref{assume:linreg}, and satisfying Assumptions~\ref{assume:linregdet}, $\theta_0 \in \R^d$ with $\norm{\theta_0 - \st}_{\msigs} \le R_0$, $\delta \in (0, 1)$, $O(\sigma^2\eps)$-approximate ERM oracle $\oerm$.
		\STATE \textbf{Output:} With probability $\ge 1 - \delta$, $\theta$ with 
		\[\norm{\theta - \st}_{\msigs} = O\Par{\sigma \sqrt{\kappa\eps}}.\]
		\STATE Remove all $(X_i, y_i)$ with $\norm{X_i}_2 > \sqrt{2nL}$, $n \gets$ new dataset size
		\STATE $w \gets \FCF(\mx, \frac 1 n \one, \frac{\delta}{3}, \frac 3 2 L, 2nL)$, $R \gets R_0 + 4\cest \sigma \sqrt{\kappa\eps}$, $\theta \gets \theta_0$
		\STATE $T \gets O(\log \frac{R_0}{\sigma})$ for a sufficiently large constant
		\WHILE{$R > \clp \sigma$}
		\STATE $D \gets $ value in \eqref{eq:ddef} with current setting of $R$ 
		\STATE Remove all $(X_i, y_i)$ not satisfying bound \eqref{eq:halfradiusd}, $n \gets $ new dataset size
		\STATE $(w, \theta) \gets \HalfRadius(\mx, y, w, \theta, R, D, \oerm, \frac{\delta}{3T})$
		\STATE $R \gets \half R$
		\ENDWHILE
		\STATE $D \gets $ value in \eqref{eq:ddef} with current setting of $R$ 
		\STATE Remove all $(X_i, y_i)$ not satisfying bound \eqref{eq:halfradiusd}, $n \gets $ new dataset size
		\STATE $(w, \theta) \gets \HalfRadius(\mx, y, w, \theta, D, \oerm, \frac{\delta}{3})$
		\RETURN{$\theta$}
	\end{algorithmic}
\end{algorithm}

\begin{theorem}\label{thm:fastreg}
In Models~\ref{assume:huber} and~\ref{assume:linreg}, under Assumption~\ref{assume:linregdet} with $r = \frac{\eps^2}{\alpha}$ for $\alpha$ as in \eqref{eq:alphadef}, supposing $\eps \kappa$ is sufficiently small, given $\theta_0 \in \R^d$ and $\norm{\theta_0 - \st}_{\msigs} \le R_0$, $\FastReg$ returns $\theta$ with $\norm{\theta - \st}_{\msigs} = O(\sigma \sqrt{\kappa\eps})$ with probability at least $1 - \delta$. The algorithm runs in $O\Par{\frac{1}{\eps} + \kappa \log \frac{R_0}{\sigma}}$ calls to a $O(\sigma^2\eps)$-approximate ERM oracle, and 
\[O\Par{\Par{nd \log^3(n) \log\Par{\frac{nR_0}{\sigma\delta\eps}} } \Par{\frac 1 \eps + \kappa \log \frac{R_0}{\sigma}}} \textup{ additional time.}\]
\end{theorem}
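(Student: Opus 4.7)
The plan is to analyze $\FastReg$ by establishing inductively that the preconditions of each call to $\HalfRadius$ (and, in the final iteration, $\LastPhase$) are met, then composing their guarantees geometrically. The outer loop in Lines 6--11 runs roughly $T = O(\log(R_0/\sigma))$ times: each iteration invokes $\HalfRadius$ with the current radius bound $R$, and by Lemma~\ref{lem:halfradiuscorrect} the returned $\theta$ satisfies $\norm{\theta - \st}_{\msigs} \le R/2$, enabling geometric halving until the noise floor $R \le \clp \sigma$ is reached. At that point, a single application of $\LastPhase$ (Lemma~\ref{lem:lastphasecorrect}) drives the error down to $O(\sigma\sqrt{\kappa\eps})$, which is the desired guarantee.

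First I would verify the preconditions propagate correctly. The preprocessing in Line~3 removes any $X_i$ with $\norm{X_i}_2 > \sqrt{2nL}$; by Lemma~\ref{lem:xbound} no good sample is discarded, so Assumption~\ref{assume:linregdet} persists on the surviving data. The initial call to $\FCF$ in Line~4, with operator-norm target $\frac{3L}{2}$ (valid by Assumption~\ref{assume:linregdet}.1), returns a saturated $w$ with $\mx^\top \diag{w} \mx \preceq 5 \cdot \frac{3L}{2} \id \preceq 8L\id$ by Proposition~\ref{prop:fcf}, giving the required covariance precondition. The initial radius is inflated to $R_0 + 4\cest \sigma\sqrt{\kappa\eps}$ to absorb the empirical-minimizer slack from Lemma~\ref{lem:empiricialminclose}, and is then halved after each subroutine call, matching the distance guarantee output by Lemma~\ref{lem:halfradiuscorrect}. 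The value of $D$ set in Lines~7 and 12 satisfies the prerequisite bound~\eqref{eq:halfradiusd} (equivalently~\eqref{eq:lastphased}) by Lemma~\ref{lem:dbound}, and Lines~8 and 13 prune any remaining samples violating this bound, which under Assumption~\ref{assume:linregdet} never include a good point.

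The crux of the analysis, and the step I expect to be the main obstacle, is the global bookkeeping of distinct filtering sets that ensures the saturation invariant survives the entire sequence of weight removals. Each outer call to $\HalfRadius$ performs at most $O(\kappa)$ inner iterations and hence uses $O(\kappa)$ distinct sets $G'_t$ (one per iteration, from Assumption~\ref{assume:linregdet}.2 applied at the current $\theta^{(t)}$), plus one additional set $G^\star$ from its internal $\ffilter$ invocation; likewise $\LastPhase$ uses $O(1/\eps)$ distinct sets. Summing across the $T = O(\log(R_0/\sigma))$ outer iterations and the final phase yields a total of $O(\kappa \log(R_0/\sigma) + 1/\eps)$ distinct sets throughout $\FastReg$. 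By the choice of $\alpha$ in~\eqref{eq:alphadef}, this total is bounded by $\alpha/(2\eps)$, so Lemma~\ref{lem:strongerfilterkey} applies and guarantees that the weight vector remains $\eps$-saturated with respect to $G \cup B$ at every point in the execution. This in turn legitimizes the inductive invocation of Assumption~\ref{assume:linregdet} (and hence Lemmas~\ref{lem:halfradiuscorrect} and~\ref{lem:lastphasecorrect}) in every subroutine call.

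Correctness then follows by a union bound: the initial $\FCF$, the $T$ invocations of $\HalfRadius$ with failure parameter $\delta/(3T)$, and the final $\LastPhase$ with failure parameter $\delta/3$ contribute total failure probability at most $\delta$. The runtime is obtained by direct summation: the $T$ outer halving calls contribute $O(T \kappa)$ ERM oracle calls and, via Lemma~\ref{lem:halfradiuscorrect}, additional time $O\bigl(T \cdot nd\kappa \log^3(n) \log(n\kappa/\delta)\bigr)$; the last phase contributes $O(1/\eps)$ ERM calls and $O\bigl((nd/\eps) \log^3(n) \log(n/(\delta\eps))\bigr)$ additional time by Lemma~\ref{lem:lastphasecorrect}; the initial $\FCF$ is absorbed into these. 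After noting that $D$ at every stage is bounded by $\textup{poly}(n, L, R_0, \sigma, \kappa)$, the combined additional time matches the stated $O\Par{\Par{nd\log^3(n)\log(nR_0/(\sigma\delta\eps))}(1/\eps + \kappa\log(R_0/\sigma))}$ bound, completing the proof.
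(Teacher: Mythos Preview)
Your proposal is correct and follows essentially the same approach as the paper's proof: verify the preconditions of each subroutine call via Lemmas~\ref{lem:xbound}, \ref{lem:dbound}, and Proposition~\ref{prop:fcf}; inductively apply Lemma~\ref{lem:halfradiuscorrect} for $T = O(\log(R_0/\sigma))$ halving phases followed by Lemma~\ref{lem:lastphasecorrect} once; union-bound the failure events; and verify the global distinct-set budget $O(\kappa\log(R_0/\sigma) + 1/\eps) \le \alpha/(2\eps)$ required by Lemma~\ref{lem:strongerfilterkey}. Your exposition is in fact somewhat more detailed than the paper's on the saturation bookkeeping, but the argument is the same.
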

\begin{proof}
First, correctness of Lines 3, 8, and 13 of the algorithm follow from Lemmas~\ref{lem:xbound} and~\ref{lem:dbound}. Also, Line 4 ensures that throughout the algorithm we have $\mx^\top \diag{w} \mx \preceq 8L\id$, so the preconditions of $\HalfRadius$ and $\LastPhase$ are met. Finally, the initial setting of $R$ is correct by Lemma~\ref{lem:empiricialminclose} and the assumed bound $R_0$. The correctness and runtime then follow from applying Lemma~\ref{lem:halfradiuscorrect} $T$ times and Lemma~\ref{lem:lastphasecorrect} once. The failure probability comes from union bounding over the one call to $\FCF$, the $T$ calls to $\HalfRadius$, and the one call to $\LastPhase$. 

Finally, for completeness we check that the promise of Section~\ref{ssec:weakfilter} is kept by Algorithm~\ref{alg:fastreg}. There are at most $\log(\frac{R_0}{\sigma})$ calls to $\HalfRadius$, and one call to $\LastPhase$. Combined, this accounts for at most $O(\frac 1 \eps + \kappa \log \frac{R_0}{\sigma})$ distinct sets we filtered with respect to, in the manner described by Lemma~\ref{lem:strongerfilterkey}. For a sufficiently large $\alpha$ in \eqref{eq:alphadef}, this is indeed at most $\frac{\alpha}{2\eps}$ distinct sets.
\end{proof}

We conclude this section by noting that the guarantees of Proposition~\ref{prop:assumelinregok} imply the sample complexity required for Algorithm~\ref{alg:fastreg} to succeed with probability at least $\frac 9 {10} - \delta$ is $n = \tO(\frac{d}{\eps^4} + \frac{d^2}{\eps^3})$.

\section{Robust acceleration}
\label{sec:agd}

In this section, we give a general-purpose algorithm for solving statistical optimization problems with a finite condition number $\kappa$ under the strong contamination model. We study the following abstract problem: we wish to minimize a function $F: \R^d \rightarrow \R$ which is $L$-smooth and $\mu$-strongly convex with minimizer $\stf$, but we only have black-box access to $F$ through a \emph{noisy gradient oracle} $\ong$. In particular, we can query $\ong$ at any point $\theta \in \R^d$ with a parameter $R \ge \norm{\theta - \stf}_2$ and receive $G(\theta)$ such that for a universal constant $\cng$,
\[\norm{G(\theta) - \nabla F(\theta)}_2 \le \cng \Par{\sqrt{L\eps} \sigma + L\sqrt{\eps} R}. \]

Notably, our algorithm is \emph{accelerated}, running in a number of iterations depending on $\sqrt{\kappa}$ rather than $\kappa$. It applies to both the regression setting of Section~\ref{ssec:severassume} and the smooth stochastic optimization setting of Section~\ref{ssec:gdassume}. We demonstrate in Section~\ref{ssec:ong} how to build a noisy gradient oracle for regression and smooth stochastic optimization settings. We then build in Section~\ref{ssec:proxproblem} a simple subroutine based on the robust gradient descent framework of \cite{PrasadSBR20} to approximately solve \emph{regularized subproblems} encountered by our final algorithm. We put the pieces together and give our complete algorithm in Sections~\ref{ssec:outerloophalf} and~\ref{ssec:outerloop}. Throughout we assume $\eps \kappa^2 < 1$ is sufficiently small.

\subsection{Noisy gradient oracle}\label{ssec:ong}

In this section, we build noisy gradient oracles for the problems in Sections~\ref{ssec:severassume} and~\ref{ssec:gdassume}. We now give a formal definition below; the remaining sections will access $F$ through this abstraction.

\begin{definition}[Noisy gradient oracle]\label{def:ong}
We call $\ong(\theta, R)$ a $(L, \sigma, \delta)$-\emph{noisy gradient oracle} for $F: \R^d \rightarrow \R$ with minimizer $\stf$ if on query $\theta \in \R^d$ and given $R \ge \norm{\theta - \stf}_2$, with probability $\ge 1 - \delta$ it returns $G(\theta)$ satisfying for a universal constant $\cng$,
\[\norm{G(\theta) - \nabla F(\theta)}_2 \le \cng \Par{\sqrt{L\eps} \sigma + L\sqrt{\eps}R}.\]
If the returned $G(\theta)$ always satisfies the stronger bound $\norm{G(\theta) - \nabla F(\theta)}_2 \le \cng \sqrt{L\eps}\sigma$, we call $\ong$ a $(L, \sigma, \delta)$-\emph{radiusless noisy gradient oracle}.
\end{definition}

Before developing our implementations, we state a useful identifiability result relating gradient estimation to controlling operator norms of gradient second moments for finite sum functions.

\begin{lemma}\label{lem:gradidentifiability}
Suppose $F_G(\theta) = \frac{1}{|G|} \sum_{i \in G} f_i(\theta)$ for some functions $\{f_i\}_{i \in G}$, and let $w \in \Delta^n$ be saturated with respect to bipartition $[n] = G \cup B$. For $\tw \defeq \frac{w}{\norm{w}_1}$ and $\sw_G = \frac{1}{|G|} \one_G$, we have
\[\norm{\E_{i \sim \sw_G}\Brack{\nabla f_i(\theta)} - \E_{j \sim \tw}\Brack{\nabla f_j(\theta)}}_2 \le \sqrt{24\eps} \Par{\normop{\Cov_{\sw_G}\Par{\Brace{\nabla f_i(\theta)}_{i \in [n]}}}^{\half} + \normop{\Cov_{\tw}\Par{\Brace{\nabla f_i(\theta)}_{i \in [n]}}}^{\half}}.\]
\end{lemma}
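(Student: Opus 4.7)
The plan is to reduce this to a coupling-plus-Cauchy-Schwarz argument, in the same spirit as the proof of Proposition~\ref{prop:identifiability}. Write $g_i \defeq \nabla f_i(\theta)$, $\mu_1 \defeq \E_{i\sim\sw_G}[g_i]$, $\mu_2 \defeq \E_{j\sim\tw}[g_j]$, so the goal is to bound $\|\mu_1-\mu_2\|_2$. First, I would invoke Lemma~\ref{lem:saturatedtv} with $c=0$ (since $w$ is saturated in the strict sense) to get $\norm{\tw-\sw_G}_1 \le 6\eps$, and pick an optimal coupling $\couple$ on $[n]\times[n]$ with marginals $\sw_G$ and $\tw$ for which $\E_{(i,j)\sim\couple}[\one_{i\neq j}] \le 6\eps$, using the same $\ell_1$-based accounting convention as in \eqref{eq:upperboundid}.

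Next, fix an arbitrary unit vector $v\in\R^d$ and expand
\[
\iprod{v,\mu_1-\mu_2} = \E_{(i,j)\sim\couple}\Brack{\iprod{v,g_i-g_j}\one_{i\neq j}},
\]
since diagonal pairs contribute zero. The key trick is to rewrite $g_i-g_j = (g_i-\mu_1)-(g_j-\mu_2)+(\mu_1-\mu_2)$ inside the expectation; moving the $(\mu_1-\mu_2)\E[\one_{i\neq j}]$ piece to the left-hand side gives
\[
(1-p)\iprod{v,\mu_1-\mu_2} = \E\Brack{\iprod{v,g_i-\mu_1}\one_{i\neq j}} - \E\Brack{\iprod{v,g_j-\mu_2}\one_{i\neq j}},
\]
where $p \defeq \E[\one_{i\neq j}] \le 6\eps$. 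This rearrangement is what lets us use the \emph{centered} covariances appearing in the statement, rather than raw second-moment matrices.

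Each of the two terms on the right is then bounded by Cauchy--Schwarz: for instance,
\[
\Abs{\E\Brack{\iprod{v,g_i-\mu_1}\one_{i\neq j}}}^2 \le p \cdot \E_{i\sim\sw_G}\Brack{\iprod{v,g_i-\mu_1}^2} = p\cdot v^\top\Cov_{\sw_G}\!\Par{\Brace{g_i}_{i\in[n]}}v \le p \normop{\Cov_{\sw_G}(\{g_i\})},
\]
and symmetrically for the $\tw$-term. Taking a supremum over unit $v$ and dividing by $1-p$ yields
\[
\norm{\mu_1-\mu_2}_2 \le \frac{\sqrt{p}}{1-p}\Par{\normop{\Cov_{\sw_G}(\{g_i\})}^{1/2} + \normop{\Cov_{\tw}(\{g_i\})}^{1/2}}.
\]
Finally, since $p \le 6\eps$ and $\eps$ is assumed small enough that $1-p \ge \tfrac{1}{2}$, we have $\sqrt{p}/(1-p) \le 2\sqrt{6\eps} = \sqrt{24\eps}$, which is the desired constant.

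There is essentially no ``hard part'' here; the only thing to be careful about is the centered-vs.-uncentered distinction in the definition of $\Cov$, which is precisely why the add-and-subtract-$\mu_1,\mu_2$ decomposition inside the coupling expectation is needed rather than the cruder $\E[\iprod{v,g_i-g_j}^2] \le 2\E[\iprod{v,g_i}^2]+2\E[\iprod{v,g_j}^2]$ bound (which would introduce $\|\mu_1\|_2^2+\|\mu_2\|_2^2$ terms and spoil the statement).
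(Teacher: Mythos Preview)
Your proof is correct and follows essentially the same coupling-plus-Cauchy--Schwarz route as the paper. The one wrinkle is notational: in this paper's convention (Section~\ref{ssec:notation}), $\Cov_w(\cdot)$ with no center specified is centered at \emph{zero}, i.e.\ it is the raw second-moment matrix $\sum_i \tfrac{w_i}{\|w\|_1} g_i g_i^\top$, not the variance about the mean. So your add-and-subtract-$\mu_1,\mu_2$ trick, while valid, is unnecessary here---the paper's proof simply uses the ``cruder'' bound $\E[\langle v, g_i - g_j\rangle^2]^{1/2} \le \E_{\sw_G}[\langle v, g_i\rangle^2]^{1/2} + \E_{\tw}[\langle v, g_j\rangle^2]^{1/2}$ directly, which does \emph{not} spoil the statement under the paper's uncentered convention. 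Your argument in fact establishes the slightly stronger bound with the genuinely centered covariances on the right-hand side (which are $\preceq$ the uncentered ones), so the lemma follows a fortiori; just be aware that your final remark misreads the paper's $\Cov$ notation.
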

\begin{proof}
Throughout this proof, we let $\couple$ supported on $[n] \times [n]$ be an optimal coupling between $i \sim \sw_G$ and $j \sim \tw$, and follow notation from Proposition~\ref{prop:identifiability}. For some unit vector $v \in \R^d$, we have
\begin{align*}
\inprod{v}{\E_{i \sim \sw_G}\Brack{\nabla f_i(\theta)} - \E_{j \sim \tw}\Brack{\nabla f_j(\theta)}} &= \E_{i,j \sim \couple}\Brack{\inprod{v}{\nabla f_i(\theta) - \nabla f_j(\theta)}} \\
&= \E_{i,j \sim \couple}\Brack{\inprod{v}{\nabla f_i(\theta) - \nabla f_j(\theta)} \one_{i \neq j}} \\
&\le \sqrt{6\eps} \E_{i,j \sim \couple} \Brack{\inprod{v}{\nabla f_i(\theta) - \nabla f_j(\theta)}^2}^{\half} \\
&\le \sqrt{24\eps} \Par{\E_{i \sim \sw_G}\Brack{\inprod{v}{\nabla f_i(\theta)}^2}^{\half} + \E_{j \sim \tw}\Brack{\inprod{v}{\nabla f_j(\theta)}^2}^{\half}}.
\end{align*}
The conclusion follows from choosing $v$ to be in the direction of $\E_{i \sim \sw_G}\Brack{\nabla f_i(\theta)} - \E_{j \sim \tw}\Brack{\nabla f_j(\theta)}$, and using the definition of the operator norm.
\end{proof}

Lemma~\ref{lem:gradidentifiability} implies that for approximating gradients of functions $F$ which are ``closely approximated'' by an (unknown) finite sum function $F_G$, it suffices to find a weighting $\tw$ such that the operator norm of $\Cov_{\tw}$ applied to gradients is bounded. We now demonstrate applications of this strategy to linear regression and smooth stochastic optimization.

\begin{corollary}\label{cor:linregong}
Consider a robust linear regression instance where we have sample access to datasets $\mx = \{X_i\}_{i \in [n]} \in \R^{n \times d}$ and $y = \{y_i\}_{i \in [n]} \in \R^n$ under Models~\ref{assume:huber},~\ref{assume:linregweak}, with sample size $n$ corresponding to Proposition~\ref{prop:linregweak}. For 
\[F(\theta) = F^\star(\theta) = \E_{X, y \sim \dist_{Xy}}\Brack{\half(\inprod{X_i}{\theta} - y_i)^2},\]
we can construct a $(L, \sigma, \delta)$-noisy gradient oracle for $F$ in $O\Par{nd \log^3(n)\log^2\Par{\frac n \delta}}$ time, using $O(\log \frac 1 \delta)$ queries of samples from Proposition~\ref{prop:linregweak}.
\end{corollary}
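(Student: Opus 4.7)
The plan is to implement the oracle by (i) computing the empirical per-sample gradients $g_i(\theta) = X_i(\inprod{X_i}{\theta} - y_i)$ at the query point, (ii) running $\FCF$ on these gradient vectors to obtain saturated weights $w$ whose induced empirical distribution $\tw = w/\norm{w}_1$ has small gradient covariance, and (iii) returning the $\tw$-weighted mean gradient as $G(\theta)$. Conditioning on the event (probability $\tfrac{9}{10}$ from Proposition~\ref{prop:linregweak}) that Assumption~\ref{assume:linregdetweak} holds, we apply it at the query $\theta$: since $R \ge \norm{\theta - \st}_2$ and $\msigs \preceq L \id$, we have $\norm{\theta - \st}_{\msigs} \le \sqrt{L}\,R$, so the centered covariance of $\{g_i(\theta)\}_{i \in G_\theta}$ has operator norm $O(L^2 R^2 + L\sigma^2)$, and the uncentered second moment is of the same order (adding $\norm{\mu_{G_\theta}}_2^2$, which using the gradient bias bound and $\norm{\nabla F^\star(\theta)}_2 \le \sqrt{L}\norm{\theta-\st}_{\msigs}$ is again $O(L^2R^2 + L\sigma^2)$).

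Given this second-moment upper bound on the good set, we invoke Proposition~\ref{prop:fcf} with input $R' = O(L^2R^2 + L\sigma^2)$ starting from the uniform weights; this produces saturated $w$ with $\normop{\sum_i w_i g_i g_i^\top} \le 5R'$ in time $O(nd\log^3(n)\log(n/\delta))$. Since $\norm{w}_1 \ge 1-2\eps$, the normalized $\tw$ has comparable operator-norm bound on its (uncentered, hence centered) gradient covariance. Now Lemma~\ref{lem:gradidentifiability} applied to $\sw_G = \sw_{G_\theta}$ and $\tw$ yields
\[
\Norm{\E_{i \sim \sw_G}[g_i(\theta)] - \E_{j \sim \tw}[g_j(\theta)]}_2 = O\!\Par{\sqrt{\eps}\Par{LR + \sqrt{L}\sigma}} = O\!\Par{\sqrt{L\eps}\,\sigma + L\sqrt{\eps}\,R}.
\]
Combining with the gradient bias bound from Assumption~\ref{assume:linregdetweak} (which directly gives $\|\E_{i \sim \sw_G}[g_i] - \nabla F^\star(\theta)\|_2 \le \cest\sqrt{L\eps}(\sigma+\sqrt{L}R)$) via the triangle inequality produces the required noisy gradient guarantee for $G(\theta) \defeq \E_{j \sim \tw}[g_j(\theta)]$.

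The remaining issue is that the preceding argument succeeds with only constant probability: Proposition~\ref{prop:linregweak} gives Assumption~\ref{assume:linregdetweak} with probability $\tfrac{9}{10}$, and we run $\FCF$ with constant failure probability for efficiency. To boost to failure probability $\delta$, we run the above procedure independently on $O(\log \tfrac{1}{\delta})$ fresh sample batches (this is where the $O(\log\tfrac{1}{\delta})$ queries to Proposition~\ref{prop:linregweak} come in) and aggregate via a geometric-median-of-means estimator on the returned gradient estimates; by a standard Chernoff argument, a majority of the batches simultaneously satisfy the error bound with probability $\ge 1-\delta$, and the geometric median inherits the error guarantee up to a universal constant. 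The total runtime is $O(nd\log^3(n)\log(n/\delta))$ per batch times $O(\log\tfrac{1}{\delta})$ batches, i.e.\ $O(nd\log^3(n)\log^2(n/\delta))$.

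The most delicate step will be verifying that the uncentered second moment of the good gradients is bounded by the quantity we pass to $\FCF$; this requires carefully combining the Assumption~\ref{assume:linregdetweak} covariance bound with a bound on the mean norm, for which we must use both $\nabla F^\star(\theta) = \msigs(\theta-\st)$ and the gradient bias bound. The other technical care point is that $\FCF$'s input is an operator-norm bound on an \emph{uncentered} second moment of the good set, whereas Lemma~\ref{lem:gradidentifiability} consumes \emph{centered} covariances — but since centered $\preceq$ uncentered in the PSD order, this conversion is free and only costs constants.
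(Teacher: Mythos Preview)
Your proposal is correct and follows essentially the same approach as the paper: run $\FCF$ on the per-sample gradients (using the second-moment bound from Assumption~\ref{assume:linregdetweak}), return the reweighted mean, invoke Lemma~\ref{lem:gradidentifiability} plus the bias bound, and boost via $O(\log\tfrac 1 \delta)$ independent batches. Two minor remarks: first, in the paper's notation $\Cov_{\tw}(\cdot)$ with no specified center is already the \emph{uncentered} second moment, so your ``technical care point'' about converting between centered and uncentered is unnecessary (though harmless); second, the paper boosts by returning any estimate within $2C(\sqrt{L\eps}\sigma + L\sqrt\eps R)$ of at least a $\tfrac 3 5$ fraction of the batch outputs, rather than a geometric median, but both aggregation rules achieve the same guarantee up to constants.
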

\begin{proof}
We first demonstrate how to construct a noisy gradient oracle with success probability $\ge \frac 8 {10}$. The algorithm is as follows: first, sample a dataset under Models~\ref{assume:huber},~\ref{assume:linregweak}, according to Proposition~\ref{prop:linregweak}. Then, at point $\theta \in \R^d$, with probability $\frac 9 {10}$ Assumption~\ref{assume:linregdetweak} gives us a set $G \defeq G_\theta$ (where we drop the subscript for simplicity, as this proof only discusses a single $\theta$) with $|G| = (1 - \eps)$ such that \eqref{eq:linreg22} holds. If we have the promise $\norm{\theta - \theta^*}_2 \le R$, 
let $w \in \Delta^n$ be the output of 
\[\FCF\Par{\{g_i(\theta)\}_{i \in [n]}, \frac 1 n \one, \frac{9}{10}, \cest \Par{L\sigma^2 + LR^2}},\text{ where } g_i(\theta) \defeq X_i\Par{\inprod{X_i}{\theta} - y_i}.\]
where $\FCF$ (Algorithm~\ref{alg:fcf}) is the algorithm of Proposition~\ref{prop:fcf}. We then output $\E_{j \sim \tw}\Brack{g_j(\theta)}$, where $\tw = \frac{w}{\norm{w}_1}$. The runtime is $O(nd\log^4 n)$ from the bottleneck operation of running $\FCF$. The assumptions of $\FCF$, namely a bound on $\Cov_{\sw_{G}}\Par{\{g_i(\theta)\}_{i \in [n]}}$, are satisfied by Assumption~\ref{assume:linregdetweak}.2. Guarantees of $\FCF$ and Lemma~\ref{lem:gradidentifiability} then imply
\[\norm{\E_{i \sim \sw_{G}}\Brack{g_i(\theta)}- \E_{j \sim \tw}\Brack{g_j(\theta)}}_2 = O\Par{\sqrt{L\eps}\sigma + L\sqrt{\eps}R}.\]
The conclusion follows from Assumption~\ref{assume:linregdetweak}.2 which bounds $\norm{\E_{i \sim \sw_{G}}\Brack{g_i(\theta)} - \nabla F(\theta)}_2$.

We now describe how to boost the success probability, by calling our sample access $T = O(\log \frac 1 \delta)$ times. Let $G^\star \defeq \nabla F(\theta)$ be the true gradient, and run the procedure described above $T$ times, producing $\{G_t\}_{t \in [T]}$, such that each $G_t$ satisfies $\norm{G_t - G^\star}_2 \le C(\sqrt{L\eps}\sigma + L\sqrt{\eps}R)$ with probability at least $\frac{4}{5}$, for some constant $C$. By standard binomial concentration, with probability at least $1 - \delta$, at least $\frac{3}{5}$ of the $\{G_t\}_{t \in [T]}$ will satisfy this bound; call such a satisfying $G_t$ ``good.'' We return any $G_t$ which is within distance $2C(\sqrt{L\eps}\sigma + L\sqrt{\eps}R)$ from at least $\frac{3}{5}$ of the gradient estimates. Note this will never return any $G_t$ with $\norm{G_t - G^\star}_2 > 4C(\sqrt{L\eps}\sigma + L\sqrt{\eps}R)$, since the triangle inequality implies this $G_t$ will miss all the good estimates, a contradiction since there is at most a $\frac 2 5$ fraction which is not good. Thus, this procedure satisfies the requirements with $\cng = 4C$; the additional runtime overhead is $O(\log^2(\frac 1 \delta))$ distance comparisons between our gradient estimates.
\end{proof}

\begin{corollary}\label{cor:smoothong}
Consider a robust Lipschitz (not necessarily smooth) stochastic optimization instance where we have sample access to datasets $\mx = \{X_i\}_{i \in [n]} \in \R^{n \times d}$ and $y = \{y_i\}_{i \in [n]} \in \R^n$ under Models~\ref{assume:huber},~\ref{eq:glmdef},~\ref{model:glmlipschitz} with sample size $n$ corresponding to Proposition~\ref{prop:glmdetok}. For
\[F(\theta) = F^\star(\theta) + \frac \mu 2 \norm{\theta}_2^2 = \E_{f \sim \dist_f}\Brack{f(\theta)} + \frac \mu 2 \norm{\theta}_2^2,\]
we can construct a $(L, 1, \delta)$-radiusless noisy gradient oracle for $F$ in $O\Par{nd \log^3(n)\log^2\Par{\frac n \delta}}$ time, using $O(\log \frac 1 \delta)$ queries of samples from Proposition~\ref{prop:glmdetok}.
\end{corollary}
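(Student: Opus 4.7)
The plan is to mirror the proof of Corollary~\ref{cor:linregong} almost verbatim, substituting the weaker deterministic conditions of Assumption~\ref{assume:glmdet} (which hold for $\tO(d/\eps)$ samples by Proposition~\ref{prop:glmdetok}) in place of Assumption~\ref{assume:linregdetweak}. The two key observations that let us obtain a \emph{radiusless} oracle here are that (i) the per-sample gradients $g_i(\theta) = \nabla f_i(\theta) = \gamma'_{y_i}(\iprod{X_i,\theta})X_i$ have operator-norm-bounded second moment uniformly in $\theta$ because $|\gamma'_{y_i}| \le 1$ and $\msigs \preceq L\id$ (cf.\ the proof of Lemma~\ref{lem:sqrtllip}), and (ii) the regularization term $\tfrac{\mu}{2}\|\theta\|_2^2$ has an exactly computable gradient $\mu\theta$, contributing zero error.

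First, draw a single $\eps$-corrupted dataset $\{(X_i,y_i)\}_{i\in[n]}$ of the size in Proposition~\ref{prop:glmdetok}; with probability $\tfrac 9{10}$ Assumption~\ref{assume:glmdet} holds. Given a query point $\theta \in \R^d$, define $g_i(\theta) = \gamma'_{y_i}(\iprod{X_i,\theta}) X_i$ for $i \in [n]$ and call
\[
w \gets \FCF\Par{\{g_i(\theta)\}_{i \in [n]}, \tfrac 1 n \one, \tfrac{9}{10}, \cest L}.
\]
The operator-norm precondition of $\FCF$ is exactly the bound in Assumption~\ref{assume:glmdet}.2 applied to $w = \tfrac 1 n \one$. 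By Proposition~\ref{prop:fcf}, the output $w$ is saturated and satisfies $\|\sum_i w_i g_i(\theta)g_i(\theta)^\top\|_{\textup{op}} \le 5\cest L$, and since $w$ is saturated the same assumption yields $\|\Cov_{\tw}(\{g_i(\theta)\}_{i \in G})\|_{\textup{op}} \le \cest L$ with $\tw \defeq w/\|w\|_1$. Now apply Lemma~\ref{lem:gradidentifiability} to the uniform distribution $\sw_G$ on $G$ and $\tw$ to conclude
\[
\Bigl\|\E_{i \sim \sw_G}[g_i(\theta)] - \E_{j \sim \tw}[g_j(\theta)]\Bigr\|_2 = O\Par{\sqrt{L\eps}}.
\]
Combining this with the first bullet of Assumption~\ref{assume:glmdet}.2, which gives $\|\E_{i \sim \sw_G}[g_i(\theta)] - \nabla F^\star(\theta)\|_2 \le \cest\sqrt{L\eps}$, we obtain an estimate $\widehat{G}(\theta) \defeq \E_{j \sim \tw}[g_j(\theta)] + \mu\theta$ of $\nabla F(\theta) = \nabla F^\star(\theta) + \mu\theta$ with error $O(\sqrt{L\eps})$ (i.e.\ $\sigma = 1$), and this bound is independent of any radius parameter, as required for a radiusless oracle. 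The runtime of a single call is dominated by $\FCF$, giving $O(nd\log^3 n \log(n/\delta))$ by Proposition~\ref{prop:fcf}.

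To boost the success probability from a constant to $1-\delta$, run the above procedure $T = O(\log(1/\delta))$ times on independently drawn $\eps$-corrupted datasets (as in Corollary~\ref{cor:linregong}), producing candidate gradients $\{\widehat G_t\}_{t \in [T]}$. By a Chernoff/binomial bound, at least a $\tfrac 3 5$ fraction of them lie within distance $C\sqrt{L\eps}$ of $\nabla F(\theta)$ for some absolute $C$; return any $\widehat G_t$ that is within $2C\sqrt{L\eps}$ of at least $\tfrac 3 5$ of the candidates. The triangle inequality ensures this selection rule never outputs a $\widehat G_t$ farther than $4C\sqrt{L\eps}$ from $\nabla F(\theta)$, proving the claim with $\cng = 4C$. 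Additional runtime is $O(\log^2(1/\delta))$ pairwise distance computations, so the total cost fits in the advertised budget. The main thing to check carefully (and the only real departure from the linear regression argument) is that the radiusless nature of the guarantee follows because every bound invoked, $\Cov_{\sw_G}$, $\Cov_{\tw}$, and the post-$\FCF$ operator norm, is controlled by $O(L)$ rather than by $O(L(R^2+\sigma^2))$; this is exactly the payoff of the Lipschitz link assumption.
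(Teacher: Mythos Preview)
Your proposal is correct and takes essentially the same approach as the paper: the paper's own proof is a one-liner stating that the argument of Corollary~\ref{cor:linregong} goes through verbatim with Assumption~\ref{assume:glmdet}.2 substituted for Assumption~\ref{assume:linregdetweak}.2, which is exactly what you have spelled out. Your added remarks on why the oracle is \emph{radiusless} (the uniform $O(L)$ bound from $|\gamma_y'|\le 1$) and on handling the regularizer by adding $\mu\theta$ exactly are correct elaborations of details the paper leaves implicit.
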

\begin{proof}
The proof follows identically to that of Corollary~\ref{cor:linregong}, where we use Assumption~\ref{assume:glmdet}.2 in place of Assumption~\ref{assume:linregdetweak}.2.
\end{proof}

In most of Sections~\ref{ssec:outerloophalf} and~\ref{ssec:outerloop}, we will no longer discuss any specifics of the unknown function $F: \R^d \to \R$ we wish to optimize, except that it is $L$-smooth, $\mu$-strongly convex, has minimizer $\stf$, and supports a noisy gradient oracle $\ong$. We will apply Corollaries~\ref{cor:linregong} and~\ref{cor:smoothong} to derive concrete rates and sample complexities for specific applications at the conclusion of this section.

\subsection{Halving the distance to $\stf$}\label{ssec:outerloophalf}

In this section, we give a subroutine used in our full algorithm, which halves the distance to $\stf$, the minimizer of $F$, outside of a sufficiently large radius. Suppose that we have an initial point $\bt \in \R^d$, as well as a sufficiently large scalar $R \ge 0$ with the promise that (for a universal constant $\clp$)
\begin{equation}\label{eq:halfradiusaccelbound}\norm{\bt - \stf}_2 \le R,\text{ where } R \ge \clp \sigma \sqrt{\frac{\kappa\eps}{\mu}}. \end{equation}

We begin by stating a standard lemma from convex analysis, following from first-order optimality.

\begin{lemma}[Proximal three-point inequality]
\label{lem:3pt}
Let $f$ be a convex function, and let $\set$ be a convex set. For any point $y \in \set$, define $\prox(y) \defeq \argmin_{x \in \set} \Brace{ f(x) + \norm{x-y}_2^2 }$. Then if $x = \prox(y)$, 
\[
\left\langle \nabla f(x), x - u \right\rangle \leq \frac{1}{2} \Par{ \norm{y-u}_2^2 - \norm{x-u}_2^2 - \norm{x-y}_2^2 }, \text{ for all } u \in \set.
\]
\end{lemma}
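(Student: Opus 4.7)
The plan is to derive the inequality directly from the first-order optimality condition for the (strongly) convex minimization problem defining $x = \prox(y)$, followed by the standard polarization identity. (I note that for the factor of $\tfrac{1}{2}$ on the right-hand side to come out correctly, the definition should read $\prox(y) = \argmin_{x \in \set}\{f(x) + \tfrac{1}{2}\norm{x-y}_2^2\}$; I will proceed under this convention, which is the usual proximal-operator definition, and remark that under the literal definition as stated one would simply drop the $\tfrac{1}{2}$ in what follows.)

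First I would write $g(x) \defeq f(x) + \tfrac{1}{2}\norm{x-y}_2^2$, which is convex (in fact strongly convex) and differentiable since $f$ is convex and differentiable. Because $\set$ is convex and $x = \argmin_{\set} g$, the first-order optimality condition for constrained convex minimization gives
\[\iprod{\nabla g(x), u - x} \ge 0 \text{ for all } u \in \set.\]
Unpacking $\nabla g(x) = \nabla f(x) + (x - y)$ and rearranging yields the pre-polarization bound
\[\iprod{\nabla f(x), x - u} \le \iprod{y - x, x - u}, \text{ for all } u \in \set.\]

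Next I would apply the polarization identity with $a = y - x$ and $b = x - u$, so that $a + b = y - u$:
\[\norm{y - u}_2^2 = \norm{y - x}_2^2 + 2\iprod{y - x, x - u} + \norm{x - u}_2^2,\]
which rearranges to $\iprod{y - x, x - u} = \tfrac{1}{2}\Par{\norm{y-u}_2^2 - \norm{x-u}_2^2 - \norm{x-y}_2^2}$. Substituting into the previous display gives the claim.

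There is no real obstacle here: the only delicate issue is the aforementioned convention mismatch between the $\tfrac{1}{2}$ on the right-hand side and the absence of $\tfrac{1}{2}$ in the definition of $\prox$, which I would flag as a minor typo and resolve by using the standard convention. The proof itself uses only the first-order optimality condition and an algebraic identity; no additional properties of $f$ beyond convexity and differentiability are invoked, and convexity of $\set$ is used only to justify the variational inequality form of first-order optimality.
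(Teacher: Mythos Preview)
Your proof is correct and matches the paper's own justification exactly: the paper does not give a detailed proof but simply states the lemma as ``following from first-order optimality,'' which is precisely the argument you wrote out (first-order optimality over a convex set plus the polarization identity). Your observation about the missing $\tfrac{1}{2}$ in the definition of $\prox$ is also accurate; the application of the lemma in the paper (to the update $v_{t+1} = \argmin_{v \in \ball}\{\tfrac{a_{t+1}}{L}\langle g_t, v\rangle + \tfrac{1}{2}\|v - v_t\|_2^2\}$) confirms that the $\tfrac{1}{2}$ convention is the intended one.
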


We now state a procedure, $\HalfRadiusAccel$, which returns a new $\theta$ with $\norm{\theta - \stf}_2 \le \half R$. In its statement, we define a sequence of scalars $\{a_t, A_t\}_{0 \le t < T}$ given by the recursions
\begin{equation}\label{eq:atdef}
A_0 = 0,\; A_t = 3 a_t^2,\; A_{t + 1} = A_t + a_{t + 1}.
\end{equation}
The following fact is well-known (see for instance Chapter 2.2 of \cite{Nesterov03}).
\begin{fact}\label{fact:atbound}
For all $0 \le t < T$, $A_t = \Theta(t^2)$ and $a_t = \Theta(t)$.
\end{fact}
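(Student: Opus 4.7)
The plan is to reduce the two recursive relations to a single monotone recursion in $\sqrt{A_t}$, for which linear growth is immediate by telescoping. First I would eliminate $a_{t+1}$ by substituting $a_{t+1} = \sqrt{A_{t+1}/3}$ (from $A_{t+1} = 3a_{t+1}^2$, taking positive roots since $a_{t+1} = A_{t+1} - A_t \ge 0$) into $A_{t+1} = A_t + a_{t+1}$, which yields the single-variable relation $A_{t+1} - A_t = \sqrt{A_{t+1}/3}$. Factoring the left side as $(\sqrt{A_{t+1}} - \sqrt{A_t})(\sqrt{A_{t+1}} + \sqrt{A_t})$ gives the key identity
\begin{equation*}
\sqrt{A_{t+1}} - \sqrt{A_t} \;=\; \frac{\sqrt{A_{t+1}/3}}{\sqrt{A_{t+1}} + \sqrt{A_t}}.
\end{equation*}

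Next I would use monotonicity of $A_t$ to sandwich the denominator. Since $A_{t+1} \ge A_t$, we have $\sqrt{A_{t+1}} \le \sqrt{A_{t+1}} + \sqrt{A_t} \le 2\sqrt{A_{t+1}}$. Substituting these bounds into the identity above produces the two-sided bound
\begin{equation*}
\frac{1}{2\sqrt{3}} \;\le\; \sqrt{A_{t+1}} - \sqrt{A_t} \;\le\; \frac{1}{\sqrt{3}},
\end{equation*}
valid for every $t \ge 0$ (it is easy to check $A_1 = 1/3 > 0$, so no degeneracy occurs after the first step). Telescoping from $A_0 = 0$ then gives $\sqrt{A_t} \in [\,t/(2\sqrt{3}),\, t/\sqrt{3}\,]$, which means $A_t = \Theta(t^2)$, and the bound $a_t = \sqrt{A_t/3} = \Theta(t)$ follows immediately.

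There is essentially no obstacle here; the only ``trick'' is to identify $\sqrt{A_t}$ as the right quantity, after which the recursion telescopes with explicit constants. This is a standard manipulation in the analysis of Nesterov-style momentum sequences (cf. Chapter 2.2 of \cite{Nesterov03}).
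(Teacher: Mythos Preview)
Your proof is correct. The paper does not actually supply its own argument for this fact; it simply cites Chapter~2.2 of \cite{Nesterov03} and moves on. Your self-contained derivation via the telescoping bound on $\sqrt{A_{t+1}}-\sqrt{A_t}$ is exactly the standard computation one finds in that reference, so there is nothing to contrast.
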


\begin{algorithm}[ht!]
	\caption{$\HalfRadiusAccel(\bt, R, \ong, \delta)$}
	\begin{algorithmic}[1]\label{alg:halfradiusaccel}
		\STATE \textbf{Input:} $\ong$, a $(L, \sigma, \frac{\delta}{T})$-noisy gradient oracle for $L$-smooth, $\mu$-strongly convex $F: \R^d \to \R$ with minimizer $\stf$, $\bt \in \R^d$, $R \in \R_{\ge 0}$ satisfying \eqref{eq:halfradiusaccelbound}, and $T = O(\sqrt \kappa)$, $\delta \in (0, 1)$
		\STATE \textbf{Output:} With probability $\ge 1 - \delta$, $\theta \in \R^d$ with $\norm{\theta - \stf}_2 \le \half R$
		\STATE $T \gets O\Par{\sqrt \kappa}$ for a sufficiently large constant, $t \gets 0$, $\theta_0 \gets \bt$, $v_0 \gets \bt$
		\WHILE{$t < T$}
		\STATE $y_t \gets \frac{A_t}{A_{t + 1}} \theta_t + \frac{a_{t + 1}}{A_{t + 1}} v_t$ 
		\STATE $g_t \gets \ong(y_t, 2R)$
		\STATE $\theta_{t + 1} \gets \argmin_{\theta \in \ball}\Brace{  \frac{1}{3 L} \left\langle g_t, \theta \right\rangle + \frac{1}{2} \norm{\theta - y_t}_2^2 }$, where $\ball \defeq \{\theta \mid \norm{\theta - \bt}_2 \le R\}$ \label{line:constraint1}
		\STATE $v_{t + 1} \gets \argmin_{v \in \ball}\Brace{ \frac{a_{t+1}}{L} \inprod{g_t}{v} + \frac {1}{2}\norm{v - v_t}_2^2}$ \label{line:constraint2}
		\STATE $t \gets t + 1$
		\ENDWHILE
		\RETURN $\theta_{t}$
	\end{algorithmic}
\end{algorithm}

We remark throughout we assume that Lines~\ref{line:constraint1} and~\ref{line:constraint2} are implemented exactly for simplicity; it is straightforward to verify from the proof of Lemma~\ref{lem:accelpotential} that it suffices to implement these steps to inverse-polynomial precision in problem parameters. This can be done by a standard binary search (see e.g.\ Proposition 8 of \cite{CarmonJJJLST20}), and is not the bottleneck operation compared to calling $\ong$. 

We give the main technical lemma of this section, which shows a potential bound on iterates of $\HalfRadiusAccel$. Our proof is based on a standard analysis of accelerated methods by \cite{ZhuO17}.

\begin{lemma}\label{lem:accelpotential}
In every iteration $0 \le t \le T$, define 
\[E_t \defeq F(\theta_t) - F(\stf),\; D_t \defeq \half \norm{v_t - \stf}_2^2,\; \Phi_t \defeq \frac{A_t}{L} E_t + D_t.\]
Then, for all $0 \le t < T$, for a universal constant $\cpot$,
\[\Phi_{t + 1} - \Phi_t \le \cpot\Par{t\sigma\sqrt{\frac \eps L} R + t\sqrt{\eps} R^2 + t^2 \sigma^2 \frac \eps L + t^2  \eps R^2}.\]
\end{lemma}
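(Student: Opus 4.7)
The plan is a noise-tolerant potential analysis in the spirit of \cite{ZhuO17}. First I verify that each call $\ong(y_t, 2R)$ in Line 6 is well-posed: by an easy induction using convexity of $\ball$ and that Lines 7-8 are constrained minimizations over $\ball$, every iterate $\theta_t, v_t, y_t$ lies in $\ball$, so $\norm{y_t - \stf}_2 \le \norm{y_t - \bt}_2 + \norm{\bt - \stf}_2 \le 2R$. Writing $\xi_t \defeq \nabla F(y_t) - g_t$, a union bound over the $T$ iterations then yields $\norm{\xi_t}_2 \le \cng(\sqrt{L\eps}\sigma + 2L\sqrt{\eps}R)$ for every $t$ with probability at least $1 - \delta$, and I condition on this event. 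Using the decomposition
\[\Phi_{t+1} - \Phi_t = \tfrac{A_{t+1}}{L}\Par{F(\theta_{t+1}) - F(y_t)} + \tfrac{1}{L}\Par{A_{t+1}F(y_t) - A_t F(\theta_t) - a_{t+1}F(\stf)} + (D_{t+1} - D_t),\]
$L$-smoothness bounds the first bracket by $A_{t+1}\inprod{\nabla F(y_t)}{\theta_{t+1} - y_t} + \tfrac{LA_{t+1}}{2}\norm{\theta_{t+1} - y_t}_2^2$, and convexity at $y_t$ combined with the identity $A_t(\theta_t - y_t) = a_{t+1}(y_t - v_t)$ (which follows from the definition of $y_t$) bounds the second bracket by $a_{t+1}\inprod{\nabla F(y_t)}{v_t - \stf}$.

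I then split $\nabla F(y_t) = g_t + \xi_t$ and handle the $g_t$ contributions via two applications of Lemma~\ref{lem:3pt}. Applying the three-point inequality to the $v_{t+1}$ update with $u = \stf \in \ball$ gives $\tfrac{a_{t+1}}{L}\inprod{g_t}{v_{t+1} - \stf} \le D_t - D_{t+1} - \tfrac{1}{2}\norm{v_{t+1} - v_t}_2^2$, so writing $a_{t+1}\inprod{g_t}{v_t - \stf} = a_{t+1}\inprod{g_t}{v_{t+1} - \stf} + a_{t+1}\inprod{g_t}{v_t - v_{t+1}}$ produces the desired telescoping $D_t - D_{t+1}$ plus a residual.

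The trickiest step is eliminating the combined residual $A_{t+1}\inprod{g_t}{\theta_{t+1} - y_t} + a_{t+1}\inprod{g_t}{v_t - v_{t+1}}$. Introducing the auxiliary point $\tilde\theta_{t+1} \defeq \tfrac{A_t}{A_{t+1}}\theta_t + \tfrac{a_{t+1}}{A_{t+1}}v_{t+1} \in \ball$ (by convexity), the identity $\tilde\theta_{t+1} - y_t = \tfrac{a_{t+1}}{A_{t+1}}(v_{t+1} - v_t)$ lets me rewrite this residual exactly as $A_{t+1}\inprod{g_t}{\theta_{t+1} - \tilde\theta_{t+1}}$. Applying Lemma~\ref{lem:3pt} to the $\theta_{t+1}$ update with $u = \tilde\theta_{t+1}$, and using the recursion $a_{t+1}^2 = A_{t+1}/3$, upper bounds this by $\tfrac{L}{2}\norm{v_{t+1} - v_t}_2^2 - \tfrac{3LA_{t+1}}{2}\norm{\theta_{t+1} - y_t}_2^2$. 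The first summand cancels the $-\tfrac{L}{2}\norm{v_{t+1} - v_t}_2^2$ term from the previous Lemma~\ref{lem:3pt} application, while the second absorbs the smoothness quadratic $\tfrac{LA_{t+1}}{2}\norm{\theta_{t+1} - y_t}_2^2$ with a $-LA_{t+1}\norm{\theta_{t+1} - y_t}_2^2$ left over.

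What remains after cancellation is
\[\Phi_{t+1} - \Phi_t \le -A_{t+1}\norm{\theta_{t+1} - y_t}_2^2 + \tfrac{A_{t+1}}{L}\inprod{\xi_t}{\theta_{t+1} - y_t} + \tfrac{a_{t+1}}{L}\inprod{\xi_t}{v_t - \stf}.\]
I absorb the first noise term into the negative quadratic via AM-GM, at the cost of $\tfrac{A_{t+1}\norm{\xi_t}_2^2}{4L^2}$. For the second, Cauchy-Schwarz together with $\norm{v_t - \stf}_2 \le 2R$ (since $v_t, \stf \in \ball$) gives $\tfrac{2a_{t+1}R}{L}\norm{\xi_t}_2$. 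Substituting the oracle bound $\norm{\xi_t}_2 \le \cng(\sqrt{L\eps}\sigma + 2L\sqrt{\eps}R)$, expanding the square, and invoking Fact~\ref{fact:atbound} ($A_{t+1} = \Theta(t^2)$, $a_{t+1} = \Theta(t)$) yields the four terms of $\cpot(t\sigma\sqrt{\eps/L}R + t\sqrt{\eps}R^2 + t^2\sigma^2\eps/L + t^2\eps R^2)$ claimed in the lemma. The only nontrivial step is the residual cancellation via $\tilde\theta_{t+1}$ in the third paragraph; every other piece is standard accelerated-method bookkeeping.
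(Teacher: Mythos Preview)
Your argument is correct and follows essentially the same route as the paper: both proofs introduce the same auxiliary point (the paper calls it $\tilde y_t$, you call it $\tilde\theta_{t+1}$), apply Lemma~\ref{lem:3pt} to the $v$-update with $u=\stf$, use the identity $A_{t+1}=3a_{t+1}^2$ to cancel the $\tfrac{1}{2}\norm{v_{t+1}-v_t}_2^2$ term, and arrive at the identical final estimate $\Phi_{t+1}-\Phi_t \le \tfrac{2a_{t+1}\rho R}{L} + \tfrac{A_{t+1}\rho^2}{4L^2}$ with $\rho$ the oracle error. The only organizational difference is that the paper packages the $\theta_{t+1}$-step into a helper $\mathsf{Prog}(y_t;g_t)$ and absorbs the noise there via Young's inequality, whereas you invoke Lemma~\ref{lem:3pt} a second time on the $\theta_{t+1}$-update and defer the AM-GM to the end; these are equivalent. (A minor remark: several intermediate expressions in your third paragraph carry stray factors of $L$ that do not match your own final displayed bound; the final bound is the correct one.)
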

\begin{proof}
Throughout this proof Lemma~\ref{lem:3pt} will be applied to the set $\set = \ball$. Fix an iteration $t$. We observe that $\theta_t$ and $v_t$ lie in $\ball$ by the constraints on Lines~\ref{line:constraint1} and~\ref{line:constraint2}: therefore $y_t \in \ball$. As $\norm{\bt - \stf}_2 \leq R$, the triangle inequality yields $\norm{y_t - \stf}_2 \leq \norm{\bt - \stf}_2 + \norm{\bt - y_t}_2 \leq 2 R$: thus by the guarantee of $\ong$ we have, for some $\cng = O(1)$ (adjusting the definition of $\cng$ by a constant)
\[
\norm{g_t - \nabla F\Par{\st_{t + 1}} }_2 \le \cng \Par{\sqrt{L\eps} \sigma + L\sqrt{\eps}R}.
\]
For convenience, we define $\rho = \cng \Par{\sqrt{L\eps} \sigma + L\sqrt{\eps}R}$. We define the helper function 
\[
\mathsf{Prog}(y;g) = \min_{\theta \in \ball} \Brace{\left\langle g, \theta - y\right\rangle + \frac{3L}{2} \norm{\theta-y}_2^2 },
\]
and observe
\begin{equation}\label{eq:proglb}
\begin{aligned}
\mathsf{Prog}(y_t;g_t) &= \min_{\theta \in \ball} \Brace{\left\langle g_t, \theta - y_t\right\rangle + \frac{3L}{2} \norm{\theta-y_t}_2^2 } \\
&\substack{(i)\\=} \frac{3L}{2} \norm{\theta_{t+1} - y_t}^2 + \left\langle g_t, \theta_{t+1} - y_t \right\rangle  \\ 
&= \left(\frac{L}{2} \norm{\theta_{t+1} - y_t}^2 + \left\langle \nabla F(y_t), \theta_{t+1} - y_t \right\rangle\right) + L \norm{\theta_{t+1} - y_t}_2^2 - \left\langle  \nabla F(y_t) - g_t, \theta_{t+1} - y_t \right\rangle \\
&\substack{(ii) \\ \geq} F(\theta_{t+1}) - F(y_t) + L \norm{\theta_{t+1} - y_t}_2^2 - \left\langle  \nabla F(y_t) - g_t, \theta_{t+1} - y_t \right\rangle \\
&\substack{(iii) \\ \geq} F(\theta_{t+1}) - F(y_t) - \frac{1}{4L} \norm{\nabla F(y_t) - g_t}_2^2 \geq F(\theta_{t+1}) - F(y_t) - \frac{\rho^2}{4L}. 
\end{aligned}
\end{equation}
Here $(i)$ is by the optimality of $\theta_{t+1}$, $(ii)$ holds via the $L$-smoothness of $F$, and $(iii)$ follows from Young's inequality $\langle a,b \rangle  - \frac{1}{2} \norm{a}_2^2 \leq \frac{1}{2} \norm{b}_2^2$ and our bound on $\norm{\nabla F(y_t) - g_t}_2$. Next, we note
\begin{equation}
\label{eqn:3pt}
\frac{a_{t+1}}{L} \left\langle g_t, v_{t+1} - \stf \right\rangle \leq \frac{1}{2} \Par{  \norm{v_t - \stf}_2^2 - \norm{v_{t+1} - \stf}_2^2 - \norm{v_t - v_{t+1}}_2^2 }
\end{equation}
by the proximal three-point inequality Lemma~\ref{lem:3pt} on Line~\ref{line:constraint2}. Define
\newcommand{\yt}{\widetilde{y}_t}
\[
\yt = \frac{A_t}{A_{t+1}} \theta_{t} + \frac{a_{t+1}}{A_{t+1}} v_{t+1},
\]
and note that $y_t - \yt = \frac{a_{t+1}}{A_{t+1}} \Par{ v_t - v_{t+1} }$ and $\yt \in \ball$ by convexity. 
Consequently,
\begin{align*}
\frac{a_{t+1}}{L} \left\langle g_t, v_{t} - \stf \right\rangle &= \frac{a_{t+1}}{L} \left\langle g_t, v_{t} - v_{t+1} \right\rangle  + \frac{a_{t+1}}{L} \left\langle g_t, v_{t+1} - \stf \right\rangle \\ 
&\substack{(i) \\ \leq} \frac{a_{t+1}}{L} \left\langle g_t, v_{t} - v_{t+1} \right\rangle  + \frac{1}{2} \Par{  \norm{v_t - \stf}_2^2 - \norm{v_{t+1} - \stf}_2^2 - \norm{v_t - v_{t+1}}_2^2 } \\
&\substack{(ii) \\ =} \frac{A_{t+1}}{L} \left\langle g_t, y_{t} - \yt \right\rangle - \frac{A_{t+1}^2}{2 a_{t+1}^2} \norm{y_t - \yt}_2^2 + D_t - D_{t+1} \\
&\substack{(iii) \\ =} \frac{A_{t+1}}{L} \Par{ \left\langle g_t, y_{t} - \yt \right\rangle - \frac{3L}{2} \norm{y_t - \yt}_2^2 }+ D_t - D_{t+1}  \\
&\substack{(iv) \\ \leq} -\frac{A_{t+1}}{L} \mathsf{Prog}(y_t; g_t) + D_t - D_{t+1}  \\
&\substack{(v) \\ \leq} -\frac{A_{t+1}}{L} \Par{F(\theta_{t+1}) - F(y_t) - \frac{\rho^2}{4L} } + D_t - D_{t+1}.
\end{align*}
Here $(i)$ uses \eqref{eqn:3pt}, $(ii)$ uses the definition of $\yt$, $(iii)$ uses that $A_{t+1} = 3 a_{t+1}^2$, $(iv)$ uses the definition of $\mathsf{Prog}$, and $(v)$ uses our lower bound on $\mathsf{Prog}(y_t; g_t)$ \eqref{eq:proglb}. Finally, by convexity of $F$ we have
\begin{align*}
\frac{a_{t+1}}{L} \left( F(y_t) - F(\stf) \right) &\leq \frac{a_{t+1}}{L} \left\langle \nabla F(y_t), y_t - \stf \right\rangle \\ 
&= \frac{a_{t+1}}{L} \left\langle \nabla F(y_t), y_t - v_t \right\rangle + \frac{a_{t+1}}{L} \left\langle \nabla F(y_t),v_t - \stf \right\rangle \\ 
&\substack{(i) \\ =} \frac{A_t}{L}\left\langle \nabla F(y_t), \theta_t - y_t \right\rangle  + \frac{a_{t+1}}{L} \left\langle \nabla F(y_t),v_t - \stf \right\rangle \\ 
&\leq \frac{A_t}{L} \Par{ F(\theta_t) - F(y_t)}  +\frac{a_{t+1}}{L}  \left\langle g_t ,v_t - \stf \right\rangle + \frac{a_{t+1}}{L} \left\langle \nabla F(y_t) - g_t ,v_t - \stf \right\rangle \\
&\substack{(ii) \\ \leq}   \frac{A_t}{L} \Par{ F(\theta_t) - F(y_t)}  +\frac{a_{t+1}}{L}  \left\langle g_t ,v_t - \stf \right\rangle + \frac{a_{t+1}}{L} \norm{\nabla F(y_t) - g_t}_2 \norm{v_t - \stf}_2 \\
&\substack{(iii) \\ \leq}   \frac{A_t}{L} \Par{ F(\theta_t) - F(y_t)}  +\frac{a_{t+1}}{L}  \left\langle g_t ,v_t - \stf \right\rangle + \frac{2 a_{t+1} \rho R}{L} 
\end{align*}
where $(i)$ used $y_t - v_t = \frac{A_t}{a_{t+1}} \Par{\theta_t - y_t}$, $(ii)$ used the Cauchy-Schwarz inequality, and $(iii)$ used that $\norm{v_t - \stf}_2 \leq \norm{v_t - \bt}_2 + \norm{\bt - \stf}_2 \leq 2R$.  Combining the above two equations and rearranging,
\begin{align*}
\frac{a_{t+1}}{L} \left( F(y_t) - F(\stf) \right) + \frac{A_{t+1}}{L} \Par{F(\theta_{t+1}) - F(y_t)}  &\leq \frac{A_t}{L} \left( F(\theta_t) - F(y_t) \right) \\
&+ D_t - D_{t+1} + \frac{2 a_{t+1} \rho R}{L} + \frac{A_{t+1} \rho^2 }{4L^2}.
\end{align*}
Adding $\frac{A_t}{L} \Par{F(y_t) - F(\stf)}$ to both sides, we obtain 
\[
\Phi_{t+1} - \Phi_t \leq  \frac{2 a_{t+1} \rho R}{L} + \frac{A_{t+1} \rho^2 }{4L^2}. 
\]
Finally, applying Fact~\ref{fact:atbound}, we see that this potential increase is indeed bounded as 
\begin{align*}
\frac{2 a_{t+1} \rho R}{L} + \frac{A_{t+1} \rho^2 }{4L^2} &= O \Par{ \frac{t \rho R}{L} + \frac{t^2 \rho^2}{L^2}} = O\Par{t\sigma\sqrt{\frac \eps L} R + t\sqrt{\eps} R^2 + t^2 \sigma^2 \frac \eps L + t^2  \eps R^2}.
\end{align*}
\end{proof}
Finally, we are ready to analyze the output of $\HalfRadiusAccel$.

\begin{lemma}\label{lem:mainhalfradiusaccel}
With probability at least $1 - \delta$, the output $\theta_T$ of $\HalfRadiusAccel$ satisfies
\[\norm{\theta_T - \stf}_2 \le \half R.\]
\end{lemma}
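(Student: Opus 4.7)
The plan is to telescope the per-iteration potential bound given by Lemma~\ref{lem:accelpotential}, and then convert the resulting bound on $\Phi_T = \frac{A_T}{L} E_T + D_T$ into a distance bound via strong convexity. Since $\theta_0 = v_0 = \bt$ and $\norm{\bt - \stf}_2 \le R$, the initial potential satisfies $\Phi_0 = D_0 \le \tfrac{1}{2} R^2$. Summing Lemma~\ref{lem:accelpotential} over $0 \le t < T$ and using $\sum_{t<T} t = O(T^2)$ and $\sum_{t<T} t^2 = O(T^3)$ gives
\[
\Phi_T \;\le\; \tfrac{1}{2} R^2 + O\!\Par{T^2 \sigma \sqrt{\tfrac{\eps}{L}} R + T^2 \sqrt{\eps}\, R^2 + T^3 \sigma^2 \tfrac{\eps}{L} + T^3 \eps R^2}.
\]

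To convert $\Phi_T$ to a distance bound, I use that $F$ is $\mu$-strongly convex with minimizer $\stf$, so $E_T \ge \tfrac{\mu}{2} \norm{\theta_T - \stf}_2^2$, and hence $\Phi_T \ge \tfrac{A_T}{L} E_T \ge \tfrac{A_T}{2\kappa}\norm{\theta_T - \stf}_2^2$. Recalling Fact~\ref{fact:atbound}, we have $A_T = \Theta(T^2)$, so choosing $T = C\sqrt{\kappa}$ for a sufficiently large constant $C$ yields $A_T \ge c \kappa$ for an absolute constant $c$, and therefore
\[
\norm{\theta_T - \stf}_2^2 \;\le\; \tfrac{2\kappa}{A_T} \Phi_T \;=\; O(\Phi_T).
\]
To conclude $\norm{\theta_T - \stf}_2 \le \tfrac{1}{2} R$, it suffices to show each of the four error terms above, after substituting $T = \Theta(\sqrt{\kappa})$, is bounded by a small enough constant times $R^2$ (the $\Phi_0$ contribution is already $\tfrac{1}{2} R^2$, and the remaining slack can be absorbed by enlarging $\clp$ and $C$).

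Now I check the four terms using the standing assumptions $\eps\kappa^2 = O(1)$ and $R \ge \clp \sigma \sqrt{\kappa \eps / \mu}$, together with $\kappa/L = 1/\mu$. The term $T^2 \sqrt{\eps} R^2 = O(\kappa \sqrt{\eps}\, R^2)$ is controlled by $\eps\kappa^2 = O(1)$, and similarly $T^3 \eps R^2 = O(\kappa^{3/2} \eps R^2) = O(R^2)$ since $\eps\kappa^{3/2} \le \eps\kappa^2$. For the additive noise-floor terms, $T^2 \sigma \sqrt{\eps/L}\,R = O(\sigma R \sqrt{\eps\kappa/\mu})$ which is at most $O(R^2)$ precisely because $R \ge \clp \sigma \sqrt{\kappa \eps/\mu}$, and $T^3 \sigma^2 \eps/L = O(\sigma^2 \eps \sqrt{\kappa}/\mu) \le O(\sigma^2 \eps \kappa/\mu) \le O(R^2)$ by the same lower bound on $R$. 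Putting these together establishes the deterministic distance bound.

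Finally, the failure probability analysis is a union bound: each call to $\ong(y_t, 2R)$ (valid because $\norm{y_t - \stf}_2 \le \norm{y_t - \bt}_2 + \norm{\bt - \stf}_2 \le 2R$ by the constraint sets in Lines~\ref{line:constraint1}--\ref{line:constraint2}) fails with probability at most $\delta/T$, so all $T$ queries succeed simultaneously with probability at least $1 - \delta$, conditional on which Lemma~\ref{lem:accelpotential} holds in every step. The main subtlety is the bookkeeping: making sure all four error terms can be simultaneously dominated, which is precisely where the twin conditions $\eps\kappa^2 = O(1)$ and $R \ge \clp \sigma \sqrt{\kappa\eps/\mu}$ become necessary (the former for multiplicative noise at the accelerated rate, the latter as the noise floor below which halving is infeasible), and tuning the constants $C$ and $\clp$ against the universal constants from $\cpot$.
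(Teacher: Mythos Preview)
Your proof is correct and follows essentially the same approach as the paper's: telescope Lemma~\ref{lem:accelpotential}, bound $\Phi_0 \le \tfrac{1}{2}R^2$, control the four accumulated error terms using $\eps\kappa^2 = O(1)$ and the lower bound $R \ge \clp \sigma\sqrt{\kappa\eps/\mu}$, and then convert $\Phi_T$ to a distance bound via $\Phi_T \ge \tfrac{A_T}{L}E_T \ge \tfrac{A_T}{2\kappa}\norm{\theta_T - \stf}_2^2$ with $A_T = \Theta(\kappa)$. The paper also explicitly notes the order-of-constants issue you flag at the end (fix $T$ first from Fact~\ref{fact:atbound}, then choose $\clp$ large enough to make the error sum at most $R^2$), so your treatment matches.
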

\begin{proof}
The failure probability comes from union bounding over the failures of calls to $\ong$, so we discuss correctness assuming all calls succeed. By telescoping Lemma~\ref{lem:accelpotential} over $T$ iterations, we have
\[\Phi_{T} \le \Phi_0 + \cpot \sum_{t \in [T]} \Par{t\sigma\sqrt{\frac \eps L} R + t\sqrt{\eps} R^2 + t^2 \sigma^2 \frac \eps L + t^2  \eps R^2}.\]
It is clear from definition that $\Phi_0 \le \half R^2$, so it remains to bound all other terms. By examination,
\begin{align*}
\sum_{t \in [T]} t\sigma\sqrt{\frac \eps L} R = O\Par{\sigma\kappa \sqrt{\frac \eps L} R} = O\Par{R^2}, \\
\sum_{t \in [T]} t\sqrt{\eps} R^2 = O\Par{\kappa\sqrt{\eps} R^2} = O\Par{R^2}, \\
\sum_{t \in [T]} t^2 \sigma^2 \frac \eps L = O\Par{\kappa^{1.5} \sigma^2 \frac \eps L} = O\Par{R^2}, \\
\sum_{t \in [T]}  t^2 \eps R^2 = O\Par{\kappa^{1.5} \eps R^2} = O\Par{R^2}.
\end{align*}
Each of the above lines follows from $\eps \kappa^2$ being sufficiently small, and the lower bound in \eqref{eq:halfradiusaccelbound}. Thus for a large enough value of $\clp$ in \eqref{eq:halfradiusaccelbound}, we have that $\Phi_T \le R^2$. Since $\Phi_T = A_T E_T + D_T \ge A_T E_T$, choosing a sufficiently large value of $T = \sqrt{\kappa}$ combined with Fact~\ref{fact:atbound} yields
\[E_T \le \frac{R^2}{A_T} \le \frac{LR^2}{8\kappa} = \frac{\mu R^2}{8}. \]
The conclusion follows from strong convexity of $F$, which implies $E_T \ge \frac \mu 2 \norm{\theta_T - \stf}_2^2$.

\textit{A note on constants.} To check there are no conflict of interests hidden in the constants of this proof, note first that conditional on the bound at time $T$ being at most $R^2$, the number of iterations $T$ can be chosen solely as a function of the constants in Fact~\ref{fact:atbound}. From this point, the constant $\clp$ in \eqref{eq:halfradiusaccelbound} can be chosen to ensure that the potential bound is indeed $R^2$.
\end{proof}

\subsection{Full accelerated algorithm}\label{ssec:outerloop}

We conclude this section with a statement of a complete accelerated algorithm, and its applications.

\begin{algorithm}[ht!]
	\caption{$\RobustAccel(\theta_0, R_0, \ong, \delta)$}
	\begin{algorithmic}[1]\label{alg:robustaccel}
		\STATE \textbf{Input:} $\ong$, a $(L, \sigma, \frac{\delta}{N})$-noisy gradient oracle for $L$-smooth, $\mu$-strongly convex $F: \R^d \to \R$ with minimizer $\stf$ for $N = O\Par{\sqrt{\kappa}\log\Par{\frac{R_0\sqrt{L}}{\sigma \sqrt \eps}}}$, $\theta_0 \in \R^d$ with $\norm{\theta_0 - \stf}_2 \le R_0$, $\delta \in (0, 1)$
		\STATE \textbf{Output:} With probability $\ge 1 - \delta$, $\theta \in \R^d$ with 
		\begin{equation}\label{eq:robustaccelguarantee}\norm{\theta - \stf}_2 = O\Par{\sigma \sqrt{\frac{\kappa\eps}{\mu}}}.\end{equation}
		\STATE $T \gets O\Par{ \log\Par{\frac{R_0 \sqrt \mu}{\sigma \sqrt{\kappa\eps}}}}$ for a sufficiently large constant, $t \gets 0$
		\WHILE{$t < T$}
		\STATE $\theta_{t + 1} \gets \HalfRadiusAccel(\theta_t, R_t, \ong, \frac{\delta}{T})$
		\STATE $R_{t + 1} \gets \half R_t$
		\STATE $t \gets t + 1$
		\ENDWHILE
		\RETURN $\theta_{t}$
	\end{algorithmic}
\end{algorithm}

\begin{proposition}\label{prop:robustaccel}
$\RobustAccel$ correctly returns $\theta$ satisfying \eqref{eq:robustaccelguarantee} with probability $\ge 1 - \delta$. It runs in $O\Par{\sqrt{\kappa}\log\Par{\frac{R_0\sqrt{L}}{\sigma \sqrt \eps}}}$ calls to $\ong$, and $O\Par{\sqrt{\kappa} d\log\Par{\frac{R_0\sqrt{L}}{\sigma \sqrt \eps}} }$ additional time.
\end{proposition}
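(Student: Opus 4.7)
The plan is to prove Proposition~\ref{prop:robustaccel} by a straightforward induction over the outer loop iterations of $\RobustAccel$, invoking Lemma~\ref{lem:mainhalfradiusaccel} once per iteration to obtain geometric contraction in the distance to $\stf$. Concretely, I would maintain the invariant that after iteration $t$ of the outer loop, $\norm{\theta_t - \stf}_2 \le R_t \defeq 2^{-t}R_0$. The base case $t=0$ holds by the assumed input bound. For the inductive step, I would verify that the precondition of $\HalfRadiusAccel$ (namely $\norm{\theta_t - \stf}_2 \le R_t$ and $R_t \ge \clp\sigma\sqrt{\kappa\eps/\mu}$ as required by \eqref{eq:halfradiusaccelbound}) is met, which is exactly the inductive hypothesis combined with the fact that the loop terminates once $R_t$ drops below the threshold. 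An application of Lemma~\ref{lem:mainhalfradiusaccel} then yields $\norm{\theta_{t+1} - \stf}_2 \le \tfrac{1}{2}R_t = R_{t+1}$, closing the induction.

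Second, I would count the iterations needed to reach the noise floor. Since $R_t$ is halved each outer iteration, after $T = O\bigl(\log(R_0\sqrt{\mu}/(\sigma\sqrt{\kappa\eps}))\bigr)$ rounds we have $R_T = O(\sigma\sqrt{\kappa\eps/\mu})$, and the final $\theta_T$ satisfies \eqref{eq:robustaccelguarantee}. Note $\log(R_0\sqrt{\mu}/(\sigma\sqrt{\kappa\eps})) = \Theta(\log(R_0\sqrt{L}/(\sigma\sqrt{\eps})))$ up to $\log\kappa$ factors absorbed in $T$, matching the claimed iteration count.

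Third, for the failure probability I would take a union bound. The total number of calls to $\ong$ across the whole algorithm is at most $N = O(\sqrt{\kappa}\cdot T) = O\bigl(\sqrt{\kappa}\log(R_0\sqrt{L}/(\sigma\sqrt{\eps}))\bigr)$, since each of the $T$ outer iterations invokes $\HalfRadiusAccel$, which by its definition issues $O(\sqrt{\kappa})$ queries to $\ong$. With $\ong$ set to have per-call failure probability $\delta/N$, a union bound gives total failure probability at most $\delta$. Equivalently, each call to $\HalfRadiusAccel$ is invoked with failure budget $\delta/T$ in the algorithm, which it distributes among its internal $\ong$ queries; the bookkeeping is consistent.

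Finally, the runtime follows by summing: $N = O(\sqrt{\kappa}\log(R_0\sqrt{L}/(\sigma\sqrt{\eps})))$ calls to $\ong$, plus $O(d)$ additional work per inner iteration (the two prox-type minimizations on the Euclidean ball $\ball$ in Lines~\ref{line:constraint1}--\ref{line:constraint2}, each of which reduces by first-order optimality to evaluating a closed-form projection onto $\ball$, implementable to sufficient precision via a $\log$-overhead binary search as noted after Algorithm~\ref{alg:halfradiusaccel}). Summed, this yields $O(\sqrt{\kappa}\,d\log(R_0\sqrt{L}/(\sigma\sqrt{\eps})))$ additional time. There is no real obstacle here — the proof is essentially bookkeeping — but the subtlest point is ensuring that the precondition \eqref{eq:halfradiusaccelbound} of Lemma~\ref{lem:mainhalfradiusaccel} is maintained inductively, which is why the outer loop terminates exactly at the noise floor $\clp\sigma\sqrt{\kappa\eps/\mu}$ and does not attempt further halving below it.
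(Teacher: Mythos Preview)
Your proposal is correct and follows essentially the same approach as the paper: iterate Lemma~\ref{lem:mainhalfradiusaccel} for $T$ outer rounds to obtain geometric contraction, union bound over the at most $N$ oracle calls for the failure probability, and account $O(d)$ additional work per inner iteration for the runtime. The paper's proof is a three-line sketch of exactly this argument, so your more detailed inductive bookkeeping (including the check that the precondition \eqref{eq:halfradiusaccelbound} is maintained until the noise floor) is a faithful expansion.
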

\begin{proof}
Correctness is immediate by iterating the guarantees of Lemma~\ref{lem:mainhalfradiusaccel} $T$ times, and taking a union bound over all (at most $N$) calls to $\ong$, the only source of randomness in the algorithm. The runtime follows from examining $\HalfRadiusAccel$ and $\ong$, since all operations take $O(d)$ time other than calls to $\ong$.
\end{proof}

By combining Proposition~\ref{prop:robustaccel} with Corollary~\ref{cor:linregong} and~\ref{cor:smoothong}, we derive the following conclusions.

\begin{theorem}\label{thm:accellinreg}
Under Models~\ref{assume:huber},~\ref{assume:linregweak}, supposing $\eps \kappa^2$ is sufficiently small, given $\theta_0 \in \R^d$ and $\norm{\theta_0 - \st}_{\msigs} \le R_0$, $\RobustAccel$ using the noisy gradient oracle of Corollary~\ref{cor:linregong} returns $\theta$ with $\norm{\theta - \st}_{\msigs} = O\Par{\sigma \kappa \sqrt \eps}$ with probability at least $1 - \delta$. The algorithm runs in 
\[O\Par{nd\sqrt{\kappa} \log\Par{\frac{R_0}{\sigma \sqrt \eps}} \log^3(n) \log^2\Par{\frac{n\log\Par{\frac{R_0}{\sigma}}}{\delta\eps}}}\text{ time,}\]
where $n$ is the dataset size of Proposition~\ref{prop:linregweak}. The sample complexity of the method is
\[O\Par{\log\Par{\frac{\log\Par{\frac{R_0} \sigma}}{\delta\eps}} \cdot \Par{\frac{d\log(d/\eps)}{\eps}}}.\]
\end{theorem}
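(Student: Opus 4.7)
The proof is largely a composition of Proposition~\ref{prop:robustaccel} with the noisy gradient oracle of Corollary~\ref{cor:linregong}, plus two norm conversions from $\|\cdot\|_{\msigs}$ to $\|\cdot\|_2$ and back. First, I would observe that $F^\star(\theta) = \E_{(X,y)\sim\dist_{Xy}}[\tfrac{1}{2}(\iprod{X,\theta}-y)^2]$ has Hessian $\msigs$, so Model~\ref{assume:linregweak}.1 gives $\mu \id \preceq \nabla^2 F^\star \preceq L \id$. Thus $F^\star$ satisfies the smoothness and strong convexity hypotheses of Proposition~\ref{prop:robustaccel} with $\stf = \st$ and condition number $\kappa = L/\mu$. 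Corollary~\ref{cor:linregong} then yields a $(L, \sigma, \delta')$-noisy gradient oracle for $F^\star$ in time $O(nd \log^3(n) \log^2(n/\delta'))$ per query, using $O(\log(1/\delta'))$ boosting replications of a base sample of size $n = O(d \log(d/\eps)/\eps)$ as in Proposition~\ref{prop:linregweak}.

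Next, I would set up the input to $\RobustAccel$. Since $\msigs \succeq \mu \id$, the assumed $\msigs$-norm bound gives $\norm{\theta_0 - \st}_2 \le \norm{\theta_0 - \st}_{\msigs}/\sqrt{\mu} \le R_0/\sqrt{\mu}$, so we call $\RobustAccel(\theta_0, R_0/\sqrt{\mu}, \ong, \delta)$. The number of outer iterations is then $N = O(\sqrt{\kappa} \log(R_0 \sqrt{L}/(\sigma \sqrt{\eps\mu}))) = O(\sqrt{\kappa} \log(R_0/(\sigma \sqrt{\eps})))$, and we instantiate each noisy gradient oracle call with failure probability $\delta' = \delta/N$, using one sample set of size $n$ (on the event of probability $9/10$ that Assumption~\ref{assume:linregdetweak} holds) together with fresh boosting randomness per query. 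The hypothesis $\eps \kappa^2$ sufficiently small is exactly the hypothesis required by Lemma~\ref{lem:mainhalfradiusaccel} inside $\HalfRadiusAccel$, so Proposition~\ref{prop:robustaccel} applies.

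From Proposition~\ref{prop:robustaccel}, with probability $1-\delta$ we obtain $\theta$ with $\norm{\theta - \st}_2 = O(\sigma \sqrt{\kappa \eps/\mu})$. The final norm conversion uses $\msigs \preceq L \id$, so
\[
\norm{\theta - \st}_{\msigs} \le \sqrt{L}\,\norm{\theta - \st}_2 = O\Par{\sigma \sqrt{L}\cdot \sqrt{\kappa \eps/\mu}} = O(\sigma \kappa \sqrt{\eps}),
\]
which is the claimed error. The two factors of $\sqrt{\kappa}$ that appear here (one from converting $R_0$ to $\ell_2$ distance, one from converting the output back to $\msigs$-norm) are precisely the norm-conversion losses discussed in Section~\ref{ssec:techniques}, and are exactly the losses that Theorem~\ref{thm:fastreg} circumvents by working in $\msigs$-norm throughout.

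Finally, I would collect runtime and sample complexity. The runtime of Proposition~\ref{prop:robustaccel} is $N$ oracle calls plus $O(Nd)$ overhead; plugging in the per-call cost from Corollary~\ref{cor:linregong} with $\delta' = \delta/N$ yields the stated bound $\tO(nd\sqrt{\kappa})$ with the explicit polylogarithmic factors. The total sample complexity is the base $n = O(d\log(d/\eps)/\eps)$ samples (used for every query) times the $O(\log(N/\delta)) = O(\log(\log(R_0/\sigma)/(\delta \eps)))$ boosting factor, matching the displayed expression. The only mildly subtle point is that the same sample set supports all $N$ queries because Assumption~\ref{assume:linregdetweak} is a uniform-over-$\theta$ guarantee on the data, so we only pay the boosting factor for the algorithmic randomness inside the oracle; no main obstacle arises beyond book-keeping the norm conversions and the union bound over $N$ oracle failure events.
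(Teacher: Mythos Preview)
Your proposal is correct and matches the paper's approach: the paper treats Theorem~\ref{thm:accellinreg} as an immediate consequence of combining Proposition~\ref{prop:robustaccel} with Corollary~\ref{cor:linregong} (plus the remark that the same samples can be reused across all $\theta$ since Assumption~\ref{assume:linregdetweak} holds uniformly in $\theta$), and you have filled in exactly the two $\msigs \leftrightarrow \ell_2$ norm conversions and the $\delta' = \delta/N$ union-bound bookkeeping that the paper leaves implicit. The only minor wording issue is your phrase ``one sample set of size $n$ \dots\ together with fresh boosting randomness per query'': Corollary~\ref{cor:linregong} actually draws $O(\log(1/\delta'))$ independent datasets for the boosting (not one dataset with fresh internal randomness), but you correctly account for this in your sample-complexity tally, so the final bounds are right.
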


\begin{theorem}\label{thm:accelsmoothopt}
Under Models~\ref{assume:huber},~\ref{eq:glmdef}, and~\ref{model:glmsmooth}, supposing $\eps \kappa^2$ is sufficiently small for $\kappa \defeq \max(1, \frac L \mu)$, given $\theta_0 \in \R^d$ and $\norm{\theta_0 - \streg}_2 \le R_0$, $\RobustAccel$ using the noisy gradient oracle of Corollary~\ref{cor:smoothong} returns $\theta$ with $\norm{\theta - \streg}_2 = O\Par{\sqrt{\frac{\kappa\eps}{\mu}}}$ with probability at least $1 - \delta$. The algorithm runs in
\[O\Par{nd\sqrt{\kappa} \log\Par{\frac{R_0 \sqrt{L + \mu}}{\eps}} \log^3(n) \log^2\Par{\frac{n\log\Par{R_0\sqrt{L + \mu}}}{\delta\eps}}}\text{ time,}\]
where $n$ is the dataset size of Proposition~\ref{prop:glmdetok}. The sample complexity of the method is
\[O\Par{\log\Par{\frac{\log\Par{R_0\sqrt{L + \mu}}}{\delta\eps}} \cdot \Par{\frac{d\log(d/\eps)}{\eps}}}.\]
\end{theorem}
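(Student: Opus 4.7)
The statement is essentially a corollary of Proposition~\ref{prop:robustaccel} combined with Corollary~\ref{cor:smoothong}, so the plan is to verify that the regularized function $F(\theta) \defeq F^\star(\theta) + \frac{\mu}{2}\|\theta\|_2^2$ meets all the preconditions of $\RobustAccel$ when equipped with the noisy gradient oracle of Corollary~\ref{cor:smoothong}, and then compute the resulting parameters. First I would observe that under Model~\ref{model:glmsmooth}, the Hessian bound $\nabla^2 F(\theta) = \E_{X,y\sim\dist_{Xy}}[\gamma_y''(\langle X,\theta\rangle)XX^\top] + \mu\id \preceq \msigs + \mu\id \preceq (L+\mu)\id$ shows $F$ is $(L+\mu)$-smooth, while regularization makes it $\mu$-strongly convex with minimizer $\streg$. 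The effective condition number is $(L+\mu)/\mu = O(\kappa)$ for $\kappa = \max(1, L/\mu)$.

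Next I would construct the noisy gradient oracle: add the deterministic gradient $\mu\theta$ of the regularizer to the output of the radiusless $(L,1,\delta')$-oracle of Corollary~\ref{cor:smoothong}. Since the oracle is \emph{radiusless} (no $R$-dependent error) and the extra $\mu\theta$ contribution is exact, the resulting oracle satisfies Definition~\ref{def:ong} with noise $O(\sqrt{(L+\mu)\eps})$ by increasing $\cng$ by a constant. I would then invoke Proposition~\ref{prop:robustaccel} on $F$ with this oracle, $\theta_0$, and $R_0$, yielding, after $N = O(\sqrt{\kappa}\log(R_0\sqrt{L+\mu}/\sqrt{\eps}))$ oracle queries, a point $\theta$ with $\|\theta - \streg\|_2 = O(\sqrt{\eps/\mu}) = O(\sqrt{\kappa\eps/\mu})$ (where the $\sqrt{\kappa}$ factor arises because the effective smoothness in Proposition~\ref{prop:robustaccel}'s guarantee \eqref{eq:robustaccelguarantee} is $L+\mu$ while the strong convexity is $\mu$). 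Feasibility of applying Proposition~\ref{prop:robustaccel} is maintained since the assumption $\eps\kappa^2 = O(1)$ matches the condition required by $\HalfRadiusAccel$.

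For the runtime, each oracle call costs $O(nd\log^3(n)\log^2(n/\delta'))$ by Corollary~\ref{cor:smoothong} with $\delta' = \delta/N$, giving the advertised product. The nontrivial piece will be the sample complexity. A naive reading of Corollary~\ref{cor:smoothong} would multiply samples by $N$, producing a spurious $\sqrt{\kappa}$ factor; the key observation is that Assumption~\ref{assume:glmdet}, which is what Proposition~\ref{prop:glmdetok} guarantees with high probability, holds \emph{uniformly over all $\theta \in \R^d$ and all saturated $w$} once the dataset is drawn. Thus a single family of $O(\log(N/\delta)) = O(\log(\log(R_0\sqrt{L+\mu})/(\delta\eps)))$ independently sampled datasets (used inside the median-boosting step of Corollary~\ref{cor:smoothong}) can be reused across every one of the $N$ oracle queries, with the only per-query randomness coming from the internal $\FCF$ invocations whose failure probabilities union-bound over $N$ calls without growing the sample count. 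Multiplying the number of datasets by the per-dataset size $O(d\log(d/\eps)/\eps)$ from Proposition~\ref{prop:glmdetok} yields the claimed bound.

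I expect the sample-reuse argument in the last step to be the main subtlety: one must verify that the correctness proofs of Corollaries~\ref{cor:smoothong} and~\ref{cor:linregong} only use independence of the sampled data from the current query point $\theta$ at the level of the deterministic regularity conditions, and that the randomness of $\FCF$ (which provides the actual weights on each query) is independent across calls even when the underlying dataset is fixed. Once this decoupling is established, the union bound over $\FCF$ failures and the single-shot sampling of data give precisely the stated sample, runtime, and error bounds.
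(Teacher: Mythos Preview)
Your proposal is correct and follows exactly the approach the paper intends: the paper presents Theorem~\ref{thm:accelsmoothopt} as an immediate consequence of combining Proposition~\ref{prop:robustaccel} with Corollary~\ref{cor:smoothong}, and your write-up fills in precisely those details (smoothness and strong convexity of the regularized objective, invoking the radiusless oracle, and the sample-reuse argument). Your identification of sample reuse as the only nontrivial step matches the paper's own remark after Theorem~\ref{thm:accellinreg}, which makes the same observation (for Assumption~\ref{assume:glmdet} the argument is even cleaner, since the good set $G$ is uniform over $\theta$).

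One small slip to fix: when you invoke Proposition~\ref{prop:robustaccel}, the bound \eqref{eq:robustaccelguarantee} with $\sigma=1$ gives $\norm{\theta-\streg}_2 = O(\sqrt{\kappa\eps/\mu})$ \emph{directly}; your intermediate claim $O(\sqrt{\eps/\mu})$ and the parenthetical explanation of where the $\sqrt{\kappa}$ ``arises'' are incorrect, though the final bound you state is right. Also note that Corollary~\ref{cor:smoothong} already packages the oracle for the \emph{regularized} function $F^\star+\tfrac{\mu}{2}\|\cdot\|_2^2$ (so the ``add $\mu\theta$'' step is already inside it), and that its $(L,1,\delta)$ guarantee is automatically a $(L+\mu,1,\delta)$ guarantee since the error bound only improves.
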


We remark that Theorem~\ref{thm:accellinreg} can afford to reuse the same samples to construct gradient estimates for all $\theta$ we query per Assumption~\ref{assume:linregdetweak}: though the set $G_\theta$ may change per $\theta$, our noisy estimator also provides a per-$\theta$ guarantee (and hence does not require the set to be consistent across calls).

\section{Lipschitz generalized linear models}
\label{sec:lipschitz}

In this section, we give an algorithm for minimizing the regularized Moreau envelopes of Lipschitz statistical optimization problems under the strong contamination model, following the exposition of Section~\ref{ssec:gdassume}. Concretely, we recall we wish to compute an approximation to
\begin{equation}\label{eq:lipschitzfdef}
\begin{gathered}
\stenv \defeq \argmin_{\theta \in \R^d} \Brace{\Fsg\Par{\theta} + \frac \mu 2 \norm{\theta}_2^2},\text{ where } F^\star(\theta) = \E_{f \sim \dist_f}[f(\theta)], \\
\text{and } \Fsg(\theta) \defeq \inf_{\theta'} \Brace{F^\star(\theta') + \frac{1}{2\lam}\norm{\theta - \theta'}_2^2}.
\end{gathered}
\end{equation}
Recall that in Corollary~\ref{cor:smoothong}, we developed a noisy gradient oracle for $F^\star$, as long as our function distribution is captured by Model~\ref{model:glmlipschitz}. However, techniques of Section~\ref{sec:agd} do not immediately apply to this setting, as $F^\star$ is not smooth. On the other hand, we do not have direct access to $\Fsg$.

We ameliorate this by developing a noisy gradient oracle (Definition~\ref{def:ong}) for the Moreau envelope $\Fsg$ in Section~\ref{ssec:moreauong}, under only Assumption~\ref{assume:glmdet}; this will allow us to apply the acceleration techniques of Section~\ref{sec:agd} to the problem \eqref{eq:lipschitzfdef}, which we complete in Section~\ref{ssec:lipschitzaccel}. Interestingly, our noisy gradient oracle for $\Fsg$ will have noise and runtime guarantees \emph{independent} of the envelope parameter $\lam$, allowing for a range of statistical and runtime tradeoffs for applications.

\subsection{Noisy gradient oracle for the Moreau envelope}\label{ssec:moreauong}

In this section, we give an efficient reduction which enables the construction of a noisy gradient oracle for $\Fsg$, assuming a radiusless noisy gradient oracle for $F^\star$, and that $F^\star$ is Lipschitz. We note that both of these assumptions hold under Model~\ref{model:glmlipschitz}: we showed in Lemma~\ref{lem:sqrtllip} that $F^\star$ is $\sqrt{L}$-Lipschitz, and constructed a radiusless noisy gradient oracle in Corollary~\ref{cor:smoothong}.

To begin, we recall standard facts about the Moreau envelope $\Fsg$, which can be found in e.g.\ \cite{ParikhB14}.

\begin{fact}\label{fact:moreau}
$\Fsg$ is $\lam^{-1}$-smooth, satisfies $0 \le F^\star(\theta) - \Fsg(\theta) \le L\lam$ for all $\theta \in \R^d$, and has gradient
\[\nabla \Fsg(\theta) = \frac{1}{\lam}\Par{\theta - \Prox(\theta)},\text{ where } \Prox(\theta) \defeq \argmin_{\theta'}\Brace{F^\star(\theta') + \frac{1}{2\lam}\norm{\theta - \theta'}_2^2}.\]
\end{fact}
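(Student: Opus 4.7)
The plan is to verify each of the three claims separately. All are standard properties of Moreau envelopes (cf.\ \cite{ParikhB14}), but it is useful to sketch the arguments since they are short and self-contained. Throughout, let $\theta^\circ \defeq \Prox(\theta)$, which is the unique minimizer of $\theta' \mapsto F^\star(\theta') + \frac{1}{2\lam}\norm{\theta - \theta'}_2^2$ since the quadratic term is $\lam^{-1}$-strongly convex and $F^\star$ is convex.

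I would first establish the gradient formula $\nabla \Fsg(\theta) = \lam^{-1}(\theta - \theta^\circ)$. The first-order optimality condition for the inner minimization gives $\lam^{-1}(\theta - \theta^\circ) \in \partial F^\star(\theta^\circ)$. Then, viewing $\theta^\circ$ as a function of $\theta$ and differentiating $F^\star(\theta^\circ(\theta)) + \frac{1}{2\lam}\norm{\theta - \theta^\circ(\theta)}_2^2$ through the chain rule, the terms involving $\nabla \theta^\circ(\theta)$ vanish by optimality of $\theta^\circ$ (this is the envelope theorem / Danskin's theorem), leaving exactly $\lam^{-1}(\theta - \theta^\circ)$. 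In the nonsmooth case one invokes the subdifferential version of Danskin's theorem.

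Next, for smoothness, I would show that $\Prox$ is firmly non-expansive. Applying monotonicity of $\partial F^\star$ to the two optimality inclusions at $\theta_1$ and $\theta_2$ gives
\[
\iprod{\lam^{-1}\Par{(\theta_1 - \Prox(\theta_1)) - (\theta_2 - \Prox(\theta_2))},\, \Prox(\theta_1) - \Prox(\theta_2)} \ge 0,
\]
which rearranges to $\norm{\Prox(\theta_1) - \Prox(\theta_2)}_2^2 \le \iprod{\theta_1 - \theta_2,\, \Prox(\theta_1) - \Prox(\theta_2)}$. Writing $u = \theta_1 - \theta_2$, $v = \Prox(\theta_1) - \Prox(\theta_2)$, this reads $\norm{v}_2^2 \le \iprod{u}{v}$, which expands to $\norm{u - v}_2^2 \le \norm{u}_2^2$. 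Hence $\theta \mapsto \theta - \Prox(\theta)$ is $1$-Lipschitz, so $\nabla \Fsg = \lam^{-1}(\theta - \Prox(\theta))$ is $\lam^{-1}$-Lipschitz, which is exactly $\lam^{-1}$-smoothness of $\Fsg$.

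Finally, for the envelope bounds, the upper bound $\Fsg(\theta) \le F^\star(\theta)$ follows by substituting $\theta' = \theta$ into the infimum. For the lower bound, I would invoke Lemma~\ref{lem:sqrtllip}, which establishes that $F^\star$ is $\sqrt{L}$-Lipschitz. Then for any $\theta'$,
\[
F^\star(\theta') + \frac{1}{2\lam}\norm{\theta - \theta'}_2^2 \ge F^\star(\theta) - \sqrt{L}\norm{\theta - \theta'}_2 + \frac{1}{2\lam}\norm{\theta - \theta'}_2^2.
\]
Minimizing the scalar function $t \mapsto -\sqrt{L} t + \frac{t^2}{2\lam}$ over $t \ge 0$ yields the minimum value $-\frac{L\lam}{2}$ at $t = \sqrt{L}\lam$, so $\Fsg(\theta) \ge F^\star(\theta) - \frac{L\lam}{2} \ge F^\star(\theta) - L\lam$. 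The main subtlety across the argument is treating $\partial F^\star$ carefully when $F^\star$ is only Lipschitz (not smooth), but this is routine given that $F^\star$ is convex with full-domain subdifferential.
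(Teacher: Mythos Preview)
The paper does not prove this statement: it is presented as a ``Fact'' and referred to the literature (specifically \cite{ParikhB14}) without argument. Your self-contained sketch is correct and covers all three claims via the standard route (Danskin for the gradient, firm non-expansiveness of the prox for smoothness, and the Lipschitz bound of Lemma~\ref{lem:sqrtllip} for the envelope gap). Two small remarks: in the smoothness step, the inequality $\norm{v}_2^2 \le \iprod{u}{v}$ is strictly stronger than $\norm{u-v}_2^2 \le \norm{u}_2^2$ (the latter only needs $\norm{v}_2^2 \le 2\iprod{u}{v}$), so ``expands to'' should be ``implies''; and your lower-bound computation actually yields the sharper gap $F^\star(\theta) - \Fsg(\theta) \le \tfrac{L\lam}{2}$, of which the stated bound $L\lam$ is a weakening.
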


Fact~\ref{fact:moreau} demonstrates that to construct a noisy gradient oracle for $\Fsg$, it suffices to approximate the minimizer of the subproblem defining the $\Prox$ operator. To this end, we give a simple algorithm which approximates this proximal minimizer, based on noisy projected gradient descent.

\begin{algorithm}[ht!]
	\caption{$\Moreau(\bt, \ong, \delta)$}
	\begin{algorithmic}[1]\label{alg:moreau}
		\STATE \textbf{Input:} $\ong$, a $(L, 1, \frac{\delta}{T})$-radiusless noisy gradient oracle for $\sqrt{L}$-Lipschitz $F^\star: \R^d \to \R$ for $T = O\Par{\frac 1 \eps \log \frac 1 \eps}$, $\bt \in \R^d$, $\delta \in (0, 1)$
		\STATE \textbf{Output:} With probability $\ge 1 - \delta$, $\hatt$ satisfying for a universal constant $C_{\text{env}}$,
		\begin{equation}\label{eq:moreaubound}
		\norm{\hatt - \Prox(\bt)}_2 \le C_{\text{env}} \sqrt{L\eps} \lam.
		\end{equation}
		\STATE $T \gets O\Par{\frac 1 \eps \log \frac 1 \eps}$ for a sufficiently large constant, $t \gets 0$, $\theta_0 \gets \bt$, $\eta \gets \frac{4\cng^2\eps\lam}{5}$
		\WHILE{$t < T$}
		\STATE $g_t \gets \ong(\theta_t, \infty, \frac{\delta}{T}) + \frac 1 \gamma (\theta_t - \bt)$
		\STATE $\theta_{t + 1} \gets \text{Proj}_{\ball}(\theta_t - \eta g_t)$, where $\text{Proj}$ is the $\ell_2$ projection and $\ball \defeq \Brace{\theta \mid \norm{\theta - \bt}_2 \le 2\sqrt{L}\lam}$
		\STATE $t \gets t + 1$
		\ENDWHILE
		\RETURN $\theta_t$
	\end{algorithmic}
\end{algorithm}

We now begin our analysis of $\Moreau$. In the following discussion, for notational simplicity define $\st_{\bt} \defeq \Prox(\bt)$ to be the exact minimizer of the proximal subproblem. We require a simple helper bound showing $\st_{\bt}$ does not lie too far from $\bt$.

\begin{lemma}\label{lem:moreauargminclose}
For $\ball \defeq \Brace{\theta \mid \norm{\theta - \bt}_2 \le 2\sqrt{L}\lam}$ and $\st_{\bt} \defeq \Prox(\bt)$, $\st_{\bt} \in \ball$.
\end{lemma}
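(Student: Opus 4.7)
The plan is a short optimality argument using only Lipschitzness of $F^\star$. By definition of $\st_{\bt}$ as the minimizer of the proximal subproblem, comparing its objective value against the value at the trivial candidate $\bt$ yields
\[
F^\star(\st_{\bt}) + \frac{1}{2\lam}\norm{\bt - \st_{\bt}}_2^2 \le F^\star(\bt) + \frac{1}{2\lam}\norm{\bt - \bt}_2^2 = F^\star(\bt).
\]
Rearranging gives $\tfrac{1}{2\lam}\norm{\bt - \st_{\bt}}_2^2 \le F^\star(\bt) - F^\star(\st_{\bt})$.

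Next I would invoke Lemma~\ref{lem:sqrtllip}, which established $F^\star$ is $\sqrt{L}$-Lipschitz under Model~\ref{model:glmlipschitz}, to bound the right-hand side by $\sqrt{L}\norm{\bt - \st_{\bt}}_2$. Combining yields
\[
\tfrac{1}{2\lam}\norm{\bt - \st_{\bt}}_2^2 \le \sqrt{L}\norm{\bt - \st_{\bt}}_2.
\]
Dividing through by $\norm{\bt - \st_{\bt}}_2$ (the case $\st_{\bt} = \bt$ being trivial) produces the bound $\norm{\bt - \st_{\bt}}_2 \le 2\sqrt{L}\lam$, which is exactly the statement that $\st_{\bt} \in \ball$.

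There is no real obstacle here; the entire argument is a one-line consequence of optimality and Lipschitzness, and no randomness or filtering machinery is involved. The only subtlety worth flagging in the writeup is that $F^\star$'s Lipschitz constant is $\sqrt{L}$ (not $L$), which is where the factor of $2\sqrt{L}\lam$ in the definition of $\ball$ — rather than $2L\lam$ — comes from; this matches the radius chosen in Algorithm~\ref{alg:moreau} and ensures the projection step in $\Moreau$ never excludes the true proximal minimizer.
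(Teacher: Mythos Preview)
Your proof is correct and is essentially the same as the paper's: both compare the proximal objective at $\st_{\bt}$ versus $\bt$ and then bound $F^\star(\bt)-F^\star(\st_{\bt})$ by $\sqrt{L}\norm{\bt-\st_{\bt}}_2$. The only cosmetic difference is that the paper writes this last step via convexity and the gradient bound $\norm{\nabla F^\star(\bt)}_2\le\sqrt{L}$, whereas you invoke Lemma~\ref{lem:sqrtllip} directly.
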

\begin{proof}
Let $R \defeq \norm{\st_{\bt} - \bt}_2$. Since $\st_{\bt}$ minimizes the proximal subproblem and $F^\star$ is convex,
\begin{align*}
F^\star\Par{\st_{\bt}} + \frac{R^2}{2\lam} \le F^\star\Par{\bt} \le F^\star\Par{\st_{\bt}} + \inprod{\nabla F^\star\Par{\bt}}{\st_{\bt} - \bt} \implies \frac{R^2}{2\lam} \le \sqrt{L}R.
\end{align*}
\end{proof}

We now prove correctness of $\Moreau$.

\begin{lemma}
$\Moreau$ correctly computes $\hatt$ satisfying \eqref{eq:moreaubound} in $O\Par{\frac 1 \eps \log \frac 1 \eps}$ calls to $\ong$, with probability $\ge 1 - \delta$.
\end{lemma}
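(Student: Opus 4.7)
The plan is to analyze $\Moreau$ as standard noisy projected gradient descent applied to the $\frac{1}{\lam}$-strongly convex proximal subproblem $h(\theta) \defeq F^\star(\theta) + \frac{1}{2\lam}\norm{\theta - \bt}_2^2$, whose minimizer $\st_{\bt} \defeq \Prox(\bt)$ lies in $\ball$ by Lemma~\ref{lem:moreauargminclose}. The gradient estimates $g_t$ formed on each iteration satisfy $\norm{g_t - \nabla h(\theta_t)}_2 \le \rho \defeq \cng \sqrt{L\eps}$ with probability at least $1 - \delta/T$: the $\frac{1}{\lam}(\theta_t - \bt)$ contribution is added exactly, and $\ong$ is radiusless so that its error does not grow with the working radius $2\sqrt{L}\lam$.

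The first technical step is a one-step contraction bound. Since $\ball$ is convex and $\st_{\bt} \in \ball$, projection onto $\ball$ is nonexpansive toward $\st_{\bt}$, so $\norm{\theta_{t+1} - \st_{\bt}}_2^2 \le \norm{\theta_t - \eta g_t - \st_{\bt}}_2^2$. Decomposing $g_t = \nabla h(\theta_t) + (g_t - \nabla h(\theta_t))$, using $\frac{1}{\lam}$-strong convexity of $h$ to get $\inprod{\nabla h(\theta_t)}{\theta_t - \st_{\bt}} \ge \frac{1}{2\lam}\norm{\theta_t - \st_{\bt}}_2^2$, applying Young's inequality to absorb the noise crossterm into half of this strong-convexity contribution, and bounding $\norm{g_t}_2 \le G + \rho$ with $G \defeq 3\sqrt{L}$ (valid on $\ball$ since $F^\star$ is $\sqrt{L}$-Lipschitz by Lemma~\ref{lem:sqrtllip} and the regularizer's gradient is at most $\frac{1}{\lam}\cdot 2\sqrt{L}\lam = 2\sqrt{L}$), I would arrive at
\[
\norm{\theta_{t+1} - \st_{\bt}}_2^2 \le \Par{1 - \tfrac{\eta}{2\lam}}\norm{\theta_t - \st_{\bt}}_2^2 + 2\eta\lam\rho^2 + 2\eta^2\Par{G^2 + \rho^2}.
\]

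Next I would substitute the chosen step size $\eta = \frac{4\cng^2 \eps \lam}{5}$, which yields contraction factor $1 - \frac{2\cng^2\eps}{5}$ and per-step additive error $O(\cng^2 L \eps \lam^2)$ (both $2\eta\lam\rho^2$ and $2\eta^2 G^2$ contribute at this scale; the $2\eta^2\rho^2$ term is lower order). Since $\norm{\theta_0 - \st_{\bt}}_2^2 \le 4L\lam^2$ by Lemma~\ref{lem:moreauargminclose}, choosing $T = \Theta(\eps^{-1}\log\eps^{-1})$ with a sufficiently large constant drives the contracted initial error down to $O(L\eps\lam^2)$, which matches the noise floor obtained by summing the geometric series of per-step additive errors. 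This yields $\norm{\theta_T - \st_{\bt}}_2 \le C_{\text{env}}\sqrt{L\eps}\lam$ for $C_{\text{env}} = O(\cng)$, and the overall $\delta$ failure probability follows from a union bound over the $T$ calls to $\ong$.

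The main obstacle is that $h$ is not globally smooth, so the descent-lemma-based analysis from Section~\ref{ssec:outerloophalf} cannot be reused; the contraction has to come entirely from the strong convexity of the quadratic regularizer, which is why the projection onto $\ball$, together with the uniform gradient-norm bound $G$ there, is essential for keeping the $\eta^2 \norm{g_t}_2^2$ term at the desired scale. A second subtle point is that the radiusless promise of $\ong$ is what keeps the noise term $2\eta\lam\rho^2$ of order $\eps^2 L \lam^2$ rather than scaling with the working radius $\sqrt{L}\lam$; without it we would only recover a trivial bound.
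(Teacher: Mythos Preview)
Your proposal is correct and follows essentially the same approach as the paper: noisy projected gradient descent on the $\frac{1}{\lam}$-strongly convex proximal subproblem, using nonexpansiveness of projection, the uniform bound $\norm{g_t}_2 \le 3\sqrt{L} + \rho$ on $\ball$, and Lemma~\ref{lem:moreauargminclose} to supply both feasibility of $\st_{\bt}$ and the initial radius. The only difference is cosmetic: you apply Young's inequality to obtain a single contraction-plus-additive-noise recursion, whereas the paper carries the cross term $\eta\rho\norm{\theta_t - \st_{\bt}}_2$ explicitly and does a two-regime case split (geometric contraction while $\norm{\theta_t - \st_{\bt}}_2 \ge 4\cng\sqrt{L\eps}\lam$, then a per-step drift bound once inside), arriving at the same conclusion.
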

\begin{proof}
We assume throughout correctness of all calls to $\ong$, which follows from a union bound. Consider some iteration $0 \le t < T$, and let $\hatt_{t + 1} \defeq \theta_t - \eta g_t$ be the unprojected iterate. Since Euclidean projections decrease distances to points within a set (see e.g.\ Lemma 3.1, \cite{Bubeck15}), letting $g_t = e_t + \nabla F^\star(\theta_t) + \frac 1 \lam(\theta_t - \bt)$, where $\norm{e_t}_2 \le \cng\sqrt{L\eps}$,
\begin{equation}\label{eq:lipprogderive}
\begin{aligned}
\half \norm{\theta_{t + 1} - \st_{\bt}}_2^2 &\le \half \norm{\hatt_{t + 1} - \st_{\bt}}_2^2 = \half \norm{\theta_t - \eta g_t - \st_{\bt}}_2^2 \\
&= \half \norm{\theta_t - \st_{\bt}}_2^2 - \eta\inprod{e_t + \nabla F^\star(\theta_t) + \frac 1 \lam(\theta_t - \bt)}{\theta_t - \st_{\bt}} + \frac{\eta^2}{2}\norm{g_t}_2^2 \\
&\le \half \norm{\theta_t - \st_{\bt}}_2^2 - \frac \eta \lam \norm{\theta_t - \st_{\bt}}_2^2 + \eta\norm{e_t}_2\norm{\theta_t - \st_{\bt}}_2 + \frac{\eta^2}{2}\norm{g_t}_2^2 \\
&\le \half \norm{\theta_t - \st_{\bt}}_2^2 - \frac \eta \lam \norm{\theta_t - \st_{\bt}}_2^2 + \eta \cng\sqrt{L\eps}\norm{\theta_t - \st_{\bt}}_2 + 5\eta^2 L.
\end{aligned}
\end{equation}
In the third line, we lower bounded $\inprod{\nabla F^\star(\theta_t) + \frac 1 \lam(\theta_t - \bt)}{\bt - \st_{\bt}}$ by using strong convexity of $F^\star + \frac 1 \lam \norm{\cdot - \bt}_2^2$; in the last line, we used the assumed bound on $e_t$ as well as
\[\norm{g_t}_2 \le \norm{\nabla F^\star(\theta_t)}_2 + \norm{e_t}_2 + \frac 1 \lam \norm{\theta_t - \bt}_2 \le 3\sqrt{L} + \cng\sqrt{L\eps} \le \sqrt{10L}, \]
for sufficiently small $\eps$. Next, consider some iteration $t$ where iterate $\theta_t$ satisfies
\begin{equation}\label{eq:outsideradius}\norm{\theta_t - \st_{\bt}}_2 \ge 4\cng\sqrt{L\eps}\lam.\end{equation}
On this iteration, we have from the definition of $\eta$ that
\begin{align*}
\eta \cng \sqrt{L\eps} \norm{\theta_t - \st_{\bt}}_2 \le \frac{\eta}{4\lam}\norm{\theta - \st_{\bt}}_2^2, \;
5\eta^2 L \le \frac{\eta}{4\lam}\norm{\theta - \st_{\bt}}_2^2.
\end{align*}
Plugging these bounds back into \eqref{eq:lipprogderive}, on any iteration where \eqref{eq:outsideradius} holds,
\begin{align*}
\norm{\theta_{t + 1} - \st_{\bt}}_2^2 \le \Par{1 - \frac{\eta}{\lam}} \norm{\theta_t - \st_{\bt}}_2^2 = \Par{1 - \frac 4 5 \cng^2 \eps}\norm{\theta_t - \st_{\bt}}_2^2.
\end{align*}
Because the squared distance $\norm{\theta_t - \st_{\bt}}_2^2$ is bounded by $16L\lam^2$ initially and decreases by a factor of $O(\eps)$ every iteration until \eqref{eq:outsideradius} no longer holds, it will reach an iteration where  \eqref{eq:outsideradius} no longer holds within $T$ iterations. Finally, by \eqref{eq:lipprogderive}, in every iteration $t$ after the first where \eqref{eq:outsideradius} is violated, either the squared distance to $\st_{\bt}$ goes down, or it can go up by at most
\begin{align*}
\eta \cng \sqrt{L\eps} \norm{\theta_t - \st_{\bt}}_2 + 5\eta^2 L = O\Par{\eps^2\lam^2 L}.
\end{align*}
Here we used our earlier claim that the distance can only go up when \eqref{eq:outsideradius} is false. Thus, the squared distance will never be more than $16\cng^2 L(\eps + O(\eps^2)) \lam^2$ within $T$ iterations, as desired.
\end{proof}

By using the gradient characterization in Fact~\ref{fact:moreau} and the noisy gradient oracle implementation of Corollary~\ref{cor:smoothong}, we conclude this section with our Moreau envelope noisy gradient oracle claim.

\begin{corollary}\label{cor:moreauong}
Consider a robust Lipschitz stochastic optimization instance where we have sample access to datasets $\mx = \{X_i\}_{i \in [n]} \in \R^{n \times d}$ and $y = \{y_i\}_{i \in [n]} \in \R^n$ under Models~\ref{assume:huber},~\ref{eq:glmdef},~\ref{model:glmlipschitz} with sample size $n$ corresponding to Proposition~\ref{prop:glmdetok}. For 
\[F(\theta) = \Fsg(\theta) + \frac \mu 2 \norm{\theta}_2^2,\]
 we can construct a $(L, O(1), \delta)$-radiusless noisy gradient oracle in $O(\frac{nd}{\eps}\log^3(n)\log^2(\frac n {\delta\eps}) \log(\frac 1 \eps))$ time. The sample complexity of our noisy gradient oracle is
 \[O\Par{\log\Par{\frac 1 {\delta\eps}} \cdot \Par{\frac {d\log(d/\eps)} \eps}}.\]
\end{corollary}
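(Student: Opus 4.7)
The plan is to reduce construction of a noisy gradient oracle for $F$ to estimation of $\Prox(\theta)$, invoke $\Moreau$ to obtain this estimate, and convert the resulting proximal-error bound into a gradient-error bound via the characterization in Fact~\ref{fact:moreau}. Concretely, Fact~\ref{fact:moreau} gives $\nabla \Fsg(\theta) = \lam^{-1}(\theta - \Prox(\theta))$, so $\nabla F(\theta) = \lam^{-1}(\theta - \Prox(\theta)) + \mu\theta$, and the $\mu\theta$ term can be computed exactly; only $\Prox(\theta)$ requires approximation.

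The preconditions of $\Moreau$ are met under Model~\ref{model:glmlipschitz}: Lemma~\ref{lem:sqrtllip} shows that $F^\star$ is $\sqrt L$-Lipschitz, and Corollary~\ref{cor:smoothong} supplies a $(L, 1, \cdot)$-radiusless noisy gradient oracle for $F^\star$. Invoking $\Moreau(\theta, \ong, \delta)$ then returns $\hatt$ with $\norm{\hatt - \Prox(\theta)}_2 \le C_{\text{env}} \sqrt{L\eps}\lam$ with probability at least $1 - \delta$. Setting our estimator to $G(\theta) \defeq \lam^{-1}(\theta - \hatt) + \mu\theta$, we obtain
\[\norm{G(\theta) - \nabla F(\theta)}_2 = \lam^{-1}\norm{\hatt - \Prox(\theta)}_2 \le C_{\text{env}}\sqrt{L\eps},\]
matching Definition~\ref{def:ong} with $\sigma = O(1)$. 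Crucially, the factor $\lam^{-1}$ exactly cancels the $\lam$ in the $\Moreau$ error bound, so the final oracle guarantee is independent of $\lam$.

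For runtime, $\Moreau$ runs for $T = O(\frac{1}{\eps}\log\frac{1}{\eps})$ iterations, each making one call to the inner $F^\star$-oracle with failure budget $\delta/T$; by Corollary~\ref{cor:smoothong}, each such call costs $O(nd\log^3(n)\log^2(n/(\delta\eps)))$ time, yielding the claimed bound after multiplying by $T$. For sample complexity, the one subtlety is a reuse argument: since Assumption~\ref{assume:glmdet} yields a guarantee that is uniform over all $\theta \in \R^d$, a single draw of $O(\log(T/\delta)) = O(\log(1/(\delta\eps)))$ independent sample sets from Proposition~\ref{prop:glmdetok} suffices for \emph{every} iteration of $\Moreau$, amortizing the per-query boosting of Corollary~\ref{cor:smoothong} across all iterations. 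This gives the stated $O(\log(1/(\delta\eps)) \cdot d\log(d/\eps)/\eps)$ total. I do not anticipate any serious obstacles beyond bookkeeping the failure-probability budget, as the result is essentially a plug-and-play composition of $\Moreau$ with Corollary~\ref{cor:smoothong}.
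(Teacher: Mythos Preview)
Your proposal is correct and takes essentially the same approach as the paper, which merely states (without elaboration) that the corollary follows ``by using the gradient characterization in Fact~\ref{fact:moreau} and the noisy gradient oracle implementation of Corollary~\ref{cor:smoothong}.'' Your write-up fills in precisely those details---the $\lam^{-1}$ cancellation, the runtime bookkeeping, and the sample-reuse argument via the uniform-in-$\theta$ guarantee of Assumption~\ref{assume:glmdet}---and matches the paper's intended composition of $\Moreau$ with Corollary~\ref{cor:smoothong}.
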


\subsection{Accelerated optimization of the regularized Moreau envelope}\label{ssec:lipschitzaccel}

We conclude by combining Proposition~\ref{prop:robustaccel}, the smoothness bound from Fact~\ref{fact:moreau}, and the noisy gradient oracle implementation of Corollary~\ref{cor:moreauong} to give this section's main result, Theorem~\ref{thm:accellipschitzopt}.

\begin{theorem}\label{thm:accellipschitzopt}
Under Models~\ref{assume:huber},~\ref{eq:glmdef}, and~\ref{model:glmlipschitz}, supposing $\eps\kappa^2$ is sufficiently small for $\kappa \defeq \max(1, \frac{1}{\lam\mu})$, given $\theta_0 \in \R^d$ and $\norm{\theta_0 - \stenv}_2 \le R_0$, $\RobustAccel$ using the noisy gradient oracle of Corollary~\ref{cor:moreauong} returns $\theta$ with $\norm{\theta - \stenv}_2 = O\Par{\sqrt{\frac{\kappa\eps}{\mu}}}$ with probability at least $1 - \delta$. The algorithm runs in
\[O\Par{\frac{nd \sqrt{\kappa}}{\eps} \log\Par{\frac{R_0\sqrt{\lam^{-1} + \mu}}{ \eps}}\log^3(n)\log^2\Par{\frac{n\log\Par{R_0\sqrt{\lam^{-1} + \mu}}}{\delta\eps}}\log\Par{\frac 1 \eps}} \text{ time,}\]
where $n$ is the dataset size of Proposition~\ref{prop:glmdetok}. The sample complexity of the method is
\[O\Par{\log\Par{\frac {\log\Par{R_0\sqrt{\lam^{-1} + \mu}}} {\delta\eps}} \cdot \Par{\frac {d\log(d/\eps)} \eps}}.\]
\end{theorem}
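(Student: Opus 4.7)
The plan is to assemble Theorem~\ref{thm:accellipschitzopt} as a direct composition of the three ingredients we have already built, namely Fact~\ref{fact:moreau} (smoothness of the envelope), Corollary~\ref{cor:moreauong} (a radiusless noisy gradient oracle for the regularized envelope), and Proposition~\ref{prop:robustaccel} (the accelerated outer loop). The reason this works out cleanly is that we designed $\RobustAccel$ in Section~\ref{sec:agd} against the abstract oracle access model of Definition~\ref{def:ong}, without reference to the structure of the underlying function beyond smoothness and strong convexity; by Corollary~\ref{cor:moreauong} the regularized Moreau envelope $F \defeq \Fsg + \frac{\mu}{2}\norm{\cdot}_2^2$ fits exactly into this abstraction.

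First I would record the smoothness and strong convexity constants. By Fact~\ref{fact:moreau}, $\Fsg$ is $\lam^{-1}$-smooth, so $F$ is $L_F$-smooth with $L_F = \lam^{-1} + \mu$, and $F$ is $\mu$-strongly convex by construction of the regularizer. Hence the effective condition number controlling the accelerated outer loop is $L_F / \mu = 1 + (\lam\mu)^{-1} = \Theta(\kappa)$ under the definition $\kappa \defeq \max(1, (\lam\mu)^{-1})$, so the assumption $\eps \kappa^2 = O(1)$ matches the hypothesis of Proposition~\ref{prop:robustaccel}. The minimizer of $F$ is exactly $\stenv$ by definition \eqref{eq:lipschitzfdef}.

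Second, I would invoke Corollary~\ref{cor:moreauong} to instantiate a $(L_F, O(1), \delta')$-radiusless noisy gradient oracle $\ong$ for $F$, where $\delta'$ is set to $\delta/N$ with $N$ the number of oracle calls made by $\RobustAccel$, per its precondition. Since ``radiusless'' is a strengthening of the Definition~\ref{def:ong} requirement (corresponding to $\sigma = O(1)$ and no $R$-dependent term), it certainly satisfies the noisy gradient oracle contract required by Proposition~\ref{prop:robustaccel}. Feeding $\theta_0$, $R_0$, and this $\ong$ into $\RobustAccel$ then produces, with probability at least $1 - \delta$, an iterate $\theta$ satisfying $\norm{\theta - \stenv}_2 = O(\sqrt{\kappa\eps/\mu})$, which is exactly the stated estimation rate (the $\sigma = O(1)$ factor is absorbed into the big-O).

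Third, I would compile the runtime and sample complexity by multiplying the number of oracle calls from Proposition~\ref{prop:robustaccel}, namely $O(\sqrt{\kappa}\log(R_0\sqrt{L_F}/\sqrt{\eps}))$, by the per-call cost from Corollary~\ref{cor:moreauong}, which is $O(\tfrac{nd}{\eps}\log^3(n)\log^2(\tfrac{n}{\delta'\eps})\log(\tfrac 1 \eps))$, substituting $L_F = \lam^{-1} + \mu$ and $\delta' = \delta/N$; the $O(\sqrt{\kappa}d\log(\cdot))$ additional cost inside $\RobustAccel$ is absorbed. The sample complexity is similarly obtained by multiplying the $N = O(\sqrt\kappa\log(\cdot))$ outer iterations by the per-call sample requirement of Corollary~\ref{cor:moreauong}; as remarked at the end of Section~\ref{sec:agd}, the same batch of samples from Proposition~\ref{prop:glmdetok} (which only depends on $\dist_{Xy}$, not on any specific query point) may be reused across all queries, so the resulting sample complexity is $O(\log(\log(R_0\sqrt{\lam^{-1}+\mu})/(\delta\eps)) \cdot (d\log(d/\eps)/\eps))$ as claimed. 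There is no technically delicate step here: since both the oracle and the outer loop were designed modularly, the only verifications needed are that parameters line up (smoothness/strong convexity constants, the $\eps\kappa^2$ small-ness assumption, and the failure probability union bound), all of which are bookkeeping.
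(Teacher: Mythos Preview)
Your proposal is correct and takes essentially the same approach as the paper, which simply states (in the sentence preceding the theorem) that the result follows by combining Proposition~\ref{prop:robustaccel}, Fact~\ref{fact:moreau}, and Corollary~\ref{cor:moreauong}. You have filled in the bookkeeping of that composition exactly as intended: verifying the smoothness and strong convexity constants of $F = \Fsg + \frac{\mu}{2}\norm{\cdot}_2^2$, matching the condition number to $\kappa = \max(1, (\lam\mu)^{-1})$, plugging the radiusless oracle into the abstract accelerated loop, and multiplying out the runtime and sample complexity (including the sample-reuse observation from Assumption~\ref{assume:glmdet}, which holds uniformly over $\theta$).
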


Combined with Fact~\ref{fact:moreau}, Theorem~\ref{thm:accellipschitzopt} offers a range of tradeoffs by tuning the parameter $\lam$: the smaller $\lam$ is, the more the Moreau envelope resembles the original function, but the statistical and runtime guarantees offered by Theorem~\ref{thm:accellipschitzopt} become correspondingly weaker. 	
	\subsection*{Acknowledgments}
	
	KT is supported by NSF Grant CCF-1955039 and the Alfred P. Sloan Foundation.
	
	\bibliographystyle{alpha}	
	\bibliography{robust-regression}
	\newpage
	\begin{appendix}

\section{Deferred proofs from Section~\ref{sec:prelims}}
\label{app:prelims-deferred}

\subsection{Proof of Proposition~\ref{prop:assumelinregok}}\label{ssec:linregokproof}

In this section, we prove Proposition~\ref{prop:assumelinregok}, restated here for convenience.

\restatelinregok*
Before we prove this lemma, we need the following useful technical lemmata.
The first shows that given a large enough sample of points from a distribution with bounded second moment, there is a large subset of points with bounded second moment.
\begin{lemma}[Lemma A.20 in~\cite{DiakonikolasKK017}]\label{lem:goodsubset}
	Let $X_1, \ldots, X_n$ be independent samples from a distribution $\dist$ with second moment matrix $\msigs$, and let $\eps > 0$ be sufficiently small.
	There exists a universal constant $c > 0$ so that if $n \geq c \tfrac{d \log d}{\eps}$, we have that with probability $0.99$, there exists a subset $S$ of size $(1 - \eps)n$ satisfying
	\[
		\frac{1}{|S|} \sum_{i \in S} X_i X_i^\top \preceq \frac{3}{2} \msigs \; .	
	\]
\end{lemma}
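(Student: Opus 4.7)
The strategy is a truncate-and-concentrate argument: cap the sample norms at the scale suggested by the second-moment bound, use Markov plus a Chernoff estimate to keep most samples, and bound the operator norm of the truncated empirical second moment via the matrix Bernstein inequality. First, by substituting $Y_i \defeq (\msigs)^{-1/2} X_i$, I would reduce to the case $\msigs = \id$, since the desired bound $\frac{1}{|S|}\sum_{i \in S} X_iX_i^\top \preceq \frac 3 2 \msigs$ is equivalent under congruence to $\frac{1}{|S|}\sum_{i \in S} Y_iY_i^\top \preceq \frac 3 2 \id$. Under this normalization $\E[\norm{Y_i}_2^2] = \Tr(\id) = d$. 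Setting the threshold $M^2 \defeq 2d/\eps$, I define the good set
\[
	G \defeq \Brace{ i \in [n] \mid \norm{Y_i}_2^2 \leq M^2 } \; .
\]
Markov's inequality gives $\Pr[i \notin G] \leq d/M^2 = \eps/2$, so a multiplicative Chernoff bound yields $\Pr[|G| < (1-\eps)n] \leq \exp(-\Omega(\eps n))$, which is at most $\tfrac{1}{200}$ whenever $n = \Omega(\eps^{-1})$. This handles the cardinality requirement $|S| = |G| \geq (1-\eps)n$.

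The main step is to show, with probability at least $\tfrac{199}{200}$, the operator norm bound
\[
    \normop{\frac 1 n \sum_{i \in [n]} Y_iY_i^\top \mathbf{1}\Brack{i \in G}} \leq \frac 5 4 \; .
\]
I would prove this via matrix Bernstein applied to the centered matrices $Z_i \defeq Y_iY_i^\top \mathbf{1}[i \in G] - \E[Y_iY_i^\top \mathbf{1}[i \in G]]$, which are i.i.d., symmetric, mean zero, and satisfy $\normop{Z_i} \leq 2M^2$ by the truncation. Their matrix variance satisfies
\[
    \normop{\sum_{i \in [n]} \E[Z_i^2]} \leq n \cdot \normop{\E\Brack{\norm{Y_i}_2^2\, Y_iY_i^\top \mathbf{1}[i \in G]}} \leq nM^2 \normop{\E[Y_iY_i^\top]} = nM^2 \; ,
\]
where the first inequality drops the centering term and the second uses $\norm{Y_i}_2^2 \leq M^2$ on the truncation event. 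Matrix Bernstein with deviation $t = n/4$ then gives a failure probability of $2d\exp(-\Omega(n/M^2)) = 2d\exp(-\Omega(\eps n/d))$, which is at most $\tfrac{1}{200}$ once $n \geq c d\log d/\eps$ for a sufficiently large constant $c$. Adding the bound $\E[Y_iY_i^\top \mathbf{1}[i \in G]] \preceq \id$ back in yields the desired $\tfrac 5 4 \id$ estimate.

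On the intersection of these two events, which holds with probability at least $0.99$ by a union bound, I conclude by writing
\[
    \frac 1 {|G|} \sum_{i \in G} Y_iY_i^\top = \frac n {|G|} \cdot \frac 1 n \sum_{i \in [n]} Y_iY_i^\top \mathbf{1}[i \in G] \preceq \frac 1 {1 - \eps} \cdot \frac 5 4 \id \preceq \frac 3 2 \id
\]
for $\eps$ sufficiently small, and then translating back via $(\msigs)^{1/2}$-conjugation gives the claimed inequality for the $X_i$. The one substantive point is the choice of the truncation threshold $M^2 = \Theta(d/\eps)$: smaller $M$ deletes too many samples and breaks the Chernoff bound on $|G|$, while larger $M$ inflates both the almost-sure bound and the matrix variance proxy $nM^2$ in matrix Bernstein, weakening the sample complexity. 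Balancing these two constraints is precisely what produces the advertised $n = \Theta(d\log d/\eps)$ requirement.
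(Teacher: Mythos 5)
Your argument is correct: the paper itself does not prove this statement (it is imported as Lemma A.20 of the cited work, with only a remark that the covariance-version proof carries over to second moments), and your truncation-at-$\Theta(\sqrt{d/\eps})$ plus Markov/Chernoff plus matrix Bernstein argument is exactly the standard route by which that cited lemma is established, with the parameter balancing you describe giving the $n = \Theta(d\log d/\eps)$ requirement. The only point worth flagging is the whitening step: if $\msigs$ is singular you cannot invert it, but since $X_i$ lies in the range of $\msigs$ almost surely you can work with the pseudo-inverse (equivalently restrict to that subspace), and the rest of your proof goes through unchanged.
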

We note that Lemma A.20 is stated for covariance as opposed to second moment, but the same proof immediately implies the same result for second moment.

We also require the following bound.
\begin{lemma}[Lemma 5.1 in~\cite{CherapanamjeriATJFB20}]\label{lem:covlowerbound}
Let $X_1, \ldots, X_n$ be independent samples from a distribution $\dist$ with second moment $\msigs$, and let $\eps > 0$ be sufficiently small.
Assume $\dist$ is $2$-to-$4$ hypercontractive with parameter $C = O(1)$.
There exists a universal constant $c > 0$ so that if $n \geq \tfrac{c d \log (d / \eps)}{\eps}$, then with probability $1 - \frac 1 {d^2}$, we have that for any $S \subset [n]$ of size $(1 - \eps) n$, 
\[
	\frac{1}{|S|} \sum_{i \in S} X_i X_i^\top \succeq (1 - \sqrt{\eps}) \msigs \; .
\]
\end{lemma}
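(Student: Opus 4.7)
The plan is to reduce to the isotropic case by the linear change of variables $X \mapsto (\msigs)^{-1/2} X$, which sends $\msigs$ to $\id$ and preserves $\ctf$-hypercontractivity (as noted in Section~\ref{ssec:notation}). In the isotropic setting, the statement becomes: for any $S \subset [n]$ with $|S| \geq (1-\eps) n$ and any unit $v \in \R^d$,
\[
\frac{1}{|S|}\sum_{i \in S} \iprod{v, X_i}^2 \geq 1 - \sqrt{\eps}.
\]
I would rewrite this as a bound on the mass lost under the worst-case $\eps n$-deletion: letting $T = [n]\setminus S$ with $|T| \leq \eps n$, it suffices to show that $\sup_{\|v\|=1} \frac{1}{n}\sum_{i\in T} \iprod{v,X_i}^2 \leq \tfrac{1}{2}\sqrt{\eps}$ and that the full empirical second moment $\hat\msig \defeq \frac{1}{n}\sum_i X_i X_i^\top$ satisfies $\|\hat\msig - \id\|_\textup{op} \leq \tfrac{1}{2}\sqrt{\eps}$.

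The second bound follows from a standard matrix Bernstein / truncation argument using $4$th moment control from hypercontractivity, giving $\|\hat\msig - \id\|_\textup{op} = O(\sqrt{d \log d / n})$ with probability $\ge 1 - d^{-2}$; the choice $n = \Omega(d\log(d/\eps)/\eps)$ absorbs this into $O(\sqrt{\eps})$. For the first bound, I apply Cauchy--Schwarz to each fixed direction $v$:
\[
\frac{1}{n}\sum_{i \in T} \iprod{v,X_i}^2 \;\leq\; \sqrt{\tfrac{|T|}{n}} \cdot \sqrt{\tfrac{1}{n}\sum_{i \in T} \iprod{v,X_i}^4} \;\leq\; \sqrt{\eps} \cdot \sqrt{\tfrac{1}{n}\sum_{i\in[n]}\iprod{v,X_i}^4}.
\]
So the task reduces to showing the uniform empirical $4$th moment bound $\sup_{\|v\|=1} \frac{1}{n}\sum_i \iprod{v,X_i}^4 \leq O(\ctf)$ with probability $\ge 1 - d^{-2}$.

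The key obstacle is this uniform $4$th moment concentration, since hypercontractivity only controls $\E\iprod{v,X}^4 \leq \ctf$ and does not directly give the higher moments one would need for Bernstein. I would handle this via a standard truncation plus $\eps$-net argument: let $\tau = \textup{poly}(d/\eps)$ and decompose $\iprod{v,X_i}^4 = \iprod{v,X_i}^4 \mathbb{1}\{\iprod{v,X_i}^2 \leq \tau\} + \iprod{v,X_i}^4 \mathbb{1}\{\iprod{v,X_i}^2 > \tau\}$. The truncated part has bounded range and a variance proxy controlled by $\ctf$, so matrix Bernstein applied to the rank-one $d^2 \times d^2$ lifts $\textup{vec}(X_iX_i^\top)\textup{vec}(X_iX_i^\top)^\top$ (restricted to the symmetric rank-one cone $\{vv^\top : \|v\|=1\}$) gives $e^{-\Omega(d)}$ tail probability per direction, sufficient for a union bound over a $\tfrac{1}{\sqrt d}$-net of $S^{d-1}$. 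The untruncated tail is handled by Markov applied to $\iprod{v,X_i}^4$ together with the $8$th-moment inequality $\E\iprod{v,X_i}^8 \leq \ctf' \E\iprod{v,X_i}^4 \cdot \tau$ only for the net (and extended to the sphere by Lipschitz continuity of $v \mapsto \iprod{v,X_i}^2$ on bounded $X_i$). Alternatively, a cleaner route is to invoke off-the-shelf empirical-process bounds for hypercontractive distributions (e.g., those implicit in \cite{CherapanamjeriATJFB20}), which yield the uniform $4$th moment bound with $n = \tilde O(d/\eps)$ samples.

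Combining the two pieces, for any $|S| \geq (1-\eps)n$ and unit $v$,
\[
\frac{1}{|S|}\sum_{i \in S}\iprod{v,X_i}^2 \;\geq\; \frac{1}{(1-\eps)}\Par{1 - \tfrac{1}{2}\sqrt{\eps} - O(\sqrt{\ctf})\sqrt{\eps}} \;\geq\; 1 - \sqrt{\eps}
\]
for $\eps$ sufficiently small (absorbing the constant $\ctf = O(1)$ into the constant $c$ in the sample bound). Undoing the isotropic reduction by multiplying on both sides by $(\msigs)^{1/2}$ yields the claim $\tfrac{1}{|S|}\sum_{i \in S} X_iX_i^\top \succeq (1-\sqrt{\eps})\msigs$ with probability at least $1 - d^{-2}$, as required.
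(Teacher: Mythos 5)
There is a genuine gap, and it sits exactly where you flagged the ``key obstacle.'' (Note the paper itself does not prove this lemma --- it imports it from \cite{CherapanamjeriATJFB20} --- so your argument has to stand on its own.) The reduction via Cauchy--Schwarz to the uniform empirical fourth-moment bound $\sup_{\norm{v}_2 = 1} \frac 1 n \sum_{i \in [n]} \inprod{v}{X_i}^4 = O(\ctf)$ is a step that fails: that bound is false under $2$-to-$4$ hypercontractivity alone at this sample size and failure probability. Hypercontractivity controls $\E \inprod{v}{X}^4$ for each fixed direction but places essentially no constraint on $\norm{X}_2$: take $X$ to be a nice isotropic vector with probability $1 - \frac 1 n$ and $X = r u$ with $r^2 = d\sqrt{n}$, $u$ uniform on the sphere, with probability $\frac 1 n$. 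This keeps $\ctf = O(1)$ and $\msigs \approx \id$, yet with constant probability some sample has $\norm{X_i}_2^2 \approx d \sqrt n$, so taking $v = X_i / \norm{X_i}_2$ gives $\frac 1 n \inprod{v}{X_i}^4 \approx d^2 n$. The same example kills your intermediate claim $\sup_{|T| \le \eps n}\sup_{\norm{v}_2=1} \frac 1 n \sum_{i \in T} \inprod{v}{X_i}^2 \le \frac 1 2 \sqrt{\eps}$ (that supremum can be of order $\sqrt{d\eps}$) and the two-sided bound $\normop{\hat{\msig} - \id} \le \frac 1 2 \sqrt \eps$ (only the lower side is available at $n = \tO(d/\eps)$). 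The lemma is nevertheless true because a heavy sample inflates the full-sample average in exactly the direction in which it inflates the removed mass; bounding the two pieces separately throws away this cancellation. Your patch does not repair this: the inequality $\E\inprod{v}{X}^8 \le \ctf' \E\inprod{v}{X}^4 \cdot \tau$ is not implied by the assumptions, and invoking bounds ``implicit in \cite{CherapanamjeriATJFB20}'' is circular, since that is the source of the lemma being proved.

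The standard fix is to truncate \emph{before} deleting points rather than after Cauchy--Schwarz: for each unit $v$ (in the $\msigs$ norm, after your isotropic reduction) set $Y_i(v) \defeq \min\Par{\inprod{v}{X_i}^2, \tau}$ with $\tau = \Theta(\sqrt{\ctf/\eps})$. Then (i) the truncation bias satisfies $\E\Brack{\inprod{v}{X}^2 - Y(v)} \le \ctf/\tau = O(\sqrt{\ctf \eps})$ by Cauchy--Schwarz and Markov; (ii) removing any $\eps n$ indices changes $\frac 1 n \sum_i Y_i(v)$ by at most $\eps \tau = O(\sqrt{\ctf\eps})$, which is the step that silently failed in your untruncated version; and (iii) the $Y_i(v)$ are bounded by $\tau$ with variance $O(\ctf)$, so Bernstein plus a net over directions gives uniform concentration at scale $\sqrt{\eps}$ once $n \gtrsim d \log(d/\eps)/\eps$. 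Since $\inprod{v}{X_i}^2 \ge Y_i(v)$, this yields the one-sided conclusion directly; it is the same truncation-plus-union-bound pattern the paper uses in Lemma~\ref{lem:vcbound}. (Separately, even if your fourth-moment step were available, the constant cannot be ``absorbed into $c$'': you would end with $1 - O(\sqrt{\ctf})\sqrt{\eps}$ rather than $1 - \sqrt{\eps}$, and neither shrinking $\eps$ nor enlarging $n$ removes a constant factor multiplying $\sqrt{\eps}$.)
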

Finally, we show our main helper lemma, which is used to prove Assumption~\ref{assume:linregdet}.2 holds.

\begin{lemma}
	\label{lem:vcbound}
	Let $\eps > 0$ be sufficiently small.
	Let $X_1, \ldots, X_n$ be $n$ samples from a $2$-to-$4$ hypercontractive distribution $\dist$ with parameter $C$ and second moment $\msigs$.
	Then, there exist universal constants $c, \cest > 0$ so that if 
	\[
	n \geq 	c \left( \frac{d \log d}{\eps^4} + 	 \frac{ d^2 \log (d / \eps)}{\eps^3} \right),
	\] then with probability $0.99$, for every $u \in \R^d$, there exists an $G \subseteq [n]$ satisfying $|G| \geq (1 - \eps^2) n$,  and
	\[
		\normop{\frac{1}{|G|} \sum_{i \in G} \inprod{X_i}{u}^2 X_i X_i^\top} \leq \cest L \norm{u}^2_{\msigs} \; .
	\] 	
\end{lemma}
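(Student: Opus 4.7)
The plan is to prove this lemma in three stages: first establish the population-level bound, then pass to a truncated empirical version amenable to standard concentration, and finally uniformize over directions using a covering net on the sphere.

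For the population bound, writing the operator norm as a supremum over unit vectors $v$, Cauchy--Schwarz combined with $2$-to-$4$ hypercontractivity gives
$$\E_{X \sim \dist}\Brack{\inprod{X}{u}^2\inprod{X}{v}^2} \le \sqrt{\E\Brack{\inprod{X}{u}^4}\E\Brack{\inprod{X}{v}^4}} \le \ctf \norm{u}_{\msigs}^2 \norm{v}_{\msigs}^2 \le \ctf L \norm{u}_{\msigs}^2$$
for unit $v$, which is exactly the target bound up to constants. So the task reduces to showing that the empirical fourth-moment expression concentrates at this level, uniformly over $u$, after removing at most $\eps^2 n$ points per direction.

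For the concentration step, I would fix a direction $u$ and perform a truncation. Setting $Z_i(u,v) \defeq \inprod{X_i}{u}^2\inprod{X_i}{v}^2$, hypercontractivity bounds $\E[Z_i(u,v)^2]$, and Markov's inequality then shows that with high probability at most $\eps^2 n$ indices have $\inprod{X_i}{u}^4$ exceeding a threshold $T$ calibrated to the target sample size. Removing these defines the candidate set $G = G_u$. On the reduced set, each summand $\inprod{X_i}{u}^2 X_iX_i^\top$ has bounded spectral norm (in terms of $T$ and the operator-norm control on the raw covariates from Lemma~\ref{lem:goodsubset}), its variance against any fixed unit $v$ is controlled by the population fourth moment, and its truncated expectation differs from the population expectation by a quantity bounded via a hypercontractive tail calculation. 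A matrix Bernstein inequality, or equivalently a scalar Bernstein bound applied to quadratic forms $v^\top(\sum_{i \in G_u} \inprod{X_i}{u}^2 X_iX_i^\top) v$, then controls the deviation by a constant factor times $L\norm{u}_{\msigs}^2$, with failure probability at most $\exp(-\Omega(d\log(d/\eps)))$ at the stated sample size.

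Finally I would uniformize over $u$ via a $\gamma$-net $\mathcal{N}$ on the unit sphere of $\R^d$, using homogeneity to reduce the general case to unit $u$. A standard volumetric estimate gives $\Abs{\mathcal{N}} \le (3/\gamma)^d$, and union-bounding the previous step over $\mathcal{N}$ yields the required concentration at every net point. Passage from a net point $u_0$ to a nearby $u$ uses a short perturbation expansion $\inprod{X_i}{u}^2 = (\inprod{X_i}{u_0} + \inprod{X_i}{u-u_0})^2$, whose cross and residual terms are controlled by the very same spectral bounds being proved, yielding a closed-loop estimate. The two terms in the required sample complexity match the two dominant error sources: $d^2\log(d/\eps)/\eps^3$ comes from paying $d\log(1/\gamma)$ in the net for both $u$ and $v$ in the Bernstein tail, while $d\log d/\eps^4$ comes from the preprocessing bound of Lemma~\ref{lem:goodsubset} on the raw second moment.

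The main obstacle will be consistently handling the direction-dependent outlier set $G_u$ in the net argument: $G_u$ shifts as $u$ varies, so one cannot simply fix a single $G$ per net cell. I plan to resolve this by defining $G_u$ through a coarse, direction-stable statistic (e.g.\ $\inprod{X_i}{u}^4 > T$ with $T$ chosen so that the threshold is stable under $\gamma$-perturbations of $u$), ensuring that $G_u$ and $G_{u_0}$ agree on all but a negligible fraction of indices whenever $u$ and $u_0$ lie in the same net cell, so that the discrepancy between the corresponding sums is absorbed into the existing error budget.
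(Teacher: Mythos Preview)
Your high-level architecture matches the paper's: population bound via Cauchy--Schwarz and hypercontractivity, truncation to make summands bounded, matrix concentration for fixed $u$, and a union bound over a net in $u$. But the step you correctly flag as the main obstacle---controlling the direction-dependent outlier set $G_u$ uniformly in $u$---is exactly where your sketch has a real gap and where the paper's proof does something different.

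The paper does \emph{not} argue that $G_u$ is stable under $\gamma$-perturbations of $u$. Instead it observes that the family of truncation sets $H_u = \{x : \inprod{x}{u}^2 \ge T\}$ has VC dimension $O(d)$ (each $H_u$ is the complement of a slab between parallel hyperplanes), and then invokes the VC inequality to get $\sup_u \tfrac{1}{n}|\{i : X_i \in H_u\}| \le O(\eps^2)$ once $n \gtrsim d\log d / \eps^4$. This gives, for \emph{every} $u$ simultaneously (not just net points), a good set $G_u$ of the required size---no perturbation argument needed. A second, direction-independent truncation on $\norm{X_i}_2^2$ (via Markov plus Bernstein, not via Lemma~\ref{lem:goodsubset}) then bounds each summand's spectral norm by $O(Ld/\eps^3)$, which is what drives the $d^2\log(d/\eps)/\eps^3$ term in the matrix Chernoff step over the net.

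Your proposed alternative---choosing $T$ so that ``the threshold is stable under $\gamma$-perturbations''---would require showing that, uniformly over net points $u_0$, few indices have $\inprod{X_i}{u_0}^4$ within an $O(\gamma)$-band around $T$. That is itself a uniform anti-concentration statement of the same flavor as what you are trying to prove, and it is not clear how to establish it without something like the VC argument. Relatedly, your accounting for the sample complexity is off: the $d\log d/\eps^4$ term comes from the VC step at level $\eps^2$, not from Lemma~\ref{lem:goodsubset} (which only needs $d\log d/\eps$).
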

\begin{proof}
	Without loss of generality (by scale invariance), it suffices to prove this for all $u$ with $\norm{u}_{\msigs} = 1$.
	First, by Markov's inequality with $\E_{X \sim \dist}[\norm{X}_2^2] = \Tr[\msigs] \le Ld$, we have that
	\[
		\Pr_{X \sim \dist} \Brack{\norm{X}_2^2 \geq \frac{20 L d}{\eps^2}} \leq \frac{\eps^2}{20} \; .	
	\]
	Hence, by Bernstein's inequality, we have that with probability $0.999$,
	\begin{equation}
		\label{eq:asnormbound}
		\frac{|\{i: \norm{X_i}_2^2 \geq 20 Ld \eps^{-2} \}|}{n} \leq \frac{\eps^2}{10} \; .
	\end{equation}
	Condition on this event holding for the rest of the proof.

	For any vector $u \in \R^d$ with $\norm{u}_{\msigs} = 1$, let $H_u \subset \R^d$ be the set given by
	\[
		H_u = \left\{ x \in \R^d: \inprod{x}{u}^2 \geq \frac{10 \ctf^{1/2}}{\eps} \right\} \; .
	\]
	Note that by Chebyshev's inequality, since $\E_{X \sim \dist}[\inprod{x}{u}^4] \le \ctf \norm{u}_{\msigs}^4 \le \ctf$, $\Pr_{X \sim D} [X \in H_u] \leq \frac{\eps^2}{100}$.
	Furthermore, the collection of sets $\{H_u\}_{\norm{u}_{\msigs} = 1}$ has VC dimension $O(d)$, as each $H_u$ can be expressed as a restricted intersection of parallel halfspaces, and it is well-known that VC dimension is additive under intersection.
	Therefore, by the VC inequality, we know that if $n \geq c \frac{d \log d}{\eps^4}$ for sufficiently large constant $c$, with probability $0.999$, we have that 
	\begin{equation}
		\label{eq:asvcbound}
		\sup_{H_u} \frac{|\{i: X_i \in H_u\}|}{n} \leq \frac{\eps^2}{50} \; .	
	\end{equation}
	Condition on this event holding for the rest of the proof. All expectations throughout the remainder of the proof are taken with respect to $X \sim \dist$ for notational simplicity.
	
	For any fixed $u$, we define the truncated fourth moment (contracted in the direction $u$) by
	\[
		\ma_i = \ma_i (u) = \inprod{X_i}{u}^2 X_i X_i^\top \one \Brack{ \inprod{X_i}{u}^2 \leq \frac{20 \ctf^{1/2}}{\eps} ~\mbox{and}~\norm{X_i}_2^2 \leq \frac{20L d}{\eps^2}}.
	\]
	Note that
	\[
		\normop{\E \Brack{\ma_i}} \leq \normop{ \E \Brack{ \inprod{X_i}{u}^2 X_i X_i^\top } } = \sup_{\norm{v}_2 = 1} \E\Brack{\inprod{X_i}{u}^2 \inprod{X_i}{v}^2}\leq \ctf L \norm{u}_{\msigs}^2 \; ,	
	\]
	by Cauchy-Schwarz and hypercontractivity.
	Moreover, by construction, the spectral norm of $\ma_i$ is bounded almost surely by $\tfrac{400 L \ctf^{1/2} d}{\eps^3}.$
	Hence, by a matrix Chernoff bound, we get that if $n \geq c \tfrac{d^2 \log (d  / \eps)}{\eps^3}$ for a sufficiently large constant $c$, then with probability $0.999$, we have that
	\begin{equation}
		\label{eq:matrixbernstein}
		\normop{\frac{1}{n}\sum_{i = 1}^n \ma_i (u)} \leq 2 \ctf L \norm{u}_{\msigs}^2 \; .
	\end{equation}
	for all $u$ in a $\text{poly}(\frac{\eps}{d})$-net of the unit sphere in the $\msigs$ norm (which has cardinality $(\frac d \eps)^{O(d)}$ by Theorem 1.13 of \cite{RigolletH17}). Because we are union bounding over $\text{poly}(d, \eps^{-1})$ samples, we have with high probability that all $\norm{X_i}_{(\msigs)^{-1}} = \text{poly}(d, \eps^{-1})$. Hence, it is straightforward to show that the bound \eqref{eq:matrixbernstein} over our net implies for all $\norm{u}_{\msigs} = 1$,
	\begin{equation}
		\label{eq:unionbound}
		\normop{\frac{1}{n}\sum_{i = 1}^n \inprod{X_i}{u}^2 X_i X_i^\top \one \Brack{X_i \not\in H_u ~\mbox{and}~\norm{X_i}_2^2 \leq \frac{10 \ctf^{1/2} L d}{\eps}}} \leq 2 \ctf L \; ,
	\end{equation}
	Combining~\eqref{eq:asnormbound},~\eqref{eq:asvcbound}, and~\eqref{eq:unionbound} implies that for every $u$, the set $G = \{i : X_i \in H_u ~\mbox{and}~\norm{X_i}_2^2 \leq \frac{20 L d}{\eps^2} \}$ satisfies the conditions of the lemma.
\end{proof}
We are now ready to prove Proposition~\ref{prop:assumelinregok}.
\begin{proof}[Proof of Proposition~\ref{prop:assumelinregok}]
	Condition 3 of Assumption~\ref{assume:linregdet} follows directly from Markov's inequality, since it is asking about the empirical average over $G$ of $\delta^2 \sim \dist_\delta$; the adversary removing points can only affect this upper bound by a constant factor (due to renormalization).
	
	Next, let $[n] = G \cup B$ be the canonical decomposition of the corrupted set of samples.
	By two applications of Lemma~\ref{lem:goodsubset}, with probability at least $0.99$, there exists a set $G' \subset G$ of size $|G'| \geq (1 - \eps) |G| \geq (1 - 2 \eps) n$ so that 
	\begin{align}
		\frac{1}{|G'|} \sum_{i \in G'} X_i X_i^\top &\preceq \frac{3}{2} \msigs \; , \label{eq:covbound1} \\
		\frac{1}{|G'|} \sum_{i \in G'} \delta_i^2 X_i X_i^\top &\preceq \frac{3}{2} \sigma^2 \msigs \; . \label{eq:covbound2}
	\end{align}
	Condition on the event that such a $G'$ exists for the remainder of the proof, and also condition on the event that Lemma~\ref{lem:vcbound} is satisfied.
	By a union bound, these events happen together with probability at least $0.9$.
	We will show that this $G'$ will satisfy the conditions of the lemma.
	The upper bound in Condition 1 of Assumption~\ref{assume:linregdet} is immediate, and similarly, the lower bound follows from Lemma~\ref{lem:covlowerbound} and a standard convexity argument (since the vertices of the polytope defining saturated weights are subsets of cardinality $(1 - O(\eps))|G|$).
	
	It thus remains to prove Condition 2 of Assumption~\ref{assume:linregdet}.
	To do so, we will first prove~\eqref{eq:linreg2} is satisfied with high probability.
	By Lemma~\ref{lem:vcbound} (adjusting by a factor of $\alpha$ in the definition of $\eps^2$), there exists a set $G'' \subseteq G'$ so that $|G''| \geq (1 - \frac{\eps^2}{\alpha}) |G'|$ so that 
	\[
		\normop{\frac{1}{|G''|} \sum_{i \in G''} \inprod{X_i}{\theta - \st}^2 X_i X_i^\top	} \leq \cest L \norm{\theta- \st}^2_{\msigs} \; .
	\]
	Hence, for this choice of $G''$, we have
	\begin{align*}
		\Cov_{\tw}\Par{\Brace{g_i(\theta)}_{i \in G''}} &= \sum_{i \in G''} \tw_i \Par{\inprod{X_i}{\theta - \st} + \delta_i}^2 X_i X_i^\top \\
		&\preceq 2 \sum_{i \in G''} \tw_i \inprod{X_i}{\theta - \st}^2 X_i X_i^\top + 2 \sum_{i \in G} \tw_i \delta_i^2 X_i X_i^\top \\
		&\preceq \frac{2 (1 + 2 \eps)}{|G''|} \sum_{i \in G''} \inprod{X_i}{\theta - \st}^2 X_i X_i^\top + \frac{2(1 + 2 \eps)}{|G'|} \sum_{i \in G'} \delta_i^2 X_i X_i^\top \\
		&\preceq \frac{2 (1 + 2 \eps)}{|G'|} \sum_{i \in G''} \inprod{X_i}{\theta - \st}^2 X_i X_i^\top + 3\sigma^2 \msigs \\
		&\preceq 2(1 + 2 \eps) \cest L \left( \norm{\theta - \st}^2_{\msigs} + \sigma^2 \right) \id \; .
	\end{align*}
	where the last line follows from~\eqref{eq:covbound2}.
	By suitably adjusting the choice of $\cest$, this proves that~\eqref{eq:linreg2} is satisfied for this choice of $G''$.
	Finally, we claim that~\eqref{eq:linreg2} implies~\eqref{eq:linreg1} via standard techniques from the robust mean estimation literature, e.g.\ in the proof of Lemma 3.2 in~\cite{DongH019}.
\end{proof}

\subsection{Proof of Proposition~\ref{prop:fcf}}\label{ssec:fcfproof}

In this section, we state $\FCF$ and prove Proposition~\ref{prop:fcf}, restated for convenience.

\restatefcf*

\begin{algorithm}[ht!]
	\caption{$\FCF(\mv, w, \delta, R)$}
	\begin{algorithmic}[1]\label{alg:fcf}
		\STATE \textbf{Input:} $\mv \defeq \{v_i\}_{i \in [n]} \in \R^{n \times d}$, saturated weights $w \in \Delta^n$ with respect to bipartition $[n] = G \cup B$ with $|B| = \eps n$ for sufficiently small $\eps$, $\delta \in (0, 1)$, $R \ge \normop{\sum_{i \in G} \frac{1}{|G|} v_i v_i^\top }$
		\STATE \textbf{Output:} With probability $\ge 1 - \delta$, saturated $w'$ with respect to bipartition $G \cup B$, with
		\[\normop{\sum_{i \in [n]} w'_i v_i v_i^\top} \le 5R.\]
		\STATE Remove all $i \in [n]$ with $\norm{v_i}_2^2 \ge nR$, $n \gets $ new dataset size
		\STATE $T \gets O(\log^2 n)$ (for a sufficiently large constant), $t \gets 0$, $w^{(0)} \gets w$
		\WHILE{$t < T$ and $\Power\Par{\sum_{i \in [n]} w^{(t)}_i v_i v_i^\top, \frac{\delta}{2T}} > 2R$}
		\STATE $\mm_t \gets \sum_{i \in [n]} w^{(t)}_i v_i v_i^\top$, $\my_t \gets \mm_t^{\log d}$
		\STATE Sample $\Ndir = O(\log \frac n \delta)$ (for a sufficiently large constant) vectors $\{u_j\}_{j \in [\Ndir]} \in \R^d$ each with independent entries $\pm 1$. Let $\tu_j \gets \my_t u_j$ for all $j \in [\Ndir]$.
		\FOR{$j \in [\Ndir]$}
		\STATE $\tau_i \gets \inprod{v_i}{\tu_j}^2$ for all $i \in [n]$
		\STATE $\tau_{\max} \gets \max_{i \in [n] \mid w_i \neq 0} \tau_i$
		\WHILE{$\sum_{i \in [n]} w^{(t)}_i \tau_i \ge 2R\norm{\tu_j}_2^2$}
		\STATE $w_i^{(t)} \gets \Par{1 - \frac{\tau_i}{\tau_{\max}}} w_i^{(t)}$ for all $i \in [n]$
		\ENDWHILE
		\ENDFOR
		\STATE $w^{(t + 1)} \gets w^{(t)}$, $t \gets t + 1$
		\ENDWHILE
		\RETURN{$w^{(t)}$}
	\end{algorithmic}
\end{algorithm}

Before proving Proposition~\ref{prop:fcf}, we require three helper facts.

\begin{fact}[Theorem 1, \cite{MuscoM15}]\label{fact:powermethod}
For any $\delta \in (0, 1]$ and $\mm \in \Sym_{\ge 0}^d$, there is an algorithm, $\Power(\mm, \delta)$, which returns with probability at least $1 - \delta$ a value $V$ such that $\lmax(\mm) \ge V \ge 0.9 \lmax(\mm)$. The algorithm costs $O(\log \frac d \delta)$ matrix-vector products through $\mm$ plus $O(d \log \frac d \delta)$ additional runtime.
\end{fact}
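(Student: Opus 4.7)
The plan is to establish the stated guarantee via the standard randomized power method with a Rayleigh-quotient readout. Concretely, I would draw an initial vector $v_0 \in \R^d$ with independent standard Gaussian entries, iteratively set $v_{k+1} \gets \mm v_k$ for $K = O(\log \tfrac{d}{\delta})$ steps (choosing the constant in a moment), and return the Rayleigh quotient $V \defeq \tfrac{v_K^\top \mm v_K}{v_K^\top v_K}$. The upper bound $V \le \lmax(\mm)$ is immediate from the variational characterization of $\lmax$ on $\Sym^d_{\ge 0}$, so the whole task reduces to showing $V \ge 0.9\, \lmax(\mm)$ with probability $1 - \delta$. The runtime accounting is straightforward: each iteration costs one matrix-vector product through $\mm$ plus $O(d)$ arithmetic to maintain $v_k$, and the final Rayleigh quotient uses one more matrix-vector product and $O(d)$ dot products, matching the stated bound.

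For the correctness, I would diagonalize $\mm = \sum_i \lambda_i u_i u_i^\top$ with $\lambda_1 \ge \lambda_2 \ge \cdots \ge 0$, and expand $v_0 = \sum_i c_i u_i$; rotational invariance of the Gaussian gives that the $c_i$ are i.i.d.\ standard normals. Then $v_K = \sum_i c_i \lambda_i^K u_i$, and the Rayleigh quotient satisfies
\[
V = \frac{\sum_i c_i^2 \lambda_i^{2K+1}}{\sum_i c_i^2 \lambda_i^{2K}}.
\]
Splitting the index set into $S \defeq \{i : \lambda_i \ge 0.95 \lambda_1\}$ and its complement, the ``good'' sum in the numerator is at least $0.95\,\lambda_1 \cdot c_1^2 \lambda_1^{2K}$, while the ``bad'' tail contribution to the denominator is at most $(0.95)^{2K}\lambda_1^{2K}\,\|c\|_2^2$. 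Standard Gaussian anticoncentration yields $c_1^2 \ge \Omega(\delta^2)$ with probability $1 - \tfrac{\delta}{2}$, and Gaussian concentration gives $\|c\|_2^2 \le O(d \log \tfrac{1}{\delta})$ with probability $1 - \tfrac{\delta}{2}$. Choosing $K$ so that $(0.95)^{2K} \cdot \tfrac{d \log(1/\delta)}{\delta^2}$ is a small constant (which requires only $K = O(\log \tfrac{d}{\delta})$) makes the bad tail a small relative fraction of the good contribution, so the Rayleigh quotient exceeds $0.95 \cdot (1 - o(1)) \lambda_1 \ge 0.9\,\lambda_1$, as desired.

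The main obstacle is the initial-overlap step: ensuring $c_1^2$ is not catastrophically small without incurring $\poly(d)$ blowup in the iteration count. The cleanest handling is to note that $c_1 = \langle v_0, u_1\rangle \sim \Nor(0, 1)$ and use $\Pr[|c_1| \le t] = O(t)$, which lets us afford only a $\log(1/\delta)$ blowup in $K$ rather than $\log d$. A second minor issue is degeneracy: if several eigenvalues are clustered near $\lambda_1$, any one of them suffices to give a near-optimal Rayleigh quotient, so the argument above via the set $S$ needs no refinement (it benefits from, rather than suffers from, eigenvalue multiplicity). I would not need to invoke the more delicate block-Krylov machinery of Musco--Musco, since that improvement is only needed to shave the gap dependence from $\tfrac{1}{\gamma}$ to $\tfrac{1}{\sqrt{\gamma}}$ when targeting $(1-\gamma)$-approximation, and here $\gamma = 0.1$ is a fixed constant.
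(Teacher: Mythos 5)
Your proposal is correct, but it does not follow the paper: the paper supplies no proof of this fact at all, importing it as a black box from Musco--Musco (their Theorem~1, which analyzes block Krylov / simultaneous iteration and is aimed at gap-independent $(1-\gamma)$-approximations with $\tilde O(1/\sqrt{\gamma})$-type iteration counts). Your self-contained argument via vanilla power iteration with a Gaussian start and a Rayleigh-quotient readout is a perfectly adequate substitute here, precisely for the reason you identify: the Fact only demands a fixed constant accuracy ($0.9\lmax(\mm)$), so the classical Kuczy\'nski--Wo\'zniakowski-style analysis suffices and the block-Krylov machinery buys nothing. Your key steps check out: $V \le \lmax(\mm)$ by the variational characterization; writing $V = \frac{\sum_i c_i^2\lambda_i^{2K+1}}{\sum_i c_i^2\lambda_i^{2K}}$ and splitting at the threshold $0.95\lambda_1$ gives $V \ge 0.95\lambda_1/(1 + B/A)$ with $B/A \le (0.95)^{2K}\norm{c}_2^2/c_1^2$, and the bounds $c_1^2 = \Omega(\delta^2)$ (from $\Pr[|c_1|\le t] \le t$) and $\norm{c}_2^2 = O(d\log\frac1\delta)$ make $K = O(\log\frac d \delta)$ enough; the operation count ($K$ matrix-vector products plus $O(d)$ per iteration) matches the stated cost. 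Two cosmetic points you should fold in when writing this up: normalize $v_k$ each iteration (scale-invariance of the Rayleigh quotient means the analysis is unchanged, but this avoids overflow from $\lambda_1^K$), and dispose of the degenerate case $\mm v_0 = \mzero$ (return $V = 0$, which is then exact since the argument shows this forces the relevant spectrum to vanish on the support of $v_0$; in the trivial case $\mm = \mzero$ any $V=0$ output is correct). What your route buys is a fully elementary, citation-free proof at the cost of a slightly worse constant in the iteration count than what sharper analyses give; what the paper's route buys is brevity and a single reference that also covers stronger guarantees it does not actually need.
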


\begin{fact}[Lemma 7, \cite{JambulapatiLT20}]\label{fact:elthelper}
Let $\ma, \mb \in \Sym_{\ge 0}^d$ with $\ma \succeq \mb$, and $p \in \N$. Then $\Tr(\ma^{p - 1} \mb) \ge \Tr(\mb^p)$. 
\end{fact}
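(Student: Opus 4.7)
The plan is to reduce the desired matrix trace inequality to a pointwise scalar Jensen inequality by diagonalizing $\mb$. First, I would write $\mb = \sum_j \mu_j v_j v_j^\top$ in its spectral decomposition, so that $\Tr(\ma^{p-1} \mb) = \sum_j \mu_j\, v_j^\top \ma^{p-1} v_j$ while $\Tr(\mb^p) = \sum_j \mu_j^p$. Since each $\mu_j \ge 0$, it then suffices to show, for each $j$, the single pointwise inequality $v_j^\top \ma^{p-1} v_j \ge \mu_j^{p-1}$, after which a nonnegative-weight sum over $j$ closes the proof.

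To establish this pointwise bound I would next expand each unit eigenvector $v_j$ in the eigenbasis of $\ma$: writing $\ma = \sum_i \lambda_i u_i u_i^\top$ and $v_j = \sum_i c_{ij} u_i$ with $\sum_i c_{ij}^2 = 1$, we have $v_j^\top \ma^{p-1} v_j = \sum_i c_{ij}^2 \lambda_i^{p-1}$. The assumption $\ma \succeq \mb$ evaluated on $v_j$ yields $\sum_i c_{ij}^2 \lambda_i = v_j^\top \ma v_j \ge v_j^\top \mb v_j = \mu_j$. Viewing $\{c_{ij}^2\}_i$ as a probability distribution over the eigenvalues of $\ma$, Jensen's inequality applied to the convex function $x \mapsto x^{p-1}$ (convex on $[0,\infty)$ for integer $p \ge 1$) gives $\sum_i c_{ij}^2 \lambda_i^{p-1} \ge \bigl(\sum_i c_{ij}^2 \lambda_i\bigr)^{p-1}$, and monotonicity of $x^{p-1}$ on $[0,\infty)$ together with the preceding scalar bound closes the chain.

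The main conceptual obstacle is that one cannot simply argue via an operator inequality $\ma^{p-1} \succeq \mb^{p-1}$: the map $x \mapsto x^{p-1}$ fails to be operator monotone once $p-1 > 1$, and naive telescoping identities such as $\ma^{p-1} - \mb^{p-1} = \sum_k \ma^{p-2-k}(\ma - \mb)\mb^k$ reduce the problem to traces of products of three noncommuting PSD matrices, which need not be nonnegative in general. The key insight in the plan above is that the trace pairing against $\mb$ effectively projects the problem onto $\mb$'s eigenvectors, where only the weaker scalar statement $v^\top \ma^{p-1} v \ge (v^\top \mb v)^{p-1}$ is required; this in turn follows from combining scalar convexity and monotonicity of $x^{p-1}$ with the single quadratic consequence $v^\top \ma v \ge v^\top \mb v$ of $\ma \succeq \mb$.
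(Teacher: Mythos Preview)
Your proof is correct. The paper does not actually prove this statement: it is stated as a \emph{Fact} and cited from \cite{JambulapatiLT20} without any argument, so there is no in-paper proof to compare against. Your approach---diagonalizing $\mb$, reducing to the pointwise bound $v_j^\top \ma^{p-1} v_j \ge \mu_j^{p-1}$, and then establishing that via Jensen's inequality for the convex function $x \mapsto x^{p-1}$ applied to the probability weights $\{c_{ij}^2\}_i$ together with monotonicity---is a clean self-contained argument. Your commentary on why the naive operator-monotonicity route fails (since $x \mapsto x^{p-1}$ is not operator monotone for $p-1 > 1$) is also apt and explains why the eigenvector-wise reduction is the right move.
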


\begin{fact}\label{fact:bothalphacases}
For any $\alpha \ge 0$ and $\ma \in \Sym_{\ge 0}^d$,
\[\alpha \Tr(\ma^{2\log d}) \le \Tr(\ma^{2 \log d + 1}) + d\alpha^{2\log d + 1}.\]
\end{fact}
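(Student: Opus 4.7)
The plan is to prove Fact~\ref{fact:bothalphacases} by diagonalizing $\ma$ and reducing to a scalar inequality on each eigenvalue, split according to whether that eigenvalue dominates $\alpha$ or not.

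Since $\ma \in \Sym_{\ge 0}^d$, we may write $\ma = \mq \diag{\lam_1, \ldots, \lam_d} \mq^\top$ with $\lam_i \ge 0$ and $\mq$ orthogonal. Then $\Tr(\ma^k) = \sum_{i \in [d]} \lam_i^k$ for every integer $k \ge 0$, so the claim reduces to showing
\[
\alpha \sum_{i \in [d]} \lam_i^{2\log d} \;\le\; \sum_{i \in [d]} \lam_i^{2\log d + 1} + d\alpha^{2\log d + 1}.
\]
I would then partition $[d]$ into $S \defeq \{ i : \lam_i \ge \alpha\}$ and its complement $S^c$. For $i \in S$, the bound $\alpha \le \lam_i$ gives $\alpha \lam_i^{2\log d} \le \lam_i^{2\log d + 1}$, so these terms are absorbed by the first sum on the right. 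For $i \in S^c$, the bound $\lam_i < \alpha$ gives $\alpha \lam_i^{2\log d} \le \alpha^{2\log d + 1}$, and since $|S^c| \le d$, summing these contributes at most $d\alpha^{2\log d + 1}$, which matches the second term on the right.

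There is no real obstacle here: the statement is an elementary scalar inequality applied eigenvalue-by-eigenvalue, and the only slight subtlety is that the split uses the crude bound $|S^c| \le d$ rather than any cancellation. The same argument would work for any nonnegative exponent in place of $2\log d$, so the particular value of the exponent plays no role in the proof itself and enters only through the surrounding application.
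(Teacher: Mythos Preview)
Your proposal is correct and is essentially identical to the paper's own proof: the paper also argues eigenvalue-by-eigenvalue, splitting on whether $\lam \ge \alpha$ (so $\lam^{2\log d + 1} \ge \alpha \lam^{2\log d}$) or $\lam < \alpha$ (so $\alpha^{2\log d + 1} \ge \alpha \lam^{2\log d}$), and sums over the $d$ eigenvalues.
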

\begin{proof}
Every eigenvalue $\lam$ of $\ma$ is either at least $\alpha$ (and hence $\lam^{2 \log d + 1} \ge \alpha \lam^{2 \log d}$) or not (and hence $\alpha^{2 \log d + 1} \ge \alpha \lam^{2 \log d}$). Both of these cases are accounted for by the right hand side.
\end{proof}

\begin{proof}[Proof of Proposition~\ref{prop:fcf}]
We discuss correctness, runtime, and the failure probability separately.

\textit{Correctness.} First, Line 3 is correct because these indices cannot belong to $G$ as they would certify a violation to the operator norm bound in the direction of $v$, so this preserves saturation. Next, it is clear by Fact~\ref{fact:powermethod} that if the algorithm ever ends because $\Power$ returns too small a number, the output is correct, so it suffices to handle the other case. Define the potential function $\Phi_t \defeq \Tr(\my_t^2)$. Our main goal is to show that in every iteration the algorithm runs, $\Phi_t$ decreases substantially. To this end, we claim that after all runs of Lines 8-14 of $\FCF$ have finished, we have in all randomly sampled directions $j \in [\Ndir]$ the guarantee
\begin{equation}\label{eq:onedirection}
\inprod{\tu_j \tu_j^\top}{\sum_{i \in [n]} w_i^{(t)} v_i v_i^\top} \le 2R \norm{\tu_j}_2^2.
\end{equation}
This is immediate from the termination condition on Line 11 for each $j \in [\Ndir]$, the fact that weights are monotone nonincreasing throughout the whole algorithm, and that the left hand side of \eqref{eq:onedirection} is monotone nonincreasing as a function of the weights. Next, by the Johnson-Lindenstrauss lemma of \cite{Achlioptas03}, for a sufficiently large $\Ndir$ with probability at least $1 - \frac{\delta}{4 T}$,
\[\frac{1}{\Ndir} \sum_{j \in [\Ndir]} \inprod{v_i}{\tu_j}^2 = \frac{1}{\Ndir} \sum_{j \in [\Ndir]} v_i^\top \my_t \tu_j \tu_j^\top \my_t v_i \in [0.95, 1.05] \inprod{v_iv_i^\top}{\my_t^2} \text{ for all } i \in [n].\]
Condition on this event for all runs of Lines 8-14 throughout the algorithm for the remainder of the proof. Combining this guarantee with \eqref{eq:onedirection}, we have that after Lines 8-14 terminate,
\begin{align*}
\inprod{\my_t^2}{\sum_{i \in [n]} w^{(t)}_i v_i v_i^\top} &= \sum_{i \in [n]} w_i^{(t)} \inprod{v_iv_i^\top}{\my_t^2} \\
&\le \frac{1.1}{\Ndir} \sum_{j \in [\Ndir]} \sum_{i \in [n]} w_i^{(t)} \inprod{v_i}{\tu_j^2} \\
&= \frac{1.1}{\Ndir} \sum_{j \in [\Ndir]} \inprod{\tu_j \tu_j^\top }{\sum_{i \in [n]} w_i^{(t)} v_i v_i^\top} \le \frac{2.2 R}{\Ndir} \sum_{j \in [\Ndir]} \norm{\tu_j}_2^2.
\end{align*}
Next, by the Johnson-Lindenstrauss lemma of \cite{Achlioptas03}, since all $\{u_j\}_{j \in [\Ndir]}$ were sampled independently of $\mm_t$, we have with probability at least $1 - \frac{\delta}{4T}$ that
\[\frac{1}{\Ndir} \sum_{j \in [\Ndir]} \norm{\tu_j}_2^2 = \frac{1}{\Ndir}\sum_{j \in [\Ndir]} u_j^\top \my_t^2 u_j \le 1.1 \Tr(\my_t^2).\]
Conditioning on this event in every iteration, at the start of the next iteration, we will have
\begin{equation}\label{eq:keypotdecrease}\inprod{\my_t^2}{\mm_{t + 1}} \le 2.5 R\Tr(\my_t^2).\end{equation}
We now show how \eqref{eq:keypotdecrease} implies a rapid potential decrease:
\begin{align*}
\Phi_{t + 1} &= \Tr\Par{\mm_{t + 1}^{2 \log d}} \le \frac{1}{2.8 R} \Tr\Par{\mm_{t + 1}^{2 \log d + 1}} + d(2.8 R)^{2 \log d} \\
&\le \frac{1}{2.8 R} \Tr\Par{\my_t^2 \mm_{t + 1}} + d(2.8 R)^{2 \log d} \le 0.9 \Phi_t + d(2.8 R)^{2 \log d}.
\end{align*}
In the first inequality, we used Lemma~\ref{fact:bothalphacases} with $\alpha = 2.8R$; in the second, we used Fact~\ref{fact:elthelper} with $\ma = \mm_t$, $\mb = \mm_{t + 1}$, and $p = 2\log d + 1$. The last inequality applied \eqref{eq:keypotdecrease}. The above display implies that until $\Phi_t \le 20d(2.8 R)^{2 \log d}$, the potential is decreasing by a constant factor every iteration, and $\Phi_0 \le d\lmax(\mm_0)^{2 \log d} \le d(nR)^{2 \log d}$, so within $O(\log ^2 n)$ iterations we will have
\[\Phi_t = \Tr(\mm_t^{2 \log d}) \le 20d(2.8R)^{2 \log d}.\] 
At this point, it is clear the operator norm of $\mm_t$ achieves the desired bound of $5R$. It remains to show that all weight removals in Lines 11-13 were safe throughout the algorithm. Here we use Lemma~\ref{lem:filterkey}: it suffices to show that throughout the algorithm,
\begin{equation}\label{eq:goodscoressmall}\sum_{i \in G} w_i^{(t)} \tau_i \le R\norm{\tu_j}_2^2,\end{equation}
because then whenever Line 11 fails, the scores are safe with respect to the weights and Lemma~\ref{lem:filterkey} applies. However, \eqref{eq:goodscoressmall} follows from the assumption on $\normop{\sum_{i \in G} \frac{1}{|G|} v_i v_i^\top}$, yielding
\[\sum_{i \in G} w_i^{(t)} \tau_i \le \sum_{i \in G} \frac{1}{|G|} \tau_i = \inprod{\tu_j \tu_j^\top}{\sum_{i \in G} \frac{1}{|G|} v_i v_i^\top} \le R\norm{\tu_j}_2^2.\]

\textit{Runtime.} The cost of all lines other than Lines 6-7 and the repeated loops of Lines 11-13 clearly fall within the budget. To implement Lines 6-7, we never need to form the matrices $\mm_t$ or $\my_t$, and instead form all $\{\tu_j\}_{j \in [\Ndir]}$ in time
\[O\Par{nd \log d \log \frac n \delta}\]
implicitly through matrix-vector multiplications with $\mm_t$, each of which take time $O(nd)$. To implement Lines 11-13, let $\bw$ denote the value of $w^{(t)}$ right after an execution of Line 10. We wish to determine the smallest value $K$ such that
\[\sum_{i \in [n]} \Par{1 - \frac{\tau_i}{\tau_{\max}}}^K \bw_i \tau_i \le 2R \norm{\tu_j}_2^2.\]
Checking if the above display holds for a particular guess of $K$ clearly takes $O(n)$ time, and we can upper bound $K$ by the following inequality:
\begin{align*}
\sum_{i \in [n]} \Par{1 - \frac{\tau_i}{\tau_{\max}}}^K \bw_i \tau_i \le \sum_{i \in [n]} \exp\Par{-\frac{K\tau_i}{\tau_{\max}}}\bw_i \tau_i \le \frac{1}{eK} \sum_{i \in [n]} \bw_i \tau_{\max} \le \frac{\tau_{\max}}{K}.
\end{align*}
Here the second inequality used $x \exp(-Cx) \le \frac{1}{eC}$ for all nonnegative $x$, $C$, where we chose $C = \frac{K}{\tau_{\max}}$ and $x = \tau_i$. Now since $\tau_{\max} \le \norm{\tu_j}_2^2 \norm{v_i}_2^2 \le nR \norm{\tu_j}_2^2$ by Cauchy-Schwarz, we have that $K = O(n)$ as desired. At this point, a binary search on $K$ suffices, so all loops take time $O(n\log n)$. 

\textit{Failure probability.} The only randomness used in the algorithm appears in the guarantees of $\Power$ and the guarantees of the Johnson-Lindenstrauss projections. Taking a union bound over $T$ iterations shows these all succeed with probability at least $1 - \delta$.
\end{proof}

	\end{appendix}

\end{document}